\newcommand{\restateref}[1]{\IfAppendix{\hyperref[#1]{$\star$}}{\hyperref[#1*]{$\star$}}}
\newcommand{\OLPlong}{\textsc{Ordered Level Planarity}\xspace}
\newcommand{\OLP}{\textsc{OLP}\xspace}
\newcommand{\PLPlong}{\textsc{Partial Level Planarity}\xspace}
\newcommand{\PLP}{\textsc{PLP}\xspace}
\newcommand{\CLPlong}{\textsc{Constrained Level Planarity}\xspace}
\newcommand{\CLP}{\textsc{CLP}\xspace}
\title{Constrained Level Planarity is FPT with Respect to the Vertex Cover Number} 
\author{Boris Klemz}{Universität Würzburg, Germany }{firstname +++ dot +++ lastname +++ at +++ uni +++ minus +++ wuerzburg +++ dot +++ de}{https://orcid.org/0000-0002-4532-3765}{
}
\author{{Marie Diana} Sieper}{Universität Würzburg, Germany}{marie.sieper@uni-wuerzburg.de}{https://orcid.org/0009-0003-7491-2811}{}
\authorrunning{B.\ Klemz and M.\ D.\ Sieper} 
\keywords{Parameterized Complexity, Graph Drawing, Planar Poset Diagram, Level Planarity, Constrained Level Planarity, Vertex Cover, FPT, Computational Geometry} 
\begin{document}

\maketitle

\begin{abstract}
The problem \textsc{Level Planarity} asks for a crossing-free
drawing of a graph in the plane such that vertices are placed at prescribed y-coordinates (called \emph{levels}) and such that every edge is realized as a y-monotone curve.
In the variant \textsc{Constrained Level Planarity},
each level $y$ is equipped with a partial order $\prec_y$ on its vertices and in the desired drawing the left-to-right order of vertices on level~$y$ has to be a linear extension of $\prec_y$.
\textsc{Constrained Level Planarity} is known to be a remarkably difficult problem:
previous results by Klemz and Rote [ACM Trans.\ Alg.'19] and by Br\"uckner and Rutter [SODA'17] imply that it remains NP-hard even when restricted 
to graphs whose tree-depth and feedback vertex set number are bounded by a constant and even when the instances are additionally required to 
be either \emph{proper}, meaning that each edge spans two consecutive levels, or \emph{ordered}, meaning that all given partial orders are total orders.
In particular, these results rule out the existence of FPT-time (even XP-time) algorithms with respect to these and related graph parameters (unless P=NP).
However, the parameterized complexity of \textsc{Constrained Level Planarity} with respect to the vertex cover number of the input graph remained open.

In this paper, we show that \textsc{Constrained Level Planarity} can be solved in FPT-time when parameterized by the vertex cover number.
In view of the previous intractability statements, our result is best-possible in several regards:
a speed-up to polynomial time or a generalization to the aforementioned smaller graph parameters is not possible, even if restricting to proper or ordered instances.
\end{abstract}

\section{Introduction}
\label{sec:intro}

A large body of literature related to graph drawing is dedicated to so-called upward planar drawings, which provide a natural way of  visualizing a partial order on a set of items.
An \emph{upward planar drawing} of a directed graph is a crossing-free drawing in the plane where every edge~$e=(u,v)$ is realized as a y-monotone curve 
that goes upwards from~$u$ to~$v$, i.e., the y-coordinate strictly increases when traversing~$e$ from~$u$ towards~$v$.
The most classical
computational
problem
in this context is \textsc{Upward Planarity}:
given a directed graph, decide whether it admits an upward planar drawing.
The standard version of this problem is NP-hard~\cite{DBLP:journals/siamcomp/GargT01},
but, if the y-coordinate of each vertex is prescribed, it can be solved in polynomial time~\cite{DBLP:journals/tsmc/BattistaN88,DBLP:conf/gd/HeathP95,DBLP:conf/gd/JungerLM98},
which suggests that a large part of the challenge 
of \textsc{Upward Planarity}
comes from choosing an appropriate y-coordinate for each vertex.
However, when both the y-coordinate and the x-coordinate of each vertex are prescribed, the problem is yet again NP-hard~\cite{DBLP:journals/talg/KlemzR19},
indicating that another part of the challenge comes from drawing the edges in a y-monotone non-crossing fashion while respecting the given or chosen coordinates of their endpoints.
The paper at hand is concerned with the parameterized complexity of a generalization of the latter of these two variants of \textsc{Upward Planarity}, which is known as \textsc{Constrained Level Planarity}.
It is expressed in terms of so-called level graphs, which are defined next;
we adopt the notation and terminology used in~\cite{DBLP:journals/talg/KlemzR19}.

\subparagraph{Level Planarity.} A \emph{level graph}~$\mathcal G=(G,\gamma)$ is a directed graph~$G=(V,E)$ together with a \emph{level assignment}, 
which is a function\footnote{\label{fnote:1}
Traditionally, in the literature, the level assignment~$\gamma$ is defined as a surjective function that maps to an integer interval $\lbrace 1,2,\dots ,h\rbrace$;
it merely acts as a convenient way to encode a total preorder on $V$.
It is well known that the traditional and our (more general) definition are polynomial-time equivalent: 
algorithms designed assuming the classical definition can also be applied in the more general context: one simple has to first sort the vertices by y-coordinates and then apply the traditional algorithm using the sorting-ranks as y-coordinates.
We are using the given general definition as it eases the description of our algorithms; though, specific polynomial runtimes obtained in previous work that are stated in our introduction assume the classical definition.
} $\gamma\colon V\rightarrow \mathbb R$ where~$\gamma (u)<\gamma (v)$ for every edge~$(u,v)\in E$.
For every $i\in \mathbb R$ where $V_i=\lbrace v\in V\mid \gamma (v)=i\rbrace$ is non-empty, the set $V_i$ is called a (the~$i$-th) \emph{level} of~$\mathcal G$.
The \emph{width} of level $V_i$ is~$|V_i|$.
The \emph{level-width} of~$\mathcal G$ is the maximum width of any level in~$\mathcal G$ and the \emph{height} of~$\mathcal G$ is the number of (non-empty) levels.
A \emph{level planar drawing} of~$\mathcal G$ is an upward planar drawing of~$G$ where the y-coordinate of each vertex~$v$ is~$\gamma (v)$. 
We use~$L_i$ to denote the horizontal line with y-coordinate~$i$.
The level graph~$\mathcal G$ is called \emph{proper} if every edge spans two consecutive levels, that is, for every edge~$(u,v)\in E$ there is no level $V_j$ with $\gamma(u)<j<\gamma(v)$.
The problem \textsc{Level Planarity} asks whether a given level graph admits a level planar drawing.
In a series of papers~\cite{DBLP:journals/tsmc/BattistaN88,DBLP:conf/gd/HeathP95,DBLP:conf/gd/JungerLM97,DBLP:conf/gd/JungerLM98},
it was shown that \textsc{Level Planarity} can be solved in linear\cref{fnote:1} time;
we refer to~\cite{fulek2013hanani} for a more detailed discussion of the history of the corresponding algorithm and of alternative approaches to solve \textsc{Level Planarity}.

\subparagraph{Constrained Level Planarity.}
In 2017, Br\"uckner and Rutter~\cite{DBLP:conf/soda/BrucknerR17} and Klemz and Rote~\cite{DBLP:journals/talg/KlemzR19} independently 
introduced and studied two closely related variants of \textsc{Level Planarity},
which are defined in the following.
A \emph{constrained level graph} $\mathcal G=(G,\gamma,(\prec_i)_{i})$ is a triple 
corresponding to a level graph $(G, \gamma)$  equipped with a family $(\prec_i)_{i}$ containing, for each level~$V_i$, a partial order~$\prec_i$ on the 
vertices~$V_i$.
A \emph{constrained} level planar drawing of $\mathcal G$ is a level planar drawing of $(G, \gamma)$ where, for each level $V_i$, the left-to-right 
order of the vertices~$V_i$ corresponds to a linear extension of~$\prec_i$.
For a pair of vertices $u,v\in V_i$ with $u\prec_i v$, we refer to $u\prec_i v$ as a \emph{constraint} on $u$ and~$v$.
The problem \CLPlong (\CLP) asks whether a given constrained level graph admits a constrained level planar drawing.
\OLPlong (\OLP) corresponds to the special case of \CLP where the given partial orders are total orders, 
which is polynomial time equivalent to prescribing the x-coordinate (in addition to the y-coordinate) of each vertex.

Klemz and Rote~\cite{DBLP:journals/talg/KlemzR19} established a complexity dichotomy for \OLP with respect to both the maximum degree and the level-width.
In particular, they showed that \OLP is NP-hard even when restricted to the case where $\mathcal G$ has a level-width 
of~$2$ and the underlying undirected graph of~$G$ is a disjoint union of paths, i.e., a graph of maximum degree~$2$, path-width (and tree-width) 
$1$, and feedback vertex/edge set number $0$.
In fact, with a simple modification\footnote{
In the variable gadget of every variable~$u_j$, one can remove the subdivision vertices of the tunnels with index larger than~$j$.
This modification does not influence the realizability of the instance since the left-to-right order of all tunnels is already fixed due to the subdivision vertices on level~$\ell_0$.
} to their construction, the underlying undirected graph produced by the reduction becomes a disjoint union of paths with constant length, implying that even the tree-depth is bounded.
(The definitions of all these classical graph parameters can be found, e.g., in~\cite{DBLP:books/sp/CyganFKLMPPS15}.)
It follows that \CLP is NP-hard in the same scenario and when, additionally, each of the prescribed partial orders $\prec_i$ is a total order.
\OLP is (trivially) solvable in linear\cref{fnote:1} time when restricted to proper instances~\cite{DBLP:journals/talg/KlemzR19}.
In contrast, an instance of \CLP can always be turned into an equivalent proper instance by subdividing each 
edge on each level it passes through without introducing any constraints on the resulting subdivision vertices~\cite{DBLP:conf/soda/BrucknerR17}.
Hence, \CLP is NP-hard even in the proper case.
Independently, Br\"uckner and Rutter~\cite{DBLP:conf/soda/BrucknerR17} also presented a proof for the NP-hardness of \CLP, which relies on a very different strategy.
It is not obvious whether the graphs produced by their construction have bounded tree-width, 
however, it is not difficult to see\footnote{
In the strongly NP-hard \textsc{3-Partition} problem~\cite{Garey:1990}, one has to partition $3n$ positive integers 
$B/4<a_1,a_2,\dots,a_{3n}<B/2$ of total sum~$nB$ into $n$ triples (or \emph{buckets}) of sum (or \emph{size})~$B$.
To reduce to \CLP, one can simulate a bucket of size~$B$ as a sequence of~$B$ consecutive sockets and a 
number~$a_i$ as~$a_i$ plugs that are connected in a star-like fashion to a common ancestor~$v_i$ located above all these plugs.
Finally, all ancestors~$v_i$ and all sockets are connected in a star-like fashion to a common root vertex.
} that the socket/plug gadget used in their reduction can be utilized in the context of a reduction from \textsc{3-Partition} to show that 
\CLP remains NP-hard for proper instances whose underlying undirected graph is a single (rooted) tree of constant depth.
In fact, the unpublished full version of~\cite{DBLP:conf/soda/BrucknerR17} features such a construction~\cite{ignaz-pc}.

On the positive side, Br\"uckner and Rutter~\cite{DBLP:conf/soda/BrucknerR17} presented a polynomial time algorithm for the special case of 
\CLP where the input graph~$G$ has a single source.
They further improved the runtime of this algorithm in~\cite{DBLP:conf/isaac/BrucknerR20}.
Very recently, Bla\v{z}ej, Klemz, Klesen, Sieper, Wolff, and Zink studied the parameterized complexity of \CLP and \OLP with respect to the height of the input graph~\cite{fewLevelsSoCG}.
They showed that \OLP parameterized by height is XNLP-complete (implying that it is in XP, but $W[t]$-hard for every $t\ge 1$).
In contrast, \CLP is NP-hard even if restricted to instances of height~$4$, but it can be solved in polynomial time if restricted to instances of height at most~$3$.

\subparagraph{Other related work.}
Several other
restricted
variants of \textsc{Level Planarity} have been studied, e.g., 
\textsc{Clustered Level Planarity}~\cite{DBLP:conf/sofsem/ForsterB04,DBLP:journals/tcs/AngeliniLBFR15,DBLP:journals/talg/KlemzR19}, 
\textsc{T-Level Planarity}~\cite{DBLP:journals/dam/WotzlawSP12,DBLP:journals/tcs/AngeliniLBFR15,DBLP:journals/talg/KlemzR19}, 
and \textsc{Partial Level Planarity}~\cite{DBLP:conf/soda/BrucknerR17}.
In particular, in \textsc{Partial Level Planarity},
 a given level planar drawing of a subgraph of the input graph~$\mathcal G$ has to be extended to a
full 
drawing of~$\mathcal G$, 
which can be seen as a generalization of \OLP and, in the proper case, a specialization of \CLP.
  \textsc{Level Planarity} has been extended to surfaces different from the plane~\cite{DBLP:journals/jgaa/BachmaierBF05,DBLP:conf/gd/AngeliniLBFPR16,DBLP:conf/esa/BachmaierB08}.
There are also related problems with a more geometric flavor, e.g.,
finding a level planar straight-line drawing where each face is bounded by a convex polygon 
\cite{DBLP:journals/jda/HongN10,DBLP:conf/esa/Klemz21},
and problems where the input is an undirected graph without a level assignment and the task is to find a crossing-free drawing with y-monotone edges that, if interpreted as a level planar drawing,
satisfies or optimizes certain criteria, e.g., being proper or having minimum height \cite{DBLP:journals/algorithmica/BannisterDDEW19,DBLP:journals/algorithmica/DujmovicFKLMNRRWW08,DBLP:journals/siamcomp/HeathR92}.

\subparagraph{Contribution.}
As discussed above, the previous results of Br\"uckner and Rutter~\cite{DBLP:conf/soda/BrucknerR17} and Klemz and Rote~\cite{DBLP:journals/talg/KlemzR19} 
rule out the existence of FPT-time (even XP-time) algorithms for \CLP when considering the tree-width, path-width, tree-depth, or feedback vertex 
set number as a parameter, even when restricted to OLP or proper CLP instances (unless P=NP).
As all of these parameters are bounded\footnote{More precisely, $\mathrm{tw}(G)\le \mathrm{pw}(G)\le \mathrm{td}(G)-1\le\mathrm{vc}(G)$ and $\mathrm{fvs}(G)\le\mathrm{vc}(G)$.} by the vertex cover number, 
it is natural to study the parameterized complexity of \CLP with respect to this parameter.
We prove the following main result:

\begin{theorem}
\label{thm:fpt}
\CLP parameterized by the vertex cover number is FPT.
\end{theorem}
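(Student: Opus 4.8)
The plan is to solve the problem in time $f(k)\cdot\mathrm{poly}(n)$, where $k=\mathrm{vc}(G)$, by branching over a bounded amount of structure concerning the few cover vertices and then inserting the many remaining vertices by a polynomial-time subroutine. First I would compute a minimum vertex cover $C$ of the underlying undirected graph of $G$ (possible in FPT time) and set $k:=|C|$ and $I:=V\setminus C$; thus $I$ is independent and every vertex $v\in I$ has all of its at most $k$ neighbours in $C$. The guiding intuition is that the whole combinatorial difficulty is concentrated in the at most $k$ cover vertices: once their drawing is fixed, the many vertices of $I$ merely have to be ``slotted in''.

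Second, I would enumerate the combinatorial type of the subdrawing of the cover vertices. In any constrained level planar drawing $\Gamma$ of $\mathcal G$, deleting the vertices of $I$ and their incident edges yields a level planar drawing $\Gamma_C$ of the constrained level graph $\mathcal G_C$ induced by $C$, which has at most $k$ vertices and at most $\binom k2$ edges. The combinatorial structure of $\Gamma_C$ — a planar embedding of $G[C]$ (rotation system, outer face, and, if $G[C]$ is disconnected, the nesting of its components), refined so as to be realizable by a level planar drawing compatible with $\gamma|_C$ and the given partial orders — admits at most $g(k)$ ``sketches''; these can be enumerated, and sketches not realizable as such a drawing can be discarded using the known linear-time \LPlong machinery. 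It then suffices, for each sketch $S$, to decide whether some $\Gamma_C$ of type $S$ extends to a constrained level planar drawing of all of $\mathcal G$.

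Third, I would solve this extension problem. A drawing $\Gamma_C$ of type $S$ partitions the plane into $O(k^2)$ faces, each bounded by cover vertices and arcs of cover--cover edges (by definition no part of $\Gamma_C$, in particular no long cover--cover edge, passes through the interior of a face). Since the vertices of $I$ are pairwise non-adjacent, each $v\in I$, together with its incident edges, lies inside a single face $F$ of $\Gamma_C$, and hence all neighbours of $v$ lie on $\partial F$. Conversely, for a fixed face $F$, the task of inserting into $F$ a prescribed set of vertices of $I$ (with their edges to the cover vertices on $\partial F$) is a self-contained instance reminiscent of \PLPlong: the boundary of $F$ acts as a prescribed partial drawing — for connected $G[C]$ it is a pair of $y$-monotone chains — which must be extended, respecting the partial orders, by the assigned vertices of $I$. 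I expect this per-face problem to be polynomial-time solvable via a combination of the \LPlong algorithm and a flow/matching formulation that chooses the left-to-right position of each inserted vertex on each level subject to $\prec_i$. Finally one must decide which face each $v\in I$ is routed into: $v$ has only $O(k)$ candidate faces (those whose boundary contains $N(v)$ and satisfy an obvious monotonicity condition), distinct faces are drawn in disjoint regions of the plane and interact only through this choice, and per-face feasibility is monotone under deleting assigned vertices — which should let one phrase the global decision as a single polynomially-sized assignment/flow problem. Altogether $\mathcal G$ is realizable iff this succeeds for at least one of the $g(k)$ sketches, yielding the claimed FPT algorithm.

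The main obstacle is the polynomial-time solvability of the third step, where two points need the most care. First, within a single face one must simultaneously pick, for the possibly many assigned vertices of $I$ that share a level, a common linear extension of $\prec_i$ that is compatible with a planar $y$-monotone routing of all their edges to the fixed boundary chains; this is exactly where the constrained aspect bites, and a reduction to the constraint-free \LPlong will likely need an auxiliary gadget or a direct (e.g.\ PQ-tree-based) combinatorial argument. Second, the cross-face assignment must be shown to decompose cleanly — ruling out global interactions stemming from pendant vertices of $I$ or from cover vertices that are co-facial in several faces, handling the disconnected case of $G[C]$, and verifying that the per-face feasibility predicate is genuinely monotone and captured by a polynomial-size network. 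I also note that, unlike for the known NP-hardness reductions, one cannot first reduce to a proper instance, since subdividing long edges can increase the vertex cover number unboundedly; the algorithm must therefore treat non-proper cover--cover and cover--independent edges directly, which is precisely why the faces of $\Gamma_C$ — rather than individual levels — are the right unit for the argument.
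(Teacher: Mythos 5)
Your high-level plan (guess a bounded amount of structure for the cover, then insert the independent set) matches the paper's, but two of your concrete steps have genuine gaps. The first is your choice of the faces of $\Gamma_C$, the drawing induced by $C$ alone, as the decomposition unit. The graph $G[C]$ may have very few edges (or none), so $\Gamma_C$ can have a single face and its ``sketch'' carries almost no information; more importantly, even when $G[C]$ is rich, the independent set itself creates essential separation structure inside a face that your sketch does not record. The problematic vertices are the degree-$2$ sources/sinks of $V\setminus C$ (the paper's \emph{ears}): unboundedly many ears can join the same pair $c_a,c_b\in C$, they nest, and the annular regions between consecutive nested ears can contain other cover vertices and entire components. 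Which region each such component lands in is a combinatorial choice that is not determined by any embedding of $G[C]$ and cannot be enumerated naively (the number of ears is not bounded in $k$). The paper's entire apparatus of visibility edges, \emph{crucial} (outermost/innermost/bounding) ears, and the $O(k)$-size \emph{core} exists precisely to extract a bounded certificate of this nesting information; the refined visibility extension is needed to restrict the candidate levels of these guessed ears to an $O(k)$-size set so that the enumeration is $2^{O(k\log k)}$ rather than merely XP. Your proposal has no substitute for this.

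The second gap is the asserted polynomial-time solvability of the insertion step via a per-face \textsc{Partial Level Planarity}-like subroutine plus a global flow/assignment. The constraints $\prec_i$ act on entire levels: a constraint can relate a leaf destined for one region to a vertex drawn in a distant region, so the left-to-right order of a whole level must be a single linear extension of $\prec_i$ that is simultaneously consistent with a planar routing of all edges into the fixed subdrawing. The regions therefore do \emph{not} interact ``only through the assignment,'' and feasibility is not obviously monotone or flow-expressible; you flag this yourself as the main obstacle but offer no argument. Notably, the paper does \emph{not} solve this step in polynomial time: it enumerates an \emph{ear orientation} per level ($2^{O(k)}$ choices, justified by the bound of $2k$ ears per level) and a \emph{traversal sequence} over $O(k^2)$ channels ($2^{O(k^2\log k)}$ choices), and only then constructs the leaf placement greedily via insertion sequences with a delicate exchange argument. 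Since the one step you leave as a black box is exactly the step that forces additional FPT-level enumeration in the known solution, the proposal as written does not constitute a proof of the theorem.
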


\noindent
In view of the previous intractability statements, \cref{thm:fpt} is best-possible in several regards:
a speed-up to polynomial time or a generalization to the aforementioned smaller graph parameters is not possible, even if restricting to OLP or proper CLP instances.

\subparagraph{Organization.}
The proof of \cref{thm:fpt} and the remainder of this paper are organized as follows.
We begin by introducing some basic notation, terminology, and other preliminaries in \cref{sec:types}.
In particular, we describe a partition of the vertex set of a given constrained level graph~$\mathcal G$ into different categories with respect 
to a given vertex cover~$C$ and we show that the vertices of two of these categories are, in some sense, easy to handle.
In \cref{sec:vis-core}, we introduce cores and (refined) visibility extensions of level planar drawings with respect to a fixed vertex cover $C$.
Intuitively, the core-induced subdrawing of a (refined) visibility extension of a constrained level planar drawing~$\Gamma^*$ of~$\mathcal G$ with respect 
to~$C$ is a drawing~$\Lambda_\mathrm{core}$ that captures crucial structural properties of~$\Gamma^*$ and whose total complexity is bounded in~$|C|$.
The latter allows us to efficiently obtain such a core-induced subdrawing via the process of exhaustive enumeration.
This is the first main step of the algorithm corresponding to the proof of \cref{thm:fpt}, which is described in \cref{sec:algorithm}.
Due to the properties of the core-induced subdrawing, it is then possible to place the remaining vertices 
in the subsequent main steps of the algorithm, each of which is 
concerned with the placement of the vertices of a particular vertex category.
We conclude with a discussion of an open problem in \cref{sec:conclusion}.
Proofs of statements marked with a (clickable) $\star$ can be found in the appendix.

\section{Preliminaries}
\label{sec:types}

\subparagraph{Conventions.}
Recall that in a level graph~$\mathcal G=(G=(V,E),\gamma)$, the graph~$G$ is directed by definition.
However, when it comes to vertex-adjacencies, we always refer to the underlying undirected graph of~$G$, that is, the neighborhood of $v\in V$ is $\mathrm{N}_G(v)=\{u\in V\mid (u,v)\in E$ $\vee (v,u)\in E\}$, the degree of $v$ is $|\mathrm{N}_G(v)|$, and ``a vertex cover of $G$'' refers to a vertex cover\footnote{A \emph{vertex cover} of an undirected graph $G=(V,E)$ is a vertex set $C\subseteq V$ such that every edge in $E$ is incident to at least one vertex in $C$. The \emph{vertex cover number} of $G$ is the size of a smallest vertex cover of~$G$.} of the underlying undirected graph of~$G$.
The \emph{level planar embedding} of a level planar drawing of~$\mathcal G$  lists, for each level $V_i$, the left-to-right sequence of vertices and edges intersected by the line $L_i$ in the drawing.
Note that this corresponds to an equivalence class of drawings from which an actual drawing is easily derived, which is why algorithms for constructing level planar drawings (including our algorithms) usually just determine a level planar embedding.
For brevity, we often use the term ``drawing'' as a synonym for ``embedding of a drawing''.

\subparagraph{Vertex categories \& notation.}
For $m\in \mathbb N$, we use $[m]$ to denote the set $\{1,2,\dots,m\}$.
Let~$\mathcal G=(G=(V,E),\gamma)$ be a (constrained) level graph and let~$C$ be a vertex cover of~$G$.
An \emph{ear} of~$\mathcal G$ with respect to~$C$ is a degree-2 vertex of $V\setminus C$ that is a source or sink.
For a subset $X \subseteq C$, we define $V_X(C) = \{v \in V\setminus C \mid \mathrm{N}_G(v) = X\}$.
We partition the vertices~$V\setminus C$ of the graph~$G$ into four sets $V_{=0}(C), V_{=1}(C), V_{=2}(C), V_{\ge 3}(C)$ where
$V_{=0}(C) = \{ v \in V\setminus C \mid \mathrm{deg}(v) = 0\}$,
$V_{=1}(C) = \{ v \in V\setminus C \mid \mathrm{deg}(v) = 1\}$ (the \emph{leaves}),
$V_{=2}(C) = \{ v \in V\setminus C \mid \mathrm{deg}(v) = 2\}$, and
$V_{\geq 3}(C) = \{ v \in V\setminus C \mid \mathrm{deg}(v) \geq 3\}$.
The set $V_{=2}(C)$ is further partitioned into two sets $V_{=2}^\mathrm e(C), V_{=2}^\mathrm t(C)$ where $V_{=2}^\mathrm e(C)$ contains the 
ears and $V_{=2}^\mathrm t(C)$ the non-ears, called \emph{transition} vertices.
Let~$v\in V_{=1}(C)$ and let~$c\in C$ denote its (unique) neighbor.
We say that~$v$ is a \emph{leaf} of~$c$.
Similarly, let~$u\in V_{=2}(C)$ and let~$c_a,c_b\in C$ denote its (unique) two neighbors.
We say that~$u$ is a \emph{transition vertex} (\emph{ear}) of~$c_a$ and~$c_b$ if~$u\in V_{=2}^\mathrm t(C)$ (if $u\in V_{=2}^\mathrm e(C))$.
We often omit $C$ if it is clear from the context.

Let $\mathcal G=(G=(V,E),\gamma,(\prec_i)_{i})$ be a constrained level graph and let~$C$ be a vertex cover of~$G$.
The main challenge when constructing a constrained level planar drawing of~$\mathcal G$ is the placement of the leaves, ears, and transition vertices (along with their incident edges).
Indeed, it is not difficult to insert the isolated vertices (which include $V_{=0}$) in a post-processing step (performing a topological sort on each level), see \cref{lem:isolated}.
Moreover, since we may assume~$G$ to be planar, the size of $V_{\geq 3}(C)$ is linear in~$|C|$.
This well known bound can be derived, e.g., by combining the fact that the complement of a vertex cover is an independent set with the following statement (setting $A=C$).

\begin{lemma}[{\hspace{1sp}\cite[Corollary 9.25]{DBLP:books/sp/CyganFKLMPPS15}}]\label{lem:vc-machinery}
Let~$G=(V,E)$ be a planar graph and $A\subseteq V$.
Then there are at most $2|A|$ connected components in the subgraph of~$G$ induced by $V\setminus A$ that are adjacent to more than two vertices of~$A$.\qed
\end{lemma}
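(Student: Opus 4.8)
The plan is to encode the incidences between the relevant components and the set $A$ as an auxiliary planar bipartite graph and then invoke the standard Euler-formula edge bound. First I would dispose of the degenerate cases: if no connected component of $G[V\setminus A]$ is adjacent to more than two vertices of $A$, the claimed bound $0\le 2|A|$ is trivial, and since the existence of even one such component already forces $|A|\ge 3$, I may henceforth assume $|A|\ge 3$. Write $p=|A|$, let $\mathcal Q$ be the set of connected components of $G[V\setminus A]$ adjacent to at least three vertices of $A$, and set $q=|\mathcal Q|$; the goal becomes $q\le 2p$.

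Next I would build a graph $H$ on the vertex set $A\cup\mathcal Q$: contract each component $Q\in\mathcal Q$ to a single vertex, delete every vertex of $V\setminus A$ that lies in no component of $\mathcal Q$, delete every edge with both endpoints in $A$, and finally suppress loops and parallel edges. Contractions of connected subgraphs, vertex deletions, and edge deletions preserve planarity, and suppressing loops and multi-edges does not affect it, so $H$ is a simple planar graph. By construction $H$ is bipartite with parts $A$ and $\mathcal Q$ (there is no edge between two distinct components of $G[V\setminus A]$), and every vertex of $\mathcal Q$ has degree at least $3$ in $H$, because its $H$-neighbourhood consists precisely of the vertices of $A$ that are $G$-adjacent to the corresponding component, of which there are at least three by the choice of $\mathcal Q$.

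Finally I would compare two edge counts in $H$. Summing degrees over the part $\mathcal Q$ gives $|E(H)|\ge 3q$, while a simple bipartite planar graph on $p+q\ge 3$ vertices has at most $2(p+q)-4$ edges. Combining the two yields $3q\le 2(p+q)-4$, hence $q\le 2p-4\le 2|A|$, which is even slightly stronger than what is claimed.

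I do not anticipate a genuine obstacle; the argument is a routine minor-plus-Euler estimate. The only points deserving a moment's attention are that the reduction to $|A|\ge 3$ is sound and, above all, that after the contraction and simplification every $\mathcal Q$-vertex still has degree at least $3$ in $H$ --- this is exactly where the hypothesis ``more than two neighbours in $A$'' enters, and it explains why isolated vertices and components with one or two neighbours in $A$ must be kept out of $\mathcal Q$.
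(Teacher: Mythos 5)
Your proposal is correct. Note that the paper does not prove this statement at all---it is quoted verbatim as Corollary~9.25 of the cited textbook of Cygan et al.\ and marked \(\qed\)---so there is no in-paper argument to compare against. Your contraction-plus-Euler argument (contract each component with at least three neighbours in \(A\), observe the resulting simple graph is bipartite and planar with every contracted vertex of degree at least \(3\), and combine \(3q\le|E(H)|\le 2(p+q)-4\)) is the standard proof of this folklore bound and indeed yields the marginally stronger estimate \(q\le 2|A|-4\); the one step worth stating explicitly is that the edge bound \(|E(H)|\le 2(p+q)-4\) for simple bipartite planar graphs on at least three vertices holds without any connectivity assumption on \(H\), which is true but is the only place a careless write-up could slip.
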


\begin{corollary}[Folklore]
\label{lem:xg2:Vgeq3InKexp3}
Let $G$ be a planar graph and let $C$ be a vertex cover of $G$. Then $|V_{\geq 3}(C)|\le 2k$, where $k=|C|$.\qed
\end{corollary}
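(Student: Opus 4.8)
The plan is to invoke \cref{lem:vc-machinery} with $A=C$. The key observation is that, since $C$ is a vertex cover of $G$, the complement $V\setminus C$ is an independent set, so in the subgraph $H$ of $G$ induced by $V\setminus C$ every vertex is isolated. In particular, each vertex $v\in V_{\ge 3}(C)$ constitutes its own (singleton) connected component $\{v\}$ of $H$.

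Next I would argue that each such component is adjacent to more than two vertices of $A=C$. Indeed, since $V\setminus C$ is independent, every neighbor of $v$ lies in $C$; and by definition of $V_{\ge 3}(C)$ we have $\mathrm{deg}(v)=|\mathrm N_G(v)|\ge 3$, so the component $\{v\}$ is adjacent to at least three, hence more than two, vertices of $C$. Applying \cref{lem:vc-machinery} (which is available because $G$ is planar) then yields that the number of connected components of $H$ adjacent to more than two vertices of $A=C$ is at most $2|A|=2k$. Since the singleton components $\{v\}$ for $v\in V_{\ge 3}(C)$ are pairwise distinct and all of this type, we conclude $|V_{\ge 3}(C)|\le 2k$.

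There is essentially no obstacle in this argument; it is a routine combination of two standard facts. The only point that requires a moment of care is that $\mathrm{deg}(v)$ counts the number of \emph{distinct} neighbors of $v$ in the underlying undirected graph (as fixed in the conventions of \cref{sec:types}), which is exactly what is needed to translate ``degree at least three'' into ``adjacent to more than two vertices of $C$''; this is why it is legitimate to apply \cref{lem:vc-machinery} directly.
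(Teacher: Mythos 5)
Your proof is correct and follows exactly the derivation the paper indicates: combine the independence of $V\setminus C$ (so each vertex of $V_{\ge 3}(C)$ is a singleton component with at least three neighbours in $C$) with \cref{lem:vc-machinery} applied to $A=C$. Nothing to add.
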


\begin{restatable}[\restateref{lem:isolated}]{lemma}{Isolated}
\label{lem:isolated}
 Let~$\mathcal{G} = (G, \gamma, (\prec_i)_i)$ be a constrained level graph, let~$G'$ be the subgraph of~$G$ induced by the non-isolated vertices $V'$,
and let~$\gamma'$ and~$(\prec'_i)_i$ be the restrictions of~$\gamma$ and~$(\prec_i)_i$ to~$V'$, respectively.
There is an algorithm that, given~$\mathcal G$ and a constrained level planar drawing~$\Gamma'$ of $\mathcal{G}'=(G',\gamma',(\prec'_i)_i)$, constructs a constrained level-planar drawing of~$\mathcal G$ in polynomial time.
\end{restatable}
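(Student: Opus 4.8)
The plan is to build a constrained level planar embedding of~$\mathcal{G}$ from the given embedding~$\Gamma'$ of~$\mathcal{G}'$ by inserting, level by level, the isolated vertices of~$\mathcal{G}$ (these are precisely the vertices of~$V\setminus V'$). Since an isolated vertex has no incident edges, adding it to the drawing as a single point can never create a crossing or violate $y$-monotonicity; hence the only thing that must be arranged with care is, for each level~$i$, a left-to-right order of the full vertex set~$V_i$ that (i)~is a linear extension of~$\prec_i$ and (ii)~restricts, on the non-isolated vertices, to the order~$\sigma'_i$ that~$\Gamma'$ already prescribes.

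Fix a level~$i$. I would form a digraph~$D_i$ on vertex set~$V_i$ that has an arc $u\to v$ whenever $u\prec_i v$, and additionally an arc $u\to v$ whenever~$u$ and~$v$ are both non-isolated and~$u$ precedes~$v$ in~$\sigma'_i$. The crux is the claim that~$D_i$ is acyclic. A hypothetical cycle must use at least one arc of the second kind, because the $\prec_i$-arcs alone form a DAG ($\prec_i$ being a partial order); list these second-kind arcs as $a_1\to b_1,\dots,a_r\to b_r$ in cyclic order, where $r\ge 1$ and all $a_j,b_j$ are non-isolated. The stretch of the cycle leading from~$b_j$ back to~$a_{j+1}$ (indices taken mod~$r$) consists only of $\prec_i$-arcs, so $b_j=a_{j+1}$ or $b_j\prec_i a_{j+1}$ by transitivity of~$\prec_i$; since both endpoints are non-isolated and~$\Gamma'$ linearizes the restriction of~$\prec_i$ to the non-isolated vertices, this forces~$b_j$ to weakly precede~$a_{j+1}$ in~$\sigma'_i$, whereas the arc $a_j\to b_j$ forces~$a_j$ to strictly precede~$b_j$ in~$\sigma'_i$. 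Chaining these relations once around the cycle yields that~$a_1$ strictly precedes itself in~$\sigma'_i$, a contradiction. Therefore~$D_i$ has a topological order, which is the desired order of~$V_i$.

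Doing this on every level produces a constrained level planar embedding of~$\mathcal{G}$, and turning it into an actual drawing is routine: keep the drawing~$\Gamma'$ and, on each line~$L_i$, give each isolated vertex an $x$-coordinate in the open interval bounded by its nearest non-isolated neighbors in the chosen order of~$V_i$ (an unbounded interval if there is none on one side), perturbing it slightly if necessary to avoid the finitely many $x$-coordinates at which edges of~$\mathcal{G}'$ cross~$L_i$, and separating isolated vertices that land in the same interval according to the topological order. Reading off each~$\sigma'_i$, building and topologically sorting each~$D_i$, and placing the vertices all take polynomial time, which gives the claimed running time. The one real obstacle is the acyclicity of~$D_i$: there one has to combine transitivity of~$\prec_i$, used to absorb detours of the cycle through isolated vertices, with the fact that~$\Gamma'$ already linearizes~$\prec_i$ on the non-isolated part; everything else is bookkeeping.
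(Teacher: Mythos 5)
Your proposal is correct and follows essentially the same route as the paper: both take the union of $\prec_i$ with the linear order that $\Gamma'$ induces on the non-isolated vertices of level~$i$, prove this union is acyclic (the paper via a minimal-cycle argument, you via chaining the second-kind arcs around a hypothetical cycle — a cosmetic difference), and then topologically sort each level to place the isolated vertices. No gaps.
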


\noindent Our main algorithm will exploit the fact that only few ears may share a common level:

\begin{restatable}[\restateref{cl:2k-ears-per-level}]{lemma}{TwoK}
	\label{cl:2k-ears-per-level}
	Let $\mathcal G=(G,\gamma)$ be a level graph, let $\Gamma$ be a level planar drawing of~$\mathcal G$, let $C$ be a vertex cover of~$G$.
	Then there are at most $2|C|$ ears with respect to~$C$ per level. 
\end{restatable}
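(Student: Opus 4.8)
The plan is to reduce the claim to the combinatorial statement of Lemma~\ref{lem:vc-machinery} applied to a suitable auxiliary planar graph, by contracting each ear together with its two incident edges into a single edge of the underlying graph. First I would fix a level $V_i$ and let $E_i$ denote the set of ears $e \in V_{=2}^{\mathrm e}(C)$ with $\gamma(e) = i$. Each such $e$ is a degree-$2$ source or sink whose two neighbors $c_a(e), c_b(e)$ both lie in $C$ (a degree-$2$ vertex of $V\setminus C$ has both neighbors in the cover since $V\setminus C$ is independent). Note $c_a(e) \neq c_b(e)$: if $e$ is a source, both edges $(e,c_a(e))$ and $(e,c_b(e))$ point upward, and if they went to the same vertex $c$ we would have a multi-edge, which we may exclude (or, even with multi-edges allowed, they would bound an empty lens and could be handled trivially).

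Next I would build an auxiliary multigraph $H$ on vertex set $C$ by, for every ear $e \in E_i$, adding an edge $c_a(e)c_b(e)$; call this edge the \emph{representative} of $e$. The key geometric observation is that $H$ is planar: given the level planar drawing $\Gamma$ of $\mathcal G$, route the representative of $e$ along the concatenation of the two $y$-monotone curves of the edges incident to $e$ in $\Gamma$ (a path through the point where $e$ is drawn). Since $\Gamma$ is crossing-free, distinct ears on the same level give curves that are internally disjoint from one another and from the rest of the drawing except at their shared cover-endpoints, so this yields a planar drawing of $H$ on the same vertex set $C$; hence $H$ is planar. The main obstacle is making this rerouting argument fully rigorous — in particular, arguing that the curves obtained for two ears sharing the same pair of endpoints $\{c_a, c_b\}$ do not cross and bound faces that let us treat all but a bounded number of them as ``parallel'' and removable; this is where one has to be careful, and the cleanest route is to observe that all ears on level $V_i$ are drawn at points on the line $L_i$, so their associated bent curves are nested/non-crossing and the argument reduces to counting.

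With planarity of $H$ in hand, I would invoke Lemma~\ref{lem:vc-machinery} or argue directly: a simple planar graph on $|C|$ vertices has at most $3|C| - 6$ edges, but to get the sharper bound $2|C|$ I would instead use the bipartite-like structure, namely that $H$ can be taken to have no two representatives forming a multi-edge after the nesting reduction, and that each ear's representative, being a subdivided edge through a degree-$2$ vertex outside $C$, corresponds to a distinct connected component of $G - C$ adjacent to exactly the two cover vertices $\{c_a(e), c_b(e)\}$ — so Lemma~\ref{lem:vc-machinery} with $A = C$ does not directly bound them, and the correct tool is rather the bound on edges of a planar graph whose edge set decomposes into paths through independent degree-$2$ vertices. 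The honest statement I would settle on: each ear on $V_i$ contributes a distinct edge to a planar (multi)graph on $C$ in which parallel classes have size bounded by the number of nested pairs along $L_i$; a direct face-counting then gives $|E_i| \le 2|C|$. I expect the write-up to favor the contraction-to-planar-graph viewpoint and cite Euler's formula, with the lens/multi-edge case dispatched in one sentence.
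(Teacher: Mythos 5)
Your reduction to a planar auxiliary multigraph $H$ on $C$ is a reasonable start, but the proposal never actually closes the decisive step, and the tools you point at cannot close it. Planarity of $H$ alone is too weak: even after eliminating parallel edges you only get $|E(H)|\le 3|C|-6$ from Euler's formula, which exceeds the claimed bound $2|C|$ once $|C|\ge 7$; for a multigraph Euler gives nothing at all. \cref{lem:vc-machinery} is also the wrong lemma here, as you half-notice: it only counts components of $G-C$ adjacent to \emph{more} than two cover vertices, and every ear is adjacent to exactly two. The bipartite-planar bound $2n-4$ applied to the ear--cover incidence graph is vacuous as well (it yields $2|E_i|\le 2(|E_i|+|C|)-4$). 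So the sentence ``a direct face-counting then gives $|E_i|\le 2|C|$'' is precisely the missing proof. The structural fact you must exploit, and do not, is that all ears of $V_i$ lie on the single line $L_i$: each contracted edge is a ``tent'' (for sink-ears below $L_i$) or ``vee'' (for source-ears above $L_i$) whose apex is the unique point of the curve on $L_i$, with everything else strictly on one side. Hence in the sub-multigraph of tents every edge is incident to the outer face (shoot a vertical ray from its apex), the inner faces are pairwise edge-disjoint digons-or-longer, and Euler then gives at most $2n^{-}-2$ tents on $n^{-}$ cover vertices below $L_i$, and symmetrically for the vees; summing yields the bound. Without an argument of this kind your approach stalls at $3|C|-6$.

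For comparison, the paper avoids the contraction entirely and uses a short greedy charging argument: in the subdrawing consisting of the level-$i$ ears and $C$, pick the cover vertex $v$ closest to $L_i$; planarity forces $v$ to be adjacent to at most two ears of $V_i$ (a third, middle ear would have its second edge trapped inside the wedge formed by the two outer edges at $v$, since its other endpoint is no closer to $L_i$ than $v$), so one charges those ears to $v$, deletes them and $v$, and repeats. Each cover vertex is charged at most twice, giving $2|C|$ directly. If you want to keep your contraction viewpoint, you need to add the outer-face/apex observation above (or an equivalent sweep along $L_i$); as written, the proposal has a genuine gap at exactly the point where the constant $2$ has to come from.
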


\subparagraph{Compatible edge orderings.}
Let~$\Gamma$ be a level planar drawing of a (possibly constrained) level graph~$\mathcal G=(G=(V,E),\gamma)$ without isolated vertices.
We will now define a useful (not necessarily unique) linear order~$\prec^{\mathrm{e}}$ on the edges~$E$ with respect to~$\Gamma$.
We refer to~$\prec^{\mathrm{e}}$ as an edge ordering of~$E$ that is \emph{compatible} with~$\Gamma$.
Compatible edge orderings can be seen as a generalization of a linear order described in~\cite[Proof of Lemma~4.4]{DBLP:journals/talg/KlemzR19} for a 
set of pairwise disjoint y-monotone paths, which in turn follows considerations about horizontal separability of y-monotone sets by 
translations~\cite{deB,amr-qpscu-15,gy-tsr-80,gy-tsr-83}.
Intuitively, $\prec^{\mathrm{e}}$ is a linear extension of a partial order in which $e\in E$ precedes $f\in E$ if it is possible to shoot a horizontal rightwards ray from $e$ to $f$ in $\Gamma$ without crossing other edges before reaching~$f$.
Formally, 
we say that a vertex~$v$ is \emph{visible from the left} in~$\Gamma$ if the horizontal ray~$r_v$ emanating from~$v$ to the left intersects~$\Gamma$ only in~$v$.
We say that an edge~$e=(u,v)$ is \emph{visible from the left} in~$\Gamma$ if 
the closed (unbounded) region that is to the left of $e$ and whose boundary is described by $e, r_u, r_v$ intersects~$\Gamma$ only in~$e$.
The order~$\prec^{\mathrm{e}}$ is now constructed as follows:
the minimum of~$\prec^{\mathrm{e}}$ is an edge~$e_1$ of~$E$ that is visible from the left in~$\Gamma$.
Such an edge always exists~\cite{DBLP:journals/talg/KlemzR19,gy-tsr-80,gy-tsr-83}:
among the edges whose lower endpoint is visible from the left, the edge with the topmost lower endpoint is visible from the left.
Let~$\Gamma'$ denote the drawing derived from~$\Gamma$ by removing~$e_1$ and any isolated vertices created by the removal of~$e_1$.
The restriction of~$\prec^{\mathrm{e}}$ to the remaining edges~$E\setminus e_1$ corresponds to an edge ordering compatible with~$\Gamma'$, which is constructed recursively.
Note that $\mathcal G$ and~$\prec^{\mathrm{e}}$ uniquely describe
the drawing~$\Gamma$ and,
given $\mathcal G$ and $\prec^{\mathrm{e}}$, it is possible to construct~$\Gamma$ in polynomial time (by traversing $\prec^{\mathrm{e}}$ in reverse).

\section{Visibility extensions and cores}
\label{sec:vis-core}

In this section, 
we introduce and study 
(refined) visibility extensions and cores of level planar drawings.
We will see that the core-induced subdrawing of a (refined) visibility extension of a level planar drawing~$\Gamma$ with respect to some 
vertex cover~$C$ captures crucial structural properties of~$\Gamma$ while having a size that is bounded in~$|C|$. 

\subparagraph{Visibility extensions.}
Let~$\Gamma$ be a level planar drawing of a level graph $\mathcal G=(G=(V,E),\gamma)$.
A \emph{visibility edge}~$e$ for~$\Gamma$ is
(1) a y-monotone curve that joins two vertices of~$\Gamma$ and can be inserted into~$\Gamma$ without creating any crossings (but possibly a pair of parallel edges);
or (2) a horizontal segment that joins two consecutive vertices on a common level of~$\Gamma$ and can be inserted into~$\Gamma$ without creating any crossings.
A \emph{visibility extension} of~$\Gamma$ with respect to a vertex set~$V'\subseteq V$
is a drawing~$\Lambda$ derived from~$\Gamma$ by inserting a maximal set of 
pairwise non-crossing visibility edges incident only to vertices of~$V'$ such that for each pair~$e,e'$ of parallel edges in~$\Lambda$ there is at least one 
vertex of~$V'$ in the interior of the simple closed curve formed by~$e$ and~$e'$; for an illustration see \cref{fig:core}(a).
We remark that if $V'=V$, then $\Lambda$ is essentially an interior triangulation containing~$\Gamma$. However, we will always choose $V'$ to be a (small) vertex cover, resulting in a much sparser yet still connected augmentation:

\begin{restatable}[\restateref{lem:VC-connected}]{lemma}{Cconnected}
Let $\mathcal G=(G=(V,E),\gamma)$ be a level graph without isolated vertices, let~$C$ be a vertex cover of~$G$, let~$\Gamma$ be a level 
planar drawing of~$\mathcal G$,  let~$\Lambda$ be a visibility extension of~$\Gamma$ with respect to~$C$, and let~$\Lambda_C$ be the subdrawing 
of~$\Lambda$ that is induced by~$C$.
Then~$\Lambda_C$ is connected and has~$\mathcal O(k)$ edges, where $k=|C|$.
\label{lem:VC-connected}
\end{restatable}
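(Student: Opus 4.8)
The plan is to prove the two assertions separately: the $\mathcal O(k)$ bound on $|E(\Lambda_C)|$ will follow from $\Lambda_C$ being free of loops and \emph{bigon faces}, while its connectivity will be forced by the maximality of the visibility extension.

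For the edge bound, first note that $\Lambda_C$ has no loops (edges of $G$ are $y$-monotone and visibility edges join two distinct vertices). More importantly, $\Lambda_C$ has no bigon face: since every visibility edge is incident only to vertices of $C$, any edge of $\Lambda$ with both endpoints in $C$---original or visibility---already lies in $\Lambda_C$, so if two parallel edges $e,e'$ bounded a face $f$ of $\Lambda_C$, then $\overline f$ would be exactly the region enclosed by the simple closed curve $e\cup e'$, and this region would contain no vertex of $C$ in its interior, contradicting the definition of a visibility extension. Granting connectivity (below), $\Lambda_C$ then has no isolated vertices, so every face has a boundary walk of length at least $3$; Euler's formula gives $|E(\Lambda_C)|\le 3k-6$, and the bound is trivial for $k\le 2$.

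The substantial part is connectivity. I would argue by contradiction. If $\Lambda_C$ were disconnected, there is a face $f$ of $\Lambda_C$ whose boundary splits into at least two connected pieces $B_1,\dots,B_q$. It suffices to exhibit a single visibility edge that can be inserted into $f$ between a vertex of $C$ on some $B_i$ and a vertex of $C$ on some $B_j$ with $i\neq j$: this contradicts the maximality of $\Lambda$, and no forbidden parallel edge is created since $\Lambda$ (hence $\Lambda_C$) contains no edge between vertices on distinct boundary pieces of $f$. To construct it, let $B_1$ be the piece carrying a topmost point of $\partial f$ and let $S\in C$ be a highest vertex of $\partial f$ that does not lie on $B_1$, so that every point of $\partial f$ above the level of $S$ belongs to $B_1$. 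I would route a $y$-monotone curve that starts by ascending from $S$ into $f$. The only obstacles it can meet inside $f$ are the vertices of $V\setminus C$ drawn in $f$ together with their incident (original, hence $y$-monotone) edges, and the crucial structural fact is that these vertices form an independent set, each attached only to vertices of $C$ on $\partial f$. Consequently, whenever the ascending curve meets one of these edges, it can be deflected to continue upward along that edge toward its upper endpoint without losing $y$-monotonicity; walking up in this way, the curve must eventually reach a vertex of $C$ on $\partial f$ strictly above the level of $S$, which therefore belongs to $B_1$, yielding the desired visibility edge.

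I expect the last step---the routing---to be the main obstacle, requiring a careful sweep/case analysis: one has to control the interaction of $y$-monotonicity with the obstacles (for instance, $V\setminus C$-sinks inside $f$, or vertices such as $S$ from which no $y$-monotone curve can enter $f$ upward, which one handles by appealing to the shape of $f$ near its topmost and bottommost points and by running the argument symmetrically in the downward direction), and one must also treat horizontal visibility edges between consecutive vertices of a common level and the case where $f$ is the unbounded face. By contrast, the reduction of connectivity to ``every face of $\Lambda_C$ has a connected boundary'' and the bigon-freeness argument are the conceptually clean parts.
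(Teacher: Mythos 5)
Your reduction of connectivity to ``every face of $\Lambda_C$ has connected boundary'', and the edge bound via loop-/bigon-freeness and Euler's formula, form a sound skeleton; the edge-bound part essentially matches the paper's (with one small inaccuracy: the \emph{outer} face of $\Lambda_C$ may legitimately be a bigon, since the parallel-edge condition in the definition of a visibility extension refers to the \emph{bounded} region enclosed by the two edges; this costs only an additive constant). The connectivity argument, however, takes a genuinely different route from the paper's, and its key step has a real gap. The paper argues constructively: it fixes a compatible edge ordering $e_1\prec^{\mathrm e}\cdots\prec^{\mathrm e}e_{m'}$ of the non-horizontal edges and shows by induction that the $C$-endpoints of $e_1,\dots,e_i$ lie in one component of $\Lambda_C$, the inductive step being that a \emph{horizontal} leftward ray from a $C$-endpoint of $e_{i+1}$ first meets some earlier edge $e_t$, after which one slides along $e_t$ to one of \emph{its} endpoints in $C$ and stops there. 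The crucial feature of that routing is that it never has to continue \emph{past} an intermediate vertex: since $C$ is a vertex cover, every edge it hits has an endpoint in $C$, so a single ``hit an edge, slide to its $C$-endpoint'' move always suffices.

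Your upward routing lacks this property, and the claim that ``whenever the ascending curve meets one of these edges, it can be deflected to continue upward along that edge toward its upper endpoint \dots\ the curve must eventually reach a vertex of $C$ on $\partial f$'' is false as stated. The upper endpoint of the edge you deflect along may be a vertex $w\in V\setminus C$ that is a sink of degree at least two (a top ear), and a curve arriving at $w$ wedged between two \emph{incoming} edges of $w$ cannot pass $w$ in a y-monotone, crossing-free manner. Concretely, $S$ may sit underneath an ``umbrella'' formed by a top ear $w$ with neighbours $c_1,c_2\in C$ on $\partial f$, so that every upward greedy path from $S$ terminates at $w\notin C$; a symmetric bottom ear below $S$ blocks the downward run, so running the argument in both directions does not rescue it. An insertable visibility edge does exist in such a configuration (e.g.\ from $S$ to $c_1$ or $c_2$, the ``corners'' of the trap, or obtained by shooting a horizontal ray from $S$ and sliding along the first edge hit to its $C$-endpoint), but finding it requires a different targeting rule than the one you describe, and your clean conclusion ``the curve ends strictly above $\gamma(S)$, hence on $B_1$'' is then no longer automatic, since the reached $C$-vertex may lie on the same piece as $S$. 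So the case you flag in passing ($V\setminus C$-sinks inside $f$) is not a technicality to be cleaned up later: it is precisely where the stated argument breaks, and repairing it essentially pushes you toward the horizontal-visibility mechanism that the paper's induction is built on.
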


\subparagraph{Cores and refined visibility extensions.}
Intuitively, the core of a level planar drawing is a subset of the vertex set with certain crucial properties.
To define it formally, we will first classify the ears of the drawing according to several categories.
The concepts introduced in this paragraph are illustrated in \cref{fig:core}(b).
Let $\mathcal G=(G=(V,E),\gamma)$ be a level graph, let~$C$ be a vertex cover of~$G$, and let~$\Gamma$ be a level 
planar drawing of~$\mathcal G$.
Consider an ear $v\in V_{=2}^\mathrm e(C)$  with neighbors $c_a,c_b$ where $\gamma(c_a)\ge \gamma(c_b)$.
If $\gamma(v)>\gamma(c_a)$, we say that~$v$ is a \emph{top} ear.
Otherwise (if $\gamma(v)<\gamma(c_b)$), we say that~$v$ is a \emph{bottom} ear.
Assume that $\gamma(c_a)> \gamma(c_b)$ and that~$v$ is a top ear.
If in~$\Gamma$ the edge~$c_bv$ is drawn to the left (right) of~$c_a$, we say that~$v$ is a \emph{left} (\emph{right}) ear in~$\Gamma$.
The terms ``left'' and ``right'' are defined analogously for bottom ears.
If $\gamma(c_a)=\gamma(c_b)$, we consider $v$ to be a left ear if it is a top ear; otherwise it is a right ear.
Consider a pair $c_a,c_b\in C$ with at least one left ear in~$\Gamma$ and let~$\Gamma'$ denote the subdrawing of~$\Gamma$ induced by the set of edges 
that are incident to at least one left ear of $c_a,c_b$ in~$\Gamma$.
Note that either all these ears are top ears or all these ears are bottom ears in~$\Gamma$, and they are arranged in a nested fashion.
In case of top (bottom) ears, we refer to the unique one with the largest (smallest) y-coordinate as the \emph{outermost} left ear of $c_a,c_b$.
The \emph{innermost} left ear of $c_a,c_b$ is defined symmetrically.
If~$\Gamma'$ has an interior (i.e., bounded) face~$f$
such that the open region enclosed by the boundary of~$f$ contains a vertex of~$C$ in~$\Gamma$,
then we say that the two 
ears~$v_1,v_2$ on the boundary of~$f$ are \emph{bounding ears} of~$c_a,c_b$ in $\Gamma$ with respect to~$C$.
Moreover, we say that~$v_1,v_2$ are a pair of \emph{matching} bounding ears whose \emph{region} corresponds to~$f$.
The terms ``outermost'', ``innermost'' and ``(region of matching pair of) bounding ears'' are analogously defined for the right ears of $c_a,c_b$.
Every vertex of~$\Gamma$ that is an outermost, innermost, or bounding ear (with respect to some pair $c_a,c_b\in C$) is called \emph{crucial} with respect to~$C$.

The \emph{core} of $\Gamma$ with respect to~$C$ is the (unique) subset of~$V$ that contains $C$, $V_{\ge 3}(C)$, as well as all crucial ears of~$\Gamma$ with respect to~$C$.
The subdrawing~$\Lambda_\mathrm{core}$ of a visibility extension~$\Lambda$ of~$\Gamma$ with respect~$C$ that is induced by the core of $\Lambda$ with respect to $C$ captures crucial structural properties of~$\Gamma$, which we will exploit in our main algorithm for constructing constrained level planar drawings in FPT-time.
Due to the fact that~$\Lambda_\mathrm{core}$ has only $\mathcal O(|C|)$ vertices and edges, it is not difficult to ``guess''~$\Lambda_\mathrm{core}$ in \emph{XP}-time (via the process of exhaustive enumeration) when given $\mathcal G$ and $C$.
The main bottleneck is the enumeration of all possible sets of crucial ears.
To improve the runtime of this step to \emph{FPT}-time,
we will now describe a variant of visibility extensions that contains some additional ears, which take over the role of the original crucial vertices. Loosely speaking, one can create such a drawing by placing one or two new ears near each crucial ear in a visibility extension. The resulting augmentation retains the helpful structural properties of its underlying visibility extension and we will see that the (positions of the) crucial vertices of some such augmentation can be guessed more efficiently since we can restrict the possible levels of these new vertices to a small set.
Formally, a \emph{refined visibilty extension}~$\Lambda'$ of~$\Gamma$ is a crossing-free drawing of a level graph~$\mathcal G'=(G'=(V', E'),\gamma')$ such that~$G$ is a subgraph of~$G'$, $C$ is a vertex cover of~$G'$,
every vertex in~$V'\setminus V$ is an ear with respect to~$C$ and its incident edges are drawn as y-monotone curves,
the subdrawing of $\Lambda'$ induced by~$V$ is a visibility extension~$\Lambda$ of~$\Gamma$, 
the crucial ears of~$\Lambda'$ are precisely the vertices in $V'\setminus V$,
and~$|V'\setminus V| \in \mathcal O(|C|)$.

\begin{figure}[tbh]
    \centering
    \includegraphics[]{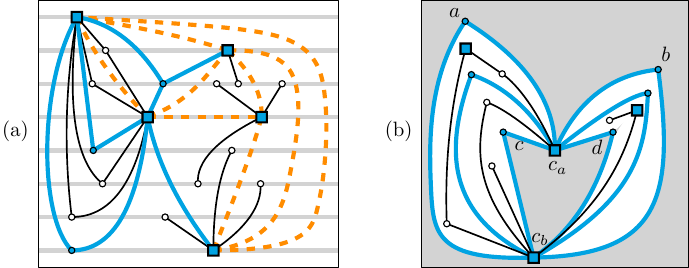}
    \caption{
    In this (and all other) figure(s), filled square vertices belong to a vertex cover $C$ of the depicted graph and
    filled (round or square) vertices belong to the core of the shown drawing with respect to~$C$.
    (a) A drawing $\Gamma$ (non-dashed edges) is augmented with visibility edges (dashed) to obtain a visbility extension~$\Lambda$ with respect to $C$ (note that this augmentation is not unique).
    The thick (non-dashed or dashed) edges and filled vertices represent $\Lambda_\mathrm{core}$.
    (b) All filled round vertices are top crucial ears of $c_a$ and $c_b$.
    All of them are bounding ears except for $b$ and $c$.
    The vertex $a$ / $b$ / $c$ / $d$ is an outermost left / outermost right / innermost left / innermost right ear.
	}
    \label{fig:core}
\end{figure}

\begin{restatable}[\restateref{lem:size-of-core}]{lemma}{SizeCore}
Let $\mathcal G=(G=(V,E),\gamma)$ be a level graph without isolated vertices, let~$C$ be a vertex cover of~$G$, let~$\Gamma$ be a level 
planar drawing of~$\mathcal G$, let~$\Lambda$ be a refined or non-refined visibility extension of~$\Gamma$ with respect to~$C$, and let~$\Lambda_{\mathrm core}$ the 
subdrawing of~$\Lambda$ induced by the core of~$\Lambda$ with respect to~$C$.
Then~$\Lambda_{\mathrm{core}}$ is connected and has $\mathcal O(k)$ vertices and $\mathcal O(k)$ edges, where $k=|C|$.
\label{lem:size-of-core}
\end{restatable}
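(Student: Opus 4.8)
The plan is to bound separately the three ingredients of the core --- the set $C$, the set $V_{\ge 3}(C)$, and the set of crucial ears --- and then to establish connectivity by showing that the drawing $\Lambda_C$ induced by $C$ alone is already connected and that every crucial ear attaches to it. The first two ingredients are immediate: $|C|=k$ by definition, and $|V_{\ge 3}(C)|\le 2k$ by \cref{lem:xg2:Vgeq3InKexp3} (note that planarity of $G$, hence of $G'$ in the refined case since $G'$ is drawn crossing-free, is available). So the real work is to bound the number of crucial ears by $\mathcal O(k)$, and this is the step I expect to be the main obstacle.

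For the crucial-ear count I would argue per unordered pair $\{c_a,c_b\}\subseteq C$, and then bound the number of relevant pairs. Fix such a pair and consider its left ears in $\Lambda$ (the right ears are symmetric). By the discussion preceding the lemma, all left ears of $c_a,c_b$ are either all top or all bottom and are nested, so there is exactly one outermost and one innermost left ear --- a contribution of at most $2$ per side, hence at most $4$ outermost/innermost ears per pair. For the bounding ears: each pair of matching bounding ears delimits an interior face $f$ of the subdrawing $\Gamma'$ (induced by the edges incident to left ears of $c_a,c_b$) whose open region contains a vertex of $C$; since the left ears are nested, these faces are the "gaps" between consecutive nested ears, they are pairwise interior-disjoint, and a face can only be charged if it privately hosts a vertex of $C$ --- so the number of matching bounding pairs for a fixed side of a fixed pair is at most $|C|-2=k-2$, giving $\mathcal O(k)$ bounding ears per pair. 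This alone would only yield $\mathcal O(k^3)$ overall, so the crux is to control the number of pairs $\{c_a,c_b\}$ that possess any left (or right) ear \emph{at all}, and to do the charging globally rather than pair-by-pair. The key observation is that by \cref{lem:VC-connected} the subdrawing $\Lambda_C$ has only $\mathcal O(k)$ edges, so it has only $\mathcal O(k)$ faces; each pair $\{c_a,c_b\}$ having a left ear $v$ with, say, the edge $c_bv$ drawn to the left of $c_a$ corresponds to $c_a$ and $c_b$ lying on a common face of $\Lambda_C$ (the face into which the ear and its edges are drawn), and conversely each face of $\Lambda_C$ can be "responsible" for the crucial ears of only a bounded number of pairs --- I would make this precise by charging each bounding ear's private vertex of $C$ together with the face of $\Lambda_C$ it sits in, so that the total over all pairs telescopes to $\mathcal O(k)$. (In the refined case the bound $|V'\setminus V|\in\mathcal O(|C|)$ is part of the definition, so the crucial-ear count there is $\mathcal O(k)$ by fiat, and only the non-refined case needs this charging argument; the edge count follows since $\Lambda_{\mathrm{core}}$ is a planar graph on $\mathcal O(k)$ vertices.)

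For connectivity I would proceed as follows. By \cref{lem:VC-connected}, $\Lambda_C$ is connected. Every vertex of $V_{\ge 3}(C)$ has at least three neighbors in $C$ (its whole neighborhood lies in $C$), so it attaches to $\Lambda_C$. Every crucial ear $v$ of a pair $c_a,c_b$ is by definition adjacent to both $c_a,c_b\in C\subseteq\mathrm{core}$, and its two incident edges belong to $\Lambda$ hence to $\Lambda_{\mathrm{core}}$; so $v$ is connected to the rest of the core through $c_a$. Hence $\Lambda_{\mathrm{core}}$ is connected. Finally, $\mathcal O(k)$ vertices and planarity (for the refined case, via the crossing-free drawing $\Lambda'$) give $\mathcal O(k)$ edges.

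The main obstacle, as noted, is the global charging that upgrades the naive $\mathcal O(k^3)$ bound on crucial ears to $\mathcal O(k)$; the clean way to do it is to route the argument entirely through the face structure of $\Lambda_C$ and the fact (\cref{lem:VC-connected}) that this structure has linear complexity, charging each bounding ear to a (distinct, privately hosted) vertex of $C$ paired with the $\Lambda_C$-face containing it, and charging outermost/innermost ears to the $\mathcal O(k)$ faces of $\Lambda_C$ directly.
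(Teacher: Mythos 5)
Your overall architecture matches the paper's: connectivity via \cref{lem:VC-connected} plus the observation that every core vertex outside $C$ is adjacent to $C$; the bounds $|C|=k$ and $|V_{\ge 3}(C)|\le 2k$; reducing the refined case to the definition of a refined visibility extension; and the recognition that the whole difficulty is a \emph{global} $\mathcal O(k)$ bound on crucial ears. But precisely at that point --- the step you yourself flag as the main obstacle --- your argument is not actually carried out, and the two sketches you offer each have a concrete defect. First, for outermost/innermost ears, your claim that ``each face of $\Lambda_C$ can be responsible for the crucial ears of only a bounded number of pairs'' is false: a face of $\Lambda_C$ with $t$ vertices of $C$ on its boundary can host outermost ears for $\Theta(t)$ distinct pairs, so the per-face count is not $\mathcal O(1)$ (one could instead sum boundary lengths over faces, but you do not do this). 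The paper avoids faces of $\Lambda_C$ entirely: it builds an auxiliary drawing $\Pi$ on the vertex set $C'\subseteq C$ with one edge per outermost left ear, routed along the two edges of that ear; since each pair has at most one outermost left ear, $\Pi$ is a simple planar graph, and Euler's formula gives at most $3k-6$ such ears in total. This single planarity argument replaces your entire ``control the number of pairs'' step.

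Second, and more seriously, for the bounding ears your charging is not injective. A matching pair with region $r$ is charged to ``a vertex of $C$ that $r$ privately hosts,'' but regions of matching bounding pairs belonging to \emph{different} cover-pairs $\{c_a,c_b\}$ and $\{c_a',c_b'\}$ can be nested, all containing the same vertex of $C$ in their interiors, so there need not be any private vertex, and the naive charge can hit one vertex $\Omega(k)$ times; appending ``the face of $\Lambda_C$ it sits in'' does not repair this (a vertex of $C$ is a vertex of $\Lambda_C$, not interior to one of its faces, and in any case the pairing adds no injectivity). The missing idea is the paper's maximality argument: charge the pair with region $r$ to a $C$-vertex on the boundary of an inclusion-\emph{maximal} matching-bounding region $r'\subsetneq r$ (such a boundary vertex is interior to $r$ and distinct from $c_a,c_b$), or to an arbitrary interior $C$-vertex if no such $r'$ exists; one then shows via the nesting of regions that every vertex of $C$ receives at most one charge overall, and that no vertex on the outer face is charged, giving at most $k-2$ matching pairs and $\mathcal O(k)$ bounding ears. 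Your word ``telescopes'' asserts exactly this conclusion without supplying the mechanism. Until these two counting arguments are made precise, the $\mathcal O(k)$ vertex bound --- and hence, via planarity, the edge bound --- is not established for the non-refined case.
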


\section{Algorithm}
\label{sec:algorithm}

In this section, we describe the algorithm corresponding to the proof of \cref{thm:fpt}.
Let $\mathcal G=(G=(V,E),\gamma,(\prec_i)_i)$ be a constrained level graph and let~$C$ be a vertex cover of~$G$.
Our goal is to construct a constrained level planar drawing of~$\mathcal G$ or correctly report that such a drawing does not exist.
In view of \cref{lem:isolated}, we may assume that~$\mathcal G$ has no isolated vertices.
To construct the desired drawing, we proceed in three main steps.
In Step~1,
 we ``guess'' a core-induced subdrawing of a refined visibility extension of a constrained level planar drawing of~$\mathcal G$ with 
respect to~$C$ (via the process of exhaustive enumeration).
In Step~2,
  we augment our drawing by inserting the transition vertices of~$\mathcal G$ with respect to~$C$.
In Step~3,
  we finalize our drawing by inserting the leaves and ears of~$\mathcal G$ with respect to~$C$.

\subparagraph{Step 1: Guessing a core-induced subdrawing.}
Assume that there is a constrained level planar drawing~$\Gamma^*$ of~$\mathcal G$, let~$\Lambda^*$ be a refined visibility extension of~$\Gamma^*$ with 
respect to~$C$, and let~$\Lambda_{\mathrm{core}}$ be the subdrawing of~$\Lambda^*$ induced by the core of~$\Lambda^*$ with respect to~$C$.
The procedures corresponding to Steps~2 and~3 of our algorithm are guaranteed to produce a constrained level planar drawing (not necessarily~$\Gamma^*$) 
of~$\mathcal G$ when given $\Lambda_{\mathrm{core}}$.
Hence, the goal of Step~1 is to determine (or, rather, guess) $\Lambda_{\mathrm{core}}$, given~$\mathcal G$ and~$C$.
More precisely, we will construct a set~$\mathcal F$ of drawings such that $\Lambda_{\mathrm{core}}\in \mathcal F$ and the number 
$|\mathcal F|$ of drawings in~$\mathcal F$ is sufficiently small.
For each drawing in~$\mathcal F$, we then apply Steps~2 and~3 of the algorithm (incurring a factor of $|\mathcal F|$ in the total running time).
Given that $\Lambda_{\mathrm{core}}\in \mathcal F$, one of the iterations is guaranteed to terminate with a constrained level planar drawing of~$\mathcal G$.

\begin{restatable}[\restateref{lem:step1}]{lemma}{StepOne}
Let $\mathcal G=(G=(V,E),\gamma,(\prec_i)_i)$ be a constrained level graph without isolated vertices, let~$C$ be a vertex cover of~$G$, and let~$\Gamma^*$ be a constrained level 
planar drawing of~$\mathcal G$.
There is an algorithm that, given~$\mathcal G$ and~$C$, 
constructs a set~$\mathcal F$ of  
$2^{\mathcal O(k\log k)}$
drawings in  
$2^{\mathcal O(k\log k)}\cdot n^{\mathcal O(1)}$ 
time, where
$n=|V|$ and $k=|C|$,
such that all drawings in~$\mathcal F$ have size~$\mathcal O(k)$ and are level planar drawings of subgraphs of~$G$ induced by
$C$ and $V_{\ge 3}(C)$ 
that respect $\gamma$ and the orderings $\prec_i$
and are augmented by some visibility edges and additional ears (with respect to $C$).
Further, there exists a refined visibility extension $\Lambda^*$ of $\Gamma^*$ such that the subdrawing~$\Lambda_{\mathrm{core}}$ of $\Lambda^*$ induced by the core of~$\Lambda^*$ with respect to~$C$ is contained in~$\mathcal{F}$.
\label{lem:step1}
\end{restatable}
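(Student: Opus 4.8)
The plan is to construct $\mathcal F$ by exhaustive enumeration of all combinatorial data that can describe a core-induced subdrawing. Throughout, set $k=|C|$. Since $\mathcal G$ admits a level planar drawing, its underlying graph $G$ is planar (if not, any $\mathcal F$ satisfies the statement vacuously), so by \cref{lem:xg2:Vgeq3InKexp3} the set $D:=C\cup V_{\ge 3}(C)$ has size at most $3k$ and is computable in polynomial time, and $\gamma$ fixes the level of every vertex of $D$. By \cref{lem:size-of-core} together with the definition of a refined visibility extension, the core of any refined visibility extension $\Lambda^*$ of a constrained level planar drawing with respect to $C$ consists of $D$ together with $\mathcal O(k)$ \emph{new} ears (vertices outside $\mathcal G$, each adjacent to exactly two vertices of $C$), and the induced subdrawing $\Lambda_{\mathrm{core}}$ is connected and has $\mathcal O(k)$ edges. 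Hence it suffices to enumerate a family of $2^{\mathcal O(k\log k)}$ candidate drawings that is guaranteed to contain $\Lambda_{\mathrm{core}}$ for at least one choice of $\Lambda^*$.

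I would guess, in this order: (i)~a \emph{profile}, namely the number $t=\mathcal O(k)$ of new ears, for each new ear an unordered pair of neighbours in $C$ and a top/bottom type, and a total preorder on $D$ together with the $t$ new ears that restricts to the $\gamma$-order on $D$ (fixing, for each new ear, the gap between consecutive levels of $D$ in which it lies and the relative order of new ears sharing a gap); (ii)~the edge multiset $M$ of the candidate, which consists of $E(G[D])$, the $2t$ edges joining the new ears to their $C$-neighbours, and $\mathcal O(k)$ further \emph{visibility} edges, each recorded as an unordered pair of vertices of $C$; and (iii)~a linear order $\prec^{\mathrm e}$ on $M$. Each underlying ground set has size $\mathcal O(k)$ (and there are only $\binom k2$ possible visibility pairs), so there are $2^{\mathcal O(k\log k)}$ guesses in total. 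From a guess I realize the preorder by concrete real levels $\gamma'$ extending $\gamma$, discard it if it fails a routine sanity check (the result is not a level graph; a top/bottom type is inconsistent with $\gamma'$; the left-to-right order forced on $D$ on some level violates $\prec_i$; the guessed graph is disconnected), and otherwise invoke the fact recalled in \cref{sec:types} that a level graph together with a compatible edge ordering determines a unique level planar drawing, reconstructible in polynomial time by traversing $\prec^{\mathrm e}$ in reverse. If the reconstruction succeeds and produces a level planar drawing that respects $\gamma'$ and $\prec_i$ in which the new ears are ears and the remaining $\mathcal O(k)$ edges of $M$ are visibility edges with respect to $C$, we add it to $\mathcal F$. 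This yields $|\mathcal F|=2^{\mathcal O(k\log k)}$, every drawing of size $\mathcal O(k)$, in total time $2^{\mathcal O(k\log k)}\cdot n^{\mathcal O(1)}$ (the $n^{\mathcal O(1)}$ accounting for the one-time computation of $D$ and the per-guess processing of the input-dependent constraints $\prec_i$).

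What remains, and what I expect to be the main obstacle, is to exhibit a single refined visibility extension $\Lambda^*$ of $\Gamma^*$ whose core-induced subdrawing $\Lambda_{\mathrm{core}}$ is among the drawings added to $\mathcal F$. I would start from an arbitrary non-refined visibility extension $\Lambda_0$ of $\Gamma^*$ with respect to $C$, whose core already has size $\mathcal O(k)$ by \cref{lem:VC-connected} and \cref{lem:size-of-core}, and then process each crucial ear $v$ of $\Lambda_0$ --- an actual vertex of $\mathcal G$ realizing an outermost, innermost, or bounding ear of some pair $c_a,c_b\in C$ --- by inserting one or two \emph{clone} ears of $c_a,c_b$ drawn immediately beside $v$ (just outside and/or just inside $v$), which take over the extremal or bounding role of $v$ and thereby turn $v$ into a non-crucial vertex. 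Because a clone is a source or a sink, it only has to lie strictly above (respectively below) both $c_a$ and $c_b$ and to occupy the same inter-$D$-level gap as $v$; therefore its level can be chosen from the bounded set of gap representatives, which is precisely what makes the profile of $\Lambda^*$ one of the enumerated profiles. The delicate points --- to be settled by a case analysis over the ear categories --- are that the clones must be placed close enough to the ears they replace that no \emph{new} extremal or bounding ear is created, and that after all replacements the crucial ears of the resulting drawing $\Lambda^*$ are precisely the $\mathcal O(k)$ inserted clones; together with the facts that the number of new ears is $\mathcal O(k)$ and that deleting the clones recovers $\Lambda_0$, this certifies that $\Lambda^*$ is a refined visibility extension of $\Gamma^*$. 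For this $\Lambda^*$, the profile of $\Lambda_{\mathrm{core}}$, its edge multiset (edges of $G[D]$, clone edges, and visibility edges between vertices of $C$), and a compatible edge ordering of $\Lambda_{\mathrm{core}}$ are all among the objects enumerated above, so the corresponding iteration reconstructs $\Lambda_{\mathrm{core}}$ and places it in $\mathcal F$, completing the proof.
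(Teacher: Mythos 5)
Your overall architecture matches the paper's: enumerate the visibility edges, the combinatorial data of the $\mathcal O(k)$ new ears, and a compatible edge ordering; reconstruct and filter each candidate; and separately prove existence of a suitable refined visibility extension $\Lambda^*$ by planting one or two clone ears next to each crucial ear of a visibility extension of $\Gamma^*$. The gap lies in the step you yourself flag as ``delicate'' --- it is not merely delicate but the crux of the lemma. Your claim that a clone ``only has to lie strictly above (respectively below) both $c_a$ and $c_b$ and to occupy the same inter-$D$-level gap as $v$,'' so that its level can be drawn from $\mathcal O(k)$ gap representatives of $D=C\cup V_{\ge 3}(C)$, is unjustified and false in general. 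A single gap between consecutive levels of $D$ may contain $\Theta(n)$ vertices of $\mathcal G$ (ears, leaves, transition vertices), and the admissible levels for a clone are constrained relative to \emph{these}: to become the outermost ear of $u,w$ it must lie above all (possibly $\Theta(n)$) ears of $u,w$ in that gap; for its two edges to be insertable into $\Gamma^*$ without crossings it must not overshoot an obstacle of $\Gamma^*$ blocking the upward channel from $v$ (e.g., an ear of another pair encountered between two of its incoming edges); and a bounding clone must enclose the correct vertices of $C$. None of this is determined by the $D$-gap, so a generic gap representative is in general not the level of a clone in \emph{any} refined visibility extension of $\Gamma^*$ --- and the exact level matters, both because the lemma demands that $\Lambda_{\mathrm{core}}$ itself lie in $\mathcal F$ and because Steps~2 and~3 insert the remaining vertices of $\mathcal G$ at their prescribed levels relative to the core. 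Your fallback of placing clones ``immediately beside $v$'' does not repair this: $\gamma(v)$ is one of $\Theta(n)$ ear levels, and anchoring to it yields only an XP enumeration.

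The paper closes exactly this hole. It first normalizes the visibility extension and then gives a four-case placement procedure guaranteeing that every clone lands directly above or below a level from $\mathcal O(k)$ explicitly listed categories: levels of $C$, of $V_{\ge 3}(C)$, of previously placed clones, of global extremes of $\mathcal G$, of extremal (top-/bottom-most) ears of pairs of cover vertices, and of extremes of connected components containing a cover vertex after deleting the clone's two neighbours. Several of these categories refer to vertices outside $D$, and bounding their number by $\mathcal O(k)$ requires \cref{lem:size-of-core} and the observation at the end of the paper's Case~2; this is precisely what turns the enumeration from XP into $2^{\mathcal O(k\log k)}$. Your proof needs an analogous placement-and-anchoring argument before the existence half of the lemma goes through; the enumeration half is otherwise sound and essentially identical to the paper's.
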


\begin{proof}[Proof sketch]
We introduce the following terminology:
let $x$ be a vertex in a level planar drawing (possibly augmented by some horizontal edges).
Let $\ell$ be the y-coordinate of~$x$ and let~$\ell'$ be the largest y-coordinate of a vertex below $x$ (if there no such vertex, we set~$\ell'=\ell-1$).
We say that the line $L_{(\ell+\ell')/2}$ is \emph{directly below} $x$.
The line \emph{directly above} $x$ is defined symmetrically.

We proceed in two main steps.
In the first main step, we show that there \emph{exists} a refined visibility extension~$\Lambda^*$ of~$\Gamma^*$.
To this end, we start with a visibility extension~$\Lambda$ of~$\Gamma^*$ and describe an incremental strategy that performs a total of~$\mathcal O(k)$ augmentation steps, in each of which a new ear is added that takes over the role of a crucial ear in~$\Gamma^*$.
(The description of this first main step is deferred to the appendix.)
In the second main step, we discuss the construction of the desired family~$\mathcal F$.
To this end, let~$\Lambda_{\mathrm{core}}$ be the subdrawing of $\Lambda^*$ induced by the core of~$\Lambda^*$ with respect to~$C$.
The drawing $\Lambda_{\mathrm{core}}$ is uniquely described by $\mathcal G$, $C$, the set of visibility edges of $\Lambda_{\mathrm{core}}$ (and $\Lambda^*$), the set of crucial ears of $\Lambda_{\mathrm{core}}$ (and $\Lambda^*$) together with their level assignments and their incident edges, and a compatible edge ordering of the nonhorizontal edges of $\Lambda_{\mathrm{core}}$.
The graph~$\mathcal G$, as well as the vertex cover $C$ are given, so it suffices to enumerate all possible options for the remaining elements.

There are $m_{\mathrm{vis}}\in \mathcal O(k)$ visibility edges by \cref{lem:size-of-core}
and each of these visibility edges joins a pair of vertices in $C$.
Hence, there are at most $\binom{k}{2}^{m_{\mathrm{vis}}}\subseteq k^{\mathcal O(k)}\subseteq 2^{\mathcal O(k\log k)}$ possible options for choosing the set of visibility edges.
To enumerate the set of crucial ears along with their level assignments,
we mimic the aforementioned incremental strategy for constructing~$\Lambda^*$:
we first enumerate all options to pick the pair of neighbors of the first new vertex along with its level,
then, for each of these options, we enumerate all options to pick the pair of neighbors of the the second vertex along with its level, etc.,
until we have obtained all options to pick the desired $\mathcal O(k)$ vertices together with their levels.
More precisely, suppose we have already enumerated all options to pick the first~$i$ vertices together with their neighbors and levels.
For each of these options, to enumerate all options to pick the next vertex~$v'$, we go through all ways to pick its two neighbors~$u,w\in C$ and through all ways to pick the level of~$v'$.
There are $\mathcal O(k)$ pairs of vertices in~$C$ with ears (by \cref{lem:size-of-core}).
To bound the number of ways to pick the level of~$v'$, we make use of the fact that whenever the incremental strategy for constructing~$\Lambda^*$ places a new vertex~$v'$, it is assigned to a new level directly above or below a level of one of the following categories:
\begin{itemize}
\item a level with a vertex in $C$ ($\mathcal O(k)$ possibilities),
\item a level with a vertex in $V_{\ge 3}(C)$ ($\mathcal O(k)$ possibilities by \cref{lem:xg2:Vgeq3InKexp3}),
\item a level of a vertex that does not belong to~$\mathcal G$, i.e., a level used for
one of the already placed vertices ($\mathcal O(i)\subseteq \mathcal O(k)$ possibilities),
\item a level with a top-most or bottom-most vertex of $\mathcal G$ ($\mathcal O(1)$ possibilities),
\item a level with a top-most top ear, a top-most bottom ear, a bottom-most top ear, or a bottom-most bottom ear of some pair of vertices in~$C$ ($\mathcal O(k)$ possibilities by \cref{lem:size-of-core}),
\item a level with a top-most or bottom-most vertex of a connected component that contains a vertex of~$C$ in the graph obtained by removing $u$ and $w$ from the current graph ($\mathcal G$ together with the visbility edges and the already added vertices) ($\mathcal O(k)$ possibilities).
\end{itemize}
In total, for a fixed pair of neighbors~$u,w$, there are thus $\mathcal O(k)$ options to pick a level for~$v'$.
We immediately discard level assignments for which~$v'$ is no ear.
By multiplying with the number of ways to choose the neighbors, we obtain~$\mathcal O(k^2)$ options to choose~$v'$ and its level.
Multiplying the number of options for all $\mathcal O(k)$ steps together, we obtain a total number of $k^{\mathcal O(k)}\subseteq 2^{\mathcal O(k\log k)}$ ways to create the set of crucial ears along with their levels.
By multiplying with the number of ways to choose the visibility edges, we obtain a total of $2^{\mathcal O(k\log k)}$ options to choose the graph that corresponds to~$\Lambda_{\mathrm{core}}$.
For each of these options we enumerate all $k^{\mathcal O(k)}\subseteq 2^{\mathcal O(k\log k)}$ permutations of the set of non-horizontal edges and, interpreting the permutation as a compatible edge order, try to construct a level planar drawing for which this order is compatible (cf.\ \cref{sec:types}).
If we succeed,
we check whether the drawing is conform with $(\prec_i)_i$ and can be augmented with the horizontal visibility edges. If so,
we include the drawing in the set~$\mathcal F$ of reported drawings.
The size of the thereby constructed set~$\mathcal F$ is bounded by~$2^{\mathcal O(k\log k)}$ and it is guaranteed to contain $\Lambda_{\mathrm{core}}$ by construction.
\end{proof}

\subparagraph{Step 2: Inserting transition vertices.}
We now describe how to insert the transition vertices into the core-induced subdrawing~$\Lambda_\mathrm{core}$ of the (refined) visibility extension~$\Lambda^*$.
Our plan is to first show that in~$\Lambda^*$, every transition vertex is placed ``very close  to'' some visibility edge.
Intuitively, this means that the visibility edges of~$\Lambda_\mathrm{core}$ act as placeholders near which the transition vertices have to be placed.
We will describe a procedure that does so while carefully taking into account the given partial orderings~$\prec_i$ and prove its correctness by means of an (somewhat technical) exchange argument.
To formalize the notion of ``very close to'', 
let $e$ be an edge of a level planar drawing joining two vertices $a,b$
such that there is a degree-2 vertex $t$ with neighbors $a,b$ and
$\gamma'(b)<\gamma'(t)<\gamma'(a)$
where $\gamma'$ is the level assignment.
We say that~$t$ is drawn in the \emph{vicinity} of~$e$ with respect to a vertex set~$V'$ if the simple closed curve formed by~$e$ and the two edges incident to~$t$ does not contain a vertex of~$V'$ in its interior.

\begin{restatable}[\restateref{lem:insertTransition}]{lemma}{StepTwo}
\label{lem:insertTransition}
Let $\mathcal G=(G=(V,E),\gamma,(\prec_i)_i)$ be a constrained level graph without isolated vertices, let~$C$ be a vertex cover of~$G$, let~$\Gamma^*$ be a constrained level 
planar drawing of~$\mathcal G$, let~$\Lambda^*$ be a refined or non-refined visibility extension of~$\Gamma^*$ with respect to~$C$, and let~$\Lambda_{\mathrm core}$ the subdrawing
of~$\Lambda^*$ induced by the core of~$\Lambda^*$ with respect to~$C$.
There is an algorithm that, given~$\mathcal G$, $C$ and~$\Lambda_{\mathrm core}$,
inserts all transition vertices~$V_{=2}^{\mathrm t}(C)$ (and their incident edges) into vicinities (with respect to~$C$) of visibility edges in~$\Lambda_{\mathrm {core}}$ in polynomial time such that the resulting 
drawing $\Lambda^\mathrm{t}_{\mathrm {core}}$ can be 
extended to a drawing whose restriction to~$G$ is a constrained level planar drawing of~$\mathcal G$.
\end{restatable}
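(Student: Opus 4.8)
The plan is to prove \cref{lem:insertTransition} in two parts: first a structural characterization of where transition vertices must sit in $\Lambda^*$, and then an explicit reinsertion procedure operating on $\Lambda_{\mathrm{core}}$ whose correctness is certified by an exchange argument against $\Gamma^*$.

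\emph{Structural part.} I would first show that in $\Lambda^*$ every transition vertex $t\in V_{=2}^{\mathrm t}(C)$ with neighbors $c_a,c_b\in C$, say $\gamma(c_a)>\gamma(t)>\gamma(c_b)$, is drawn in the vicinity (with respect to $C$) of an edge $e_t$ of $\Lambda_{\mathrm{core}}$ that joins $c_a$ and $c_b$ -- a visibility edge, or the edge $c_ac_b$ itself if it is present in $G$. This follows from the maximality requirement in the definition of a visibility extension: consider the face $f$ of $\Lambda^*$ incident to $t$ on, say, the left of the $y$-monotone path $c_b\,t\,c_a$. Since $c_a,c_b\in C$ lie on $\partial f$, and since $t$ (being a degree-$2$ non-cover vertex) as well as any leaves, ears, or further transition vertices poking into $f$ are spikes that cannot obstruct a $C$-incident chord, maximality forces $\partial f$ to consist of $c_bt$, $tc_a$ and a single $c_ac_b$-edge $e_t$: any additional cover vertex or nontrivial boundary piece inside $f$ would admit a further $C$-incident visibility edge without crossings, and the only way such a $c_ac_b$ visibility edge cannot be added is when the edge $c_ac_b\in E$ already bounds $f$. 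As $c_a,c_b$ (and hence $e_t$) belong to the core, $e_t$ is an edge of $\Lambda_{\mathrm{core}}$, and the triangular region bounded by $e_t$, $c_at$, $c_bt$ contains no cover vertex, i.e.\ $t$ lies in the vicinity of $e_t$ with respect to $C$.

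\emph{Algorithmic part.} Given $\mathcal G$, $C$, and $\Lambda_{\mathrm{core}}$, the algorithm proceeds pair by pair: for each pair $c_a,c_b\in C$ that has transition vertices it reads off from $\Lambda_{\mathrm{core}}$ the left-to-right sequence of $c_ac_b$-edges together with the cover and degree-$\ge 3$ vertices separating consecutive parallel ones (by \cref{lem:size-of-core} there are only $\mathcal O(k)$ such edges in total). This fixes a set of \emph{slots}, each being an $(\text{edge }e,\text{ side})$ pair with a known position on every level that $e$ crosses. By the structural part every transition vertex of $c_a,c_b$ must go into one such slot, nested there among the other transition vertices assigned to it. The remaining combinatorial task is to choose, for each transition vertex, a slot and a nesting rank so that on every level the induced left-to-right order of \emph{all} vertices extends $\prec_i$: within a single slot this amounts to sorting by level with ties broken by $\prec_i$, and the set of slots feasible for $t$ is delimited by the $\prec_i$-relations between $t$ and the cover / degree-$\ge 3$ vertices sitting between the $c_ac_b$-edges. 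I would resolve the residual freedom by a fixed canonical rule (e.g.\ always take the leftmost feasible slot and fill slots greedily from the outside inward), insert the transition vertices and their edges accordingly to obtain $\Lambda^{\mathrm t}_{\mathrm{core}}$, and then verify the $\prec_i$ and that all horizontal visibility edges survive; all of this is a constant number of scans over the levels, hence polynomial.

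\emph{Correctness via exchange.} To show that $\Lambda^{\mathrm t}_{\mathrm{core}}$ extends to a drawing whose restriction to $G$ is constrained level planar, I would start from $\Gamma^*$ (equivalently, from the drawing underlying $\Lambda^*$ with its visibility edges retained) and transform it by a sequence of local moves until its transition vertices are placed exactly as the algorithm placed them, while the subdrawing induced by the core and the visibility edges stays equal to $\Lambda_{\mathrm{core}}$. A single move pulls a transition vertex $t$ out of its current vicinity and reinserts it into the canonical slot; since neither the source nor the target region contains a cover vertex, the move can be realized by a $y$-monotone, crossing-free curve, but it may sweep over leaves, ears, and other transition vertices in those faces, so the move must carry these along and re-route their short incident edges, and one must check that this never violates a $\prec_i$. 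I expect \emph{this} exchange argument to be the main obstacle -- in particular handling $\prec_i$-constraints that couple transition vertices belonging to different cover pairs on a common level (these link the slot choices of otherwise independent vertices), and simultaneously preserving the placeability of the still-missing leaves and ears, for which \cref{lem:insertTransition} only demands extendability that Step~3 subsequently realizes. The structural part and the slot bookkeeping are, by comparison, routine.
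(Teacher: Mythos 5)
Your structural part (every transition vertex of $c_a,c_b$ sits in the vicinity of a $c_ac_b$-edge of $\Lambda_{\mathrm{core}}$, by maximality of the visibility edges) matches the paper's first step. The rest of the proposal has two genuine gaps. First, your feasibility condition for slots is too weak: you delimit the admissible slots of a transition vertex $t$ only by $\prec_i$-relations between $t$ and ``cover / degree-$\ge 3$ vertices sitting between the $c_ac_b$-edges.'' But the objects that pin $t$ down are entire connected components of $\Lambda^*$ minus $\{c_a,c_b\}$ that contain a cover vertex (``fixed'' components), and these also contain leaves, ears and transition vertices of \emph{other} pairs; a constraint $t\prec_i u$ where $u$ is, say, a leaf inside a component fixed to region $r_3$ restricts $t$ even though $u$ is neither a cover nor a degree-$\ge 3$ vertex. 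Moreover constraints propagate transitively through chains of such vertices across several levels. The paper therefore builds a relation $R$ on \emph{components} (fixed subgraphs merged per region, plus the loose transition vertices and leaves), closes it transitively, proves that every cycle of $R$ consists only of fixed subgraphs and leaves confined to a single region (so it can be contracted), and orders the transition vertices by a linear extension of the result. Your ``leftmost feasible slot, fill greedily'' rule is not shown to succeed whenever a valid placement exists, and without the component-level relation it is not even clear it sees all the constraints it must respect.

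Second, and more importantly, the extendability claim --- the actual content of the lemma --- is not proved: you yourself flag the exchange argument as ``the main obstacle'' and list exactly the difficulties (cross-pair constraints on a common level, preserving placeability of the still-missing leaves and ears) without resolving them. The paper's exchange argument is structured differently from your one-vertex-at-a-time local moves: it processes the pairs $c_a,c_b$ one at a time, maintains inductively that the current partial drawing extends to (w.l.o.g.) $\Lambda^*$, and for the next pair rebuilds a witness drawing $\Lambda$ by keeping the exterior of the region $\Phi$ and all fixed subgraphs \emph{exactly} as in $\Lambda^*$ while redistributing only the leftover leaves and transition vertices of $c_a,c_b$ according to the linear extension; a violated constraint in $\Lambda$ is then traced back to a violated constraint in $\Lambda^*$. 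It also crucially drops the leaves from the final ordering used to place transition vertices (placing leaves by the same order can fail because of constraints to components fixed to the outer regions $r_1,r_{j+1}$), a subtlety your slot scheme does not address. As it stands, the proposal describes a plausible algorithm but does not establish that its output can be extended to a constrained level planar drawing of $\mathcal G$.
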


\subparagraph{Step 3: Inserting leaves and ears.}

In this step, we start with the output of Step~2 and finalize our drawing by placing all the vertices that are still missing.

\begin{lemma}
\label{lem:insertLeftovers}
Let $\mathcal G=(G=(V,E),\gamma,(\prec_i)_i)$ be a constrained level graph without isolated vertices, let~$C$ be a vertex cover of~$G$, let~$\Gamma^*$ be a constrained level 
planar drawing of~$\mathcal G$, let~$\Lambda^*$ be a refined or non-refined visibility extension of~$\Gamma^*$ with respect to~$C$
in which each transition vertex is placed in the vicinity of some visibility edge with respect to $C$,
and let $\Lambda_{\mathrm {core}}$ ($\Lambda^\mathrm{t}_{\mathrm {core}}$) be the subdrawing of $\Lambda^*$ induced by the core (and the transition vertices) of~$\Lambda^*$ with respect to~$C$.
There is an algorithm that, given~$\mathcal G$, $C$, $\Lambda_{\mathrm {core}}$ and~$\Lambda^\mathrm{t}_{\mathrm {core}}$, extends $\Lambda_{\mathrm {core}}$ to a drawing~$\Gamma$ whose restriction to $G$ is a constrained level planar drawing of~$\mathcal G$ in $2^{\mathcal O(k^2\log k)}n^{\mathcal O(1)}$ time, where $n=|V|,k=|C|$.
\end{lemma}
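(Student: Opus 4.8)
The plan is to finalize the drawing by inserting, into the drawing $\Lambda^{\mathrm t}_{\mathrm{core}}$ produced by Step~2, all the vertices that remain: the leaves $V_{=1}(C)$, the ears $V_{=2}^{\mathrm e}(C)$, and (afterwards, via \cref{lem:isolated}) the isolated vertices. The key structural observation I would establish first is that the drawing $\Lambda^{\mathrm t}_{\mathrm{core}}$ already fixes the \emph{combinatorial skeleton} into which these remaining vertices must be placed: by \cref{lem:VC-connected} and \cref{lem:size-of-core}, the subdrawing induced by $C$ (together with the visibility edges, the vertices of $V_{\ge 3}(C)$, the crucial ears, and the transition vertices) is connected and has $\mathcal O(k)$ vertices and $\mathcal O(k)$ edges; hence it has $\mathcal O(k)$ faces. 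Every leaf and ear of $\mathcal G$ with respect to $C$ that is not already present must be drawn inside exactly one of these $\mathcal O(k)$ faces, and moreover its incident edge(s) must connect it to the $\le 2$ vertices of $C$ on the boundary of that face to which it is adjacent. So the first step is: for each remaining vertex, guess which face $f$ of $\Lambda^{\mathrm t}_{\mathrm{core}}$ it lies in — but doing this independently for each of the up to $n$ vertices would be far too expensive, so instead I would argue that, within a fixed face $f$, all leaves of a fixed $c\in C$ (resp.\ all ears of a fixed pair $c_a,c_b$) that are placed in $f$ are forced into a \emph{canonical nested/consecutive arrangement} once one fixes a constant amount of ``routing information'' — essentially which side of which boundary path they hang off of. Since there are $\mathcal O(k)$ vertices in $C$, $\mathcal O(k^2)$ relevant pairs, and $\mathcal O(k)$ faces, the total number of such routing choices is $k^{\mathcal O(k)}\cdot (\text{choices per face})$; crucially the arrangement \emph{within} a face, given the routing choice, is determined up to the freedom of linearly ordering equivalent items, which is exactly where the constraints $\prec_i$ come in.

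The second step handles the constraints. Having guessed, for every face $f$ and every relevant $c$ or pair $c_a,c_b$, how the corresponding leaves/ears attach, I would process each level $V_i$ independently: the vertices of $V_i$ fall into ``anchored'' vertices (those in the core or placed in Step~2, whose left-to-right order along $L_i$ is already determined by $\Lambda^{\mathrm t}_{\mathrm{core}}$) and ``free'' vertices (leaves and ears on level $i$). Each free vertex, by the guessed routing information, is confined to an interval of $L_i$ between two consecutive anchored items, and within that interval the only remaining freedom is the relative order of the free vertices sharing that interval. I would then check feasibility by, for each such interval, verifying that $\prec_i$ restricted to the free vertices assigned to that interval — together with the partial-order relations forced between free and anchored vertices — admits a linear extension; this is a topological-sort check, done in polynomial time, and simultaneously yields the actual ordering. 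If every interval on every level passes, we obtain a concrete level planar embedding; we then invoke \cref{lem:isolated} to reinsert the isolated vertices, and output $\Gamma$. If some guess fails, we move to the next one; since Step~1 guarantees $\Lambda_{\mathrm{core}}\in\mathcal F$ and Steps~2--3 are guaranteed to succeed on the correct $\Lambda_{\mathrm{core}}$, at least one guess in the overall enumeration yields a valid drawing.

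For the running time: the routing information per face is, roughly, a choice for each of the $\mathcal O(k)$ vertices of $C$ / $\mathcal O(k^2)$ pairs of ``which face'' and ``which side'', i.e.\ $\mathcal O(k)$-many choices out of $\mathcal O(k)$ options each at the level of faces, giving $k^{\mathcal O(k)}$, but the ears contribute a second factor: an ear of $c_a,c_b$ placed in a face $f$ can be nested relative to the other ears of $c_a,c_b$ in $f$ in a way that interacts with which side of the boundary it uses, and bounding this requires a per-pair, per-face nesting pattern, of which there are $2^{\mathcal O(k)}$ for each of the $\mathcal O(k^2)$ pairs — here is where the $k^2$ in the exponent appears, yielding $k^{\mathcal O(k^2)} = 2^{\mathcal O(k^2\log k)}$ guesses, each processed in $n^{\mathcal O(1)}$ time by the topological-sort checks and the reconstruction of \cref{lem:isolated}, for a total of $2^{\mathcal O(k^2\log k)}n^{\mathcal O(1)}$. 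The main obstacle I anticipate is proving the \emph{canonicity} claim in the first step: showing that once the $\mathcal O(k)$- (resp.\ $\mathcal O(k^2)$-)bit routing information is fixed, the placement of the remaining leaves and ears is determined up to the constraint-driven reordering within intervals — in particular, arguing that no leaf or ear needs to be routed in a topologically complicated way around the core, and that the interaction between ears of different pairs sharing a face is still captured by the guessed data. This is where a careful exchange argument, analogous in spirit to the one used for \cref{lem:insertTransition}, will be needed: given an arbitrary constrained level planar drawing extending $\Lambda^{\mathrm t}_{\mathrm{core}}$, one rearranges its leaves and ears into the canonical form without destroying planarity or violating the constraints, certifying that the enumeration does not miss a feasible instance.
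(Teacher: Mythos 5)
Your high-level plan (bound the number of discrete ``configurations'' by $2^{\mathcal O(k^2\log k)}$, then do polynomial work per configuration, finishing with \cref{lem:isolated}) has the right shape, but the core of the argument is missing, and the missing piece is in fact where almost all the work of the paper's proof goes. You identify the obstacle yourself: the ``canonicity'' claim that $\mathcal O(k^2)$ bits of per-face routing information, followed by \emph{independent per-level topological sorts}, suffices to recover a planar drawing. That claim is not true as stated, for a reason you do not address: the left-to-right order of the \emph{edges} incident to leaves is a global quantity that spans many levels and is not determined by face membership plus a side bit. Two leaves $u$ (on level $i$, adjacent to $c$) and $v$ (on level $j\neq i$, adjacent to $c'$) may be routed into the same face with the same ``routing data'', yet the edge of $v$ may pass through $L_i$ inside the same gap as $u$; whether that edge lies to the left or right of $u$ cannot be decided by a topological sort of $\prec_i$ alone, because the same edge is also constrained on level $j$ (and in between). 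Deciding each level independently can therefore produce per-level orders that cannot be stitched into a single crossing-free $y$-monotone drawing, and your check ``if every interval on every level passes, we obtain a concrete level planar embedding'' is exactly the unproven step.

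The paper resolves this by replacing both of your unfinished ingredients with a single global object. Instead of per-face routing bits, it builds a trapezoidal cell decomposition of $\Lambda_{\mathrm{core}}$ and groups cells into $\mathcal O(k^2)$ \emph{channels}; the discrete object that is enumerated in $2^{\mathcal O(k^2\log k)}$ time is not routing information per face but a \emph{traversal sequence} $\mathcal U=(\mathcal U_1,\dots,\mathcal U_m)$, $m\in\mathcal O(k^2)$, which encodes a left-to-right sweep of channel activations consistent with a compatible edge ordering of $\Lambda^*$ (Properties (T1)--(T2) and (C1)--(C2)). Instead of an independent topological sort per level, the per-level computation is a greedy \emph{insertion sequence} $(Q_1,\dots,Q_q)$ whose entries $(v,j)$ tie each leaf $v$ to an index $j$ of the global sweep, with invariants (I1)--(I5) plus an interval and a dominance property guaranteeing that the per-level decisions are mutually compatible. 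Ears are handled separately by enumerating $2^{2k}$ ear orientations per level (after an augmentation step that creates dedicated ear-faces, so each non-crucial ear has at most one left and one right candidate face). The final drawing is then assembled by sweeping $\ell=1,\dots,m$ globally and placing, at each sweep index, those leaves whose insertion-sequence index equals $\ell$. Your ``exchange argument analogous to \cref{lem:insertTransition}'' gestures at the right kind of reasoning, but it would have to be carried out against a precisely defined canonical form, and the paper's traversal/insertion machinery \emph{is} that definition; without it, there is no target for the exchange argument to rearrange toward. So the proposal is not a complete or alternative proof: the step you flag as ``the main obstacle'' is the lemma's actual content, and the per-level independence you rely on is where a proof attempt along these lines would break down.
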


\begin{proof}
The only vertices of $G$ that are missing in $\Lambda^\mathrm{t}_{\mathrm {core}}$ are the leaves and the non-crucial ears with respect to $C$ (in case $\Lambda^*$ is a \emph{refined} visibility extension, the non-crucial ears are exactly the ears of~$\mathcal G$).
Our plan to insert them into our drawing is as follows.
We begin by introducing more structure in $\Lambda^\mathrm{t}_{\mathrm {core}}$ and $\Lambda^*$ by adding some additional visibility edges and making some normalizing assumptions, which will simplify the description of the upcoming steps.
In particular, this step will ensure that for each missing ear, there are essentially only (up to) two possible placements, which will allow us to enumerate all possible ear placements (so-called ear orientations) on a given level in FPT-time.
We then describe a partition of the plane into so-called cells in a way that is very reminiscent of the well-known trapezoidal decomposition from the field of computational geometry,
cf.~\cite{DBLP:books/lib/BergCKO08}.
We merge some cells into so-called channels, which correspond to connected y-monotone regions in which the missing leaves along with their incident edges will be drawn (a region is \emph{y-monotone} if its intersection with every horizontal line is connected).
We then introduce (and describe an enumerative process that constructs in FPT-time) a so-called traversal sequence that is compatible with~$\Lambda^*$, which is a sequence of sets of channels with several useful structural properties related to~$\Lambda^*$.
In particular, this sequence, in some sense, sweeps the plane from left to right in a way where for each edge incident to a leaf in~$\Lambda^*$,
at some point there is a channel that contains it.
Exploiting the properties of the traversal sequence, we then describe how to construct a so-called insertion sequence for the leaves on a given level with respect to a given placement of the ears of that level in polynomial time.
Such an insertion sequence does not necessarily exist for every placement of ears, but we are guaranteed to find one by enumerating all possible ear placements of the level.
This computation is performed independently for each level.
Finally, we show how to construct in polynomial time the desired drawing~$\Gamma$ when given an insertion sequence
along with its ear placement for each level.
Notably, the final step can be executed even if  some of the ear placements are different from the ones used in~$\Lambda^*$.
Let us proceed to formalize these ideas.

\subparagraph{Augmenting and normalizing $\Lambda_{\mathrm {core}}$, $\Lambda^\mathrm{t}_{\mathrm {core}}$, and $\Lambda^*$.}
Let $e = (u,v)$ be a visibility edge of $\Lambda^*$ (and $\Lambda^\mathrm{t}_{\mathrm {core}}$) that has at least one transition vertex in its vicinity.
In both $\Lambda^\mathrm{t}_{\mathrm {core}}$ and $\Lambda^*$, we add two copies of~$e$; one directly to the left of the leftmost transition vertex and the other 
directly to the right of the rightmost transition vertex in the vicinity of $e$,
which is possible to do in a y-monotone fashion and without introducing any crossings;
see \cref{fig:cells}(a).
Note that the region enclosed by these two edges only contains transitions vertices of $u$ and $v$, as well as leaves of $u$ or $v$.
We repeat this operation for all visibility edges~$e$.

The following steps are illustrated in \cref{fig:cells}(b).
Let $f$ be a face of $\Lambda^{t}_\mathrm{core}$  that is bounded by four vertices $v_1, v_2, u_1, u_2$ where $v_1, v_2 \in C$ and $u_1, u_2 \in V_{\{v_1, v_2\}}$ are either both left ears or both right ears. 
Without loss of generality, assume they are both left ears with
$\gamma(v_1)\le \gamma(v_2)<\gamma(u_1)<\gamma(u_2)$.
We add copies of the edges $(v_1,u_2)$ and $(v_2,u_2)$ in~$f$ in both $\Lambda^\mathrm{t}_{\mathrm {core}}$ and $\Lambda^*$,
which can be done without introducing crossings.
These edges partition~$f$ into three regions.
Note that in~$\Lambda^*$ these regions only contain leaves and non-crucial ears.
Without loss of generality, we will assume that each leaf in~$f$ is either placed in the region bounded by $(v_1,u_2)$ and its copy or the region bounded by $(v_2,u_2)$ and its copy
(note that a leaf $v$ that is adjacent to $v_1$ cannot have a constraint
of the form $w\prec_i v$, where $w$ is a non-crucial ear in $f$;
the situation for leaves adjacent to $v_2$ is symmetric).
Thus, the remaining (central) third region only contains non-crucial ears
and is henceforth called an \emph{ear-face} of $\Lambda^\mathrm{t}_\mathrm{core}$.
We repeat this modification for all faces such as~$f$.

For the remainder of the proof, $\Lambda^\mathrm{t}_{\mathrm {core}}$ and $\Lambda^*$ are used to refer to the thusly augmented and normalized drawings.
We also add all the new edges to $\Lambda_{\mathrm {core}}$ and use $\Lambda_{\mathrm {core}}$ to refer to this augmentation.
Note that this implies that it now suffices to search for a drawing of~$\mathcal G$ in which every non-crucial ear is placed in an ear-face, whereas no leaf is placed in an ear-face.

\begin{figure}[tbh]
    \centering
    \includegraphics[]{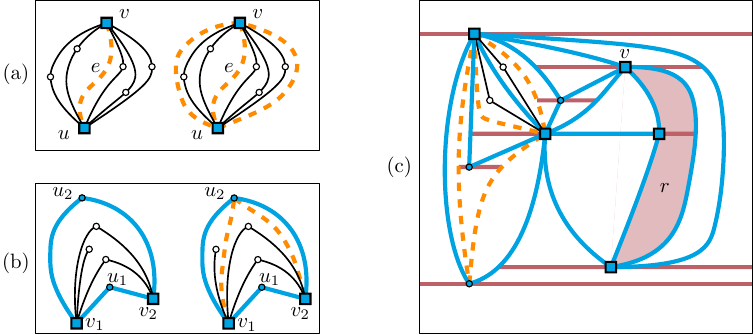}
    \caption{
    Like in all figures, filled square vertices belong to a vertex cover $C$ of the depicted graph and
    filled (round or square) vertices belong to the core of the shown drawing with respect to~$C$.
    Figures (a) and (b) visualize how $\Lambda_{\mathrm {core}}$, $\Lambda^\mathrm{t}_{\mathrm {core}}$, and $\Lambda^*$ are augmented by (a) enclosing transition vertices and (b) creating ear-faces with visibility edges (dashed). Moreover, (b) illustrates our normalizing assumption, i.e., the leaf can be moved to the exterior of the ear-face without violating any constraints.
    (c) The drawing $\Lambda^\mathrm{t}_{\mathrm {core}}$ with its additional visibility edges (dashed) from the augmentation step and the horizontal rays and segments (thick, red) from the cell decomposition.
	The shaded region corresponds to a channel $(v, r, R)$ from $v\in C$ to the cell $r$ with $|R| = 2$.
	}
    \label{fig:cells}
\end{figure}

\subparagraph{Decomposition into cells.}
We will now describe a partition of the plane that essentially corresponds to a
trapezoidal decomposition (cf.~\cite{DBLP:books/lib/BergCKO08}) of $\Lambda_{\mathrm {core}}$;
for illustrations refer to \cref{fig:cells}(c):
for each vertex $v$ in $\Lambda_{\mathrm {core}}$, shoot a horizontal ray from $v$ to the left until hitting an edge or vertex of $\Lambda_{\mathrm {core}}$, then add the corresponding segment to $\Lambda_{\mathrm {core}}$.
In case the ray does not intersect any part of $\Lambda_{\mathrm {core}}$, add the ray itself to $\Lambda_{\mathrm {core}}$.
Perform a symmetric augmentation by shooting a horizontal ray from $v$ to the right.
The maximal connected regions of the resulting partition of the plane are henceforth called \emph{cells}.
We consider the cells to be closed.
Note that each cell is y-monotone and bounded by up to two horizontal segments or rays and up to two y-monotone curves.
By \cref{lem:size-of-core}, $\Lambda_{\mathrm {core}}$ has $\mathcal O(k)$ vertices and edges (note that the augmentation step copies each edge at most twice) and, hence, it has $\mathcal O(k)$ faces.
Consequently, the number of cells is also $\mathcal O(k)$ since the insertion of a single segment or ray can only increase the number of faces (or, rather, maximal connected regions) by one.

\subparagraph{Channels.}
Let $v \in C$, let $R$ be a set of cells that do not belong to ear-faces, and let $r\in R$.
Further, assume that $R$ contains a cell that is incident to~$v$. 
The triple $c=(v,r,R)$ is a \emph{channel from $v$ to $r$}
 if it is possible to draw a y-monotone curve in $\Lambda^\mathrm{t}_{\mathrm {core}}$ in the interior of the union of $R$ that intersects  each cell in $R$ and does not cross any edge of $\Lambda^\mathrm{t}_{\mathrm {core}}$;
as illustrated in \cref{fig:cells}(c).
We say that $c$ \emph{can be used} by a leaf $w \in V_{\{v\}}$ and that the edge~$e_w$ incident to $w$ \emph{can be drawn in} $c$ if~$e_w$ can be drawn in $\Lambda^\mathrm{t}_{\mathrm {core}}$ in the union of $R$ without any crossings and there is no channel $(v,r',R')$ with this property for which $R'\subset R$.
Further, we say that $c$ \emph{is used} by $w$ if it can be used by $w$
and the edge incident to $w$ is drawn in the union of $R$ in $\Lambda^*$.
We use $U$ to denote the set of all channels and $U_\mathrm{used}\subseteq U$ to denote the set of all channels that are used.
The connectivity of $\Lambda^\mathrm{t}_{\mathrm {core}}$ (cf.\ \cref{lem:size-of-core}) can be used to show:

\begin{restatable}[\restateref{claim:channelnumber_bounded}]{claim}{ChannelNumber}
	\label{claim:channelnumber_bounded}
	$|U_{\mathrm{used}}|\le |U| \in \mathcal O(k^2)$.
\end{restatable}

\subparagraph{Traversal sequences.}
Let $\mathcal{U} =  (\mathcal{U}_1, \mathcal{U}_2, \dots, \mathcal{U}_m)$ be a sequence of sets of channels.
We say~$\mathcal{U}$ is a \emph{traversal sequence} if the following properties are fulfilled (see \cref{fig:traversal} for illustrations):
\begin{enumerate}[leftmargin=*,label={(T\arabic*)}]
	\item \label{T:unique_y_coord}
		Let $i\in [m]$ and let $(v,r,R)\in \mathcal{U}_i$ and $(v',r',R')\in\mathcal{U}_i$
		with $v\neq v'$.
		Then the intersection of the line $L_{\gamma(v)}$ with the interior of the union of $R'$ is empty.
	\item  \label{T:interval}
		Let $c$ be a channel and let $a \leq b$ be two indices such that $c \in \mathcal{U}_a$ and $c \in \mathcal{U}_b$. 
		Then for every $a \leq i \leq b$, $c \in \mathcal{U}_i$. 
\end{enumerate}
We say a channel $u$ is \emph{active} in $\mathcal{U}_i$ if it is contained in it, and otherwise it is \emph{inactive} in it. 
We say a traversal sequence $\mathcal U=(\mathcal{U}_1, \mathcal{U}_2, \dots, \mathcal{U}_m)$ is \emph{compatible} with $\Lambda^*$ if the following conditions are satisfied (refer again to \cref{fig:traversal} for illustrations): 
\begin{enumerate}[leftmargin=*,label={(C\arabic*)}]
	\item \label{c1:usedchannelsappear} For every channel $c$, there exists an $i\in [m]$ such that $c \in \mathcal{U}_i$ if and only if $c$ is used.
	\item \label{c2}  There exists a compatible edge ordering $\prec^{\mathrm{e}}$ for the restriction of $\Lambda^*$ to its nonhorizontal edges (recall that some visibility edges are horizontal) such that: 
		\begin{enumerate}
			\item 	\label{C2a:real_before}
				Let $e, e'$ be two edges that are incident to leaves and where $e \prec^{\mathrm{e}} e'$, let $c$ be the channel used 
				by $e$, and let $c'$ be the channel used by $e'$.
				Then there exist indices $i, i'$ such that $i \leq i'$ and $c \in \mathcal{U}_{i}$, $c' \in \mathcal{U}_{i'}$.
			\item \label{C2b:exclusivenes}
			Let $c_1,c_2\in U_{\mathrm{used}}$ such that for every edge $e_1$ using $c_1$ and for every edge $e_2$ using~$c_2$ we have $e_1 \prec^{\mathrm{e}} e_2$. Then there is no index $i\in [m]$ such that $\mathcal U_i$ contains both
				$c_1$ and $c_2$ (and every index for which $c_1$ is active is smaller than every index where $c_2$ is active).
			\item  \label{C2c:real_after}
				For every pair of used channels $c_1$, $c_2$ such that $c_2$ is being used by an edge $e$ that succeeds all edges that use $c_1$ in $\prec^{\mathrm{e}}$ there exists
				 an index $i$ such that $c_2 \in \mathcal{U}_i$ and $c_1 \notin \mathcal{U}_i$ (and $c_1$ is active for some index smaller than $i$).
		\end{enumerate}
\end{enumerate}

\begin{figure}[tbh]
\centering
    \includegraphics[page=3]{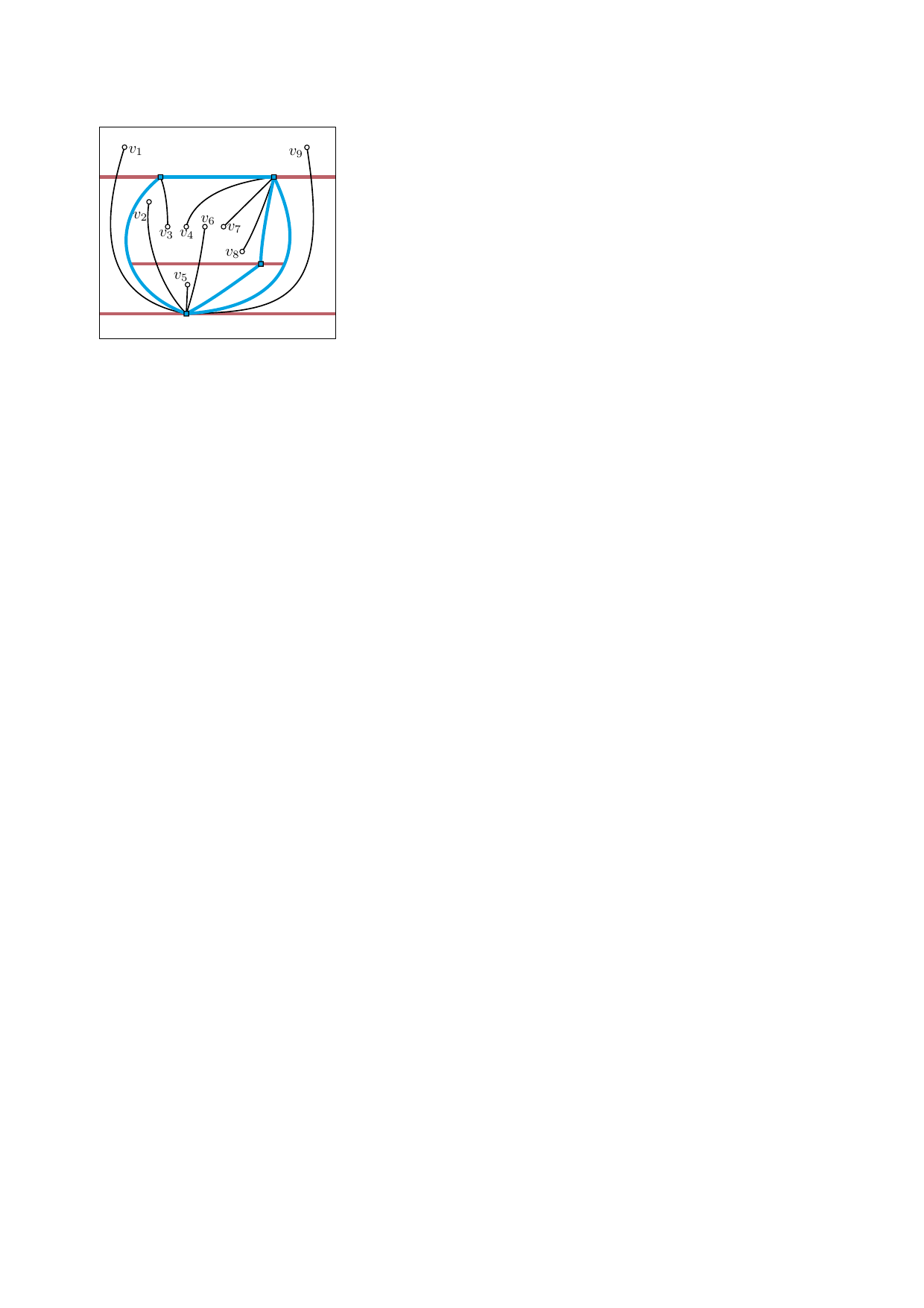}
  \caption{
  Like in all figures, filled square vertices belong to a vertex cover $C$ of the depicted graph.
  (a) The drawing $\Lambda^*$ together with its cell decomposition.
    (b)
    The dashed (green) ``edges'' represent the used channels $c_1 \dots, c_6$ of $\Lambda^*$ leading from vertices of $C$ to cells in $R$, which are represented by unfilled (red) squares.
    (c)
    A traversal sequence compatible with $\Lambda^*$ for which Property~\ref{c2} is satisfied for any compatible edge ordering that contains the edges of $v_1,v_2,\dots,v_9$ in this order.
    }
  \label{fig:traversal}
\end{figure}

\begin{restatable}[\restateref{cl:compat-traversal-exists}]{claim}{TraversalSequence}
\label{cl:compat-traversal-exists}
	There exists a traversal sequence $\mathcal U=(\mathcal{U}_1, \mathcal{U}_2, \dots, \mathcal{U}_m)$ that is compatible 
	with~$\Lambda^*$ and whose length is
	$m \in \mathcal O(k^2)$.
	Moreover,
	there is an algorithm that, given $\mathcal G$, $C$ and~$\Lambda^\mathrm{t}_{\mathrm {core}}$, computes a set of $2^{\mathcal O(k^2\log k)}$ traversal sequences that contains $\mathcal U$ in $2^{\mathcal O(k^2\log k)}n^{\mathcal O(1)}$ time.
\end{restatable}

\begin{restatable}[\restateref{claim:channels-unique-in-each-set}]{claim}{ChannelsUniqueInEachSet}
	\label{claim:channels-unique-in-each-set}
	Let $\mathcal U=(\mathcal{U}_1, \mathcal{U}_2, \dots, \mathcal{U}_m)$ be a traversal sequence that is compatible 
	with~$\Lambda^*$, let~$i\in [m]$, and let~$v$ be a leaf.
	Then $\mathcal U_i$ contains at most channel that can be used by~$v$.
\end{restatable}

\subparagraph{Ear orientations.}
Let $i\in [h]$ and let $V^\mathrm{e}_i\subseteq V_i$ be all non-crucial ears on the $i$th level.
Further, consider a mapping $s\colon V^\mathrm{e}_i \to \{\mathrm{left}, \mathrm{right}\}$.
We say that $s$ is an \emph{ear orientation} of level $i$.
We say that $s$ is \emph{valid} if it is possible to insert the ears~$V^\mathrm{e}_i$ (on the line $L_i$) along with their incident edges into $\Lambda^\mathrm{t}_{\mathrm {core}}$ such that
the resulting drawing~$\Lambda$ is crossing-free,
no constraint is violated (i.e., if $u\prec_i v$, then $u$ is placed to the left of~$v$),
and for every $v\in V^\mathrm{e}_i$, we have that $v$ is a left ear if and only if $s(v)=\mathrm{left}$.
We say that~$\Lambda$ is \emph{induced} by~$s$.
Note that  for any ear $v\in V^\mathrm{e}_i$ there is at most one left ear-face and at most one right ear-face into which it can be inserted without introducing crossings.
Hence, a valid ear orientation uniquely describes the ear-face in which each ear is placed.
Further, note that no two ears of $V^\mathrm{e}_i$ can be placed in the same ear-face without introducing crossings.
In contrast, whenever an ear orientation assigns only one ear to a given ear-face, it is possible to place the ear without introducing crossings.
These properties make it easy to test whether a given ear orientation is valid and, if so, construct the (unique) induced drawing in polynomial time.
In view of \cref{cl:2k-ears-per-level}, this means we can enumerate all valid ear orientations of a given level in $2^{2k}n^{\mathcal O(1)}$ time.

\subparagraph{Insertion sequences.}
Let $\mathcal U=(\mathcal{U}_1, \mathcal{U}_2, \dots, \mathcal{U}_m)$ be a traversal sequence that is compatible 
	with~$\Lambda^*$.
Further, let $i\in [h]$, let $s$ be a valid ear orientation of level $i$, and let~$\Lambda$ be its induced drawing. 
Finally, let $\mathcal{Q} =(Q_1, Q_2, \dots, Q_q)$ be a sequence with $Q_t = (v, j)$, $v \in V_i \cap V_{=1}$, and $1 \leq j \leq m$ for all $1 \leq t \leq q$.
We say $\mathcal{Q}$ is an \emph{insertion sequence} for $i$, $s$, and  $\mathcal U$ if the following conditions are fulfilled (for an example, see \cref{fig:insertion}):
\begin{enumerate}[leftmargin=*,label={(I\arabic*)}]
	\item \label{i1} 
		Let $v \in V_i \cap V_{=1}$. Then there exists at most one index $t$ such that $v \in Q_t$.
	\item \label{i2} 
		Let $Q_x  = (v_x, j_x)\in \mathcal{Q}$ and $Q_y = (v_y, j_y) \in \mathcal{Q}$ with $x < y$. Then $j_x \leq j_y$.
	\item \label{i3} 
		Let $Q_x= (v, j) \in \mathcal{Q}$. Then there exists a channel $c \in \mathcal{U}_j$ that can be used by $v$.
	\item \label{i4} 
		Let $Q_x= (v, j)\in \mathcal{Q}$. Then for every $w \in V_i \cap V_{=1}$ with $w \prec_i v$, there exists an 
		index $x' < x$ such that $w \in Q_{x'}$.
	\item \label{i5} 
		Let $Q_x= (v, j)\in \mathcal{Q}$ and let $(v,r,R) \in \mathcal{U}_j$ be the (unique, by \cref{claim:channels-unique-in-each-set}) channel usable by $v$ in $\mathcal{U}_j$. Then for every 
		$w \in V_i \setminus V_{=1}$ that is not a transition vertex in $r$ and where $v \prec_i w$, $w$ is 
		to the right of $r$ or on the right boundary of $r$.
		 Symmetrically, for every 
		$w \in V_i \setminus V_{=1}$ that is not a transition vertex in $r$ and where $w \prec_i v$, $w$ is
		to the left of $r$ or on the left boundary of $r$.
\end{enumerate}

\noindent Let $\mathcal{Q} = (Q_1, Q_2, \dots, Q_q)$ be an insertion sequence for $i$, $s$, and $\mathcal U$.
We say a leaf $v\in V_i \cap V_{=1}$ is \emph{choosable} with regard to $\mathcal{Q}$ if (i) there exists an index $j$, such that $(Q_1, Q_2, \dots, Q_q, (v, j))$ is an 
insertion sequence for $i$, $s$, and $\mathcal U$ as well and (ii) there exists no pair $v', j'$,  with $v' \in V_i \cap V_{=1}$ and $j' < j$ such that 
$\mathcal{Q} = (Q_1, Q_2, \dots, Q_q, (v', j'))$ is an insertion sequence. 

\begin{figure}[tbh!]
    \centering
    \includegraphics[width = \textwidth, page=3]{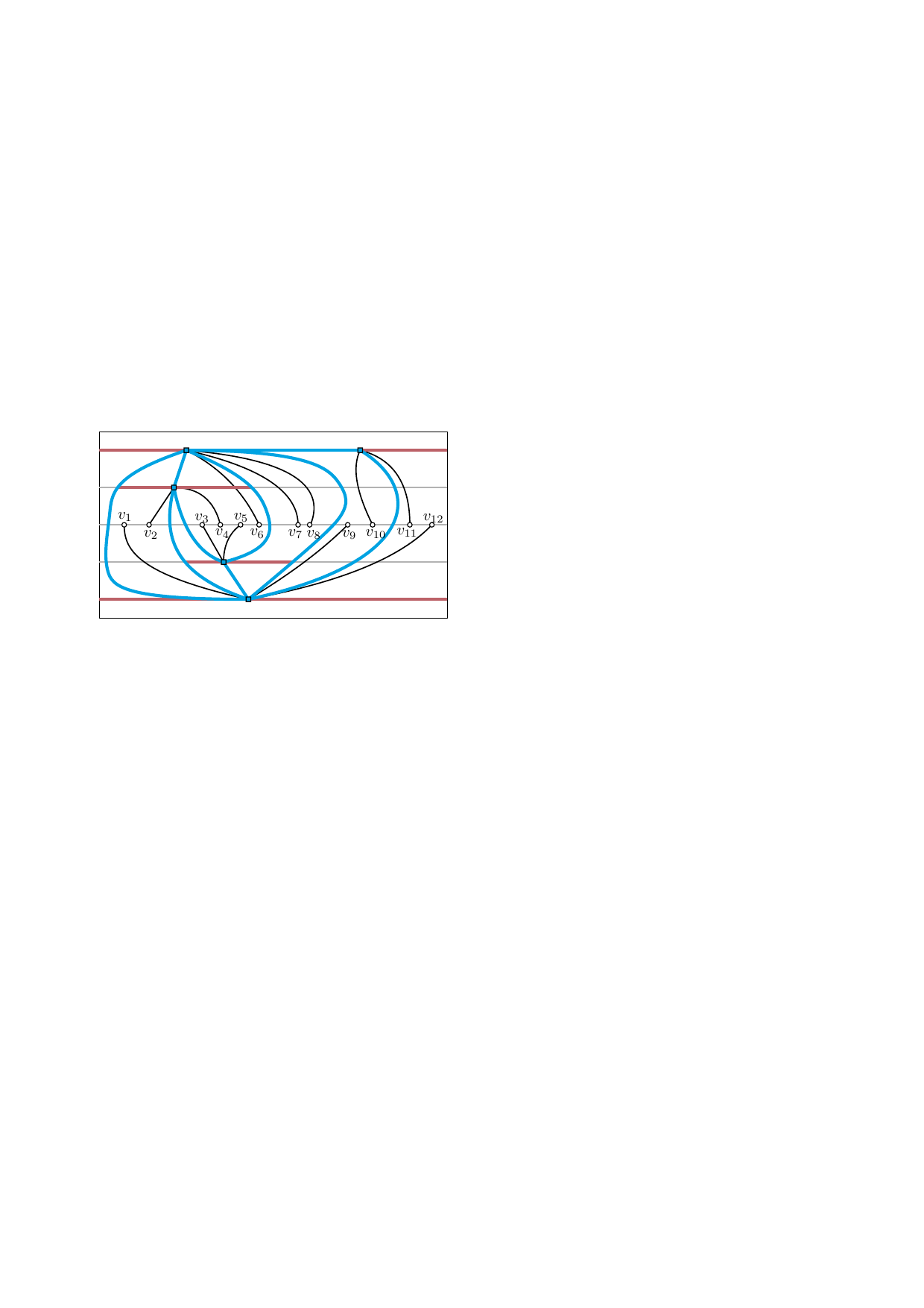}
  \caption{
  Like in all figures, filled square vertices belong to a vertex cover $C$ of the depicted graph.
  (a) The drawing $\Lambda^*$ together with its cell decomposition.
    (b)
    The dashed (green) ``edges'' represent the used channels $c_1, \dots, c_{10}$ of $\Lambda^*$ leading from vertices of $C$ to cells in $R$ represented by unfilled (red) squares.
Figure (c) shows the constraints between the vertices of the third level depicted in (a) and Figure~(d) illustrates the insertion sequences $\mathcal Q=((v_1, 1), (v_2, 2), (v_3, 3), (v_4, 4), (v_5, 4), (v_6, 6), (v_7, 6), (v_8, 6), (v_9, 9), (v_{12}, 9), (v_{11}, 10), (v_{10}, 10))$ for this level and the depicted traversal sequence $\mathcal U=(\mathcal U_1, \dots, \mathcal U_{12})$.
Note that the order of the vertices in $\mathcal Q$ is different from the one in $\Lambda^*$.
}
  \label{fig:insertion}
\end{figure}

\begin{restatable}[\restateref{cl:insertion-sequence-exists}]{claim}{InsertionSequenceExists}
\label{cl:insertion-sequence-exists}
Let $\mathcal U=(\mathcal{U}_1, \mathcal{U}_2, \dots, \mathcal{U}_m)$ be a traversal sequence that is compatible 
	with~$\Lambda^*$ and let $\prec^{\mathrm{e}}$ be a compatible edge ordering for the restriction of $\Lambda^*$ to its nonhorizontal edges, for which Property~\ref{c2} is fulfilled for $\mathcal{U}$.
Further, let $i\in [h]$ and let $s$ be the valid ear orientation of level~$i$ that is used in~$\Lambda^*$.
For every $q \in \{0, 1, \dots, |V_i \cap V_{=1}|\}$, there exists an insertion sequence 
$\mathcal{Q}_q = (Q_1, Q_2, \dots, Q_q)$ for $i$, $s$, and $\mathcal U$
such that the following two properties are satisfied (for an example, see \cref{fig:insertion}):

\noindent \emph{Interval property:}
For every vertex $v \in V_i \cap V_{=1}$
there exists at most one nonempty maximal interval $[a, b]$ where $0\le a\le b\le q$ such that $v$ is choosable with regard to $\mathcal Q_j$ if 
and only if $a \leq j \leq b$. If such an interval $[a, b]$ exists, then $b=q$ or $v\in Q_{b+1}$.
Conversely, if $v$ occurs in some entry of $\mathcal Q_q$, then the interval exists. 

\noindent \emph{Dominance property:} Let $\prec^l_i$ be the restriction of $\prec^{\mathrm{e}}$ to edges incident to leaves on level~$i$. Further, let $Q_\ell = (v_\ell, j_\ell)\in \mathcal Q_q$, let $v_{\ell'} \in V_i \cup V_{=1}$, and let $e_\ell$ ($e_{\ell'}$) be the edge incident to $v_\ell$ ($v_{\ell'}$). Then, if $e_\ell \prec^l_i e_{\ell'}$ or $v_\ell  = v_{\ell'}$, we have $j_\ell \le j'$, where $j'$ is the maximum index such that the channel $c'$ used by $v_{\ell'}$ (in $\Lambda^*$) is in $\mathcal U_{j'}$.

\noindent Moreover, $\mathcal{Q}_k$ is a prefix of $\mathcal{Q}_{k+1}$ for  all $0 \leq k \leq |V_i \cap V_{=1}|-1$.
Finally, there is an algorithm that,
given $\mathcal G$, $C$, $\Lambda^\mathrm{t}_{\mathrm {core}}$,  $\mathcal U$, $i$ and~$s$, 
computes 
$\mathcal{Q}_{|V_i \cap V_{=1}|}$ in polynomial time.
\end{restatable}

\subparagraph{Computing the drawing.}
When given a traversal sequence that is compatible with~$\Lambda^*$,
we can utilize \cref{cl:insertion-sequence-exists} to obtain a valid ear orientation together with an insertion sequence for a given level by simply trying to apply the algorithm corresponding to \cref{cl:insertion-sequence-exists} for all valid ear orientations of that level.
We do so for each level and then use the gathered information to construct the desired drawing by means of the following claim.
We remark that when the algorithm corresponding to \cref{cl:insertion-sequence-exists} successfully terminates, it is guaranteed to return an insertion sequence for the given valid ear orientation.
It might output an insertion sequence even if the given valid ear orientation is not the one used in~$\Lambda^*$.
However, this does not invalidate our strategy as the following claim does \emph{not} require that the given ear orientations are the ones used in~$\Lambda^*$.

\begin{restatable}[\restateref{cl:generate-drawing-from-ear-orientation}]{claim}{GenerateDrawingFromEarOrientation}
\label{cl:generate-drawing-from-ear-orientation}
There is an algorithm that, given $\mathcal G$, $C$, $\Lambda_\mathrm{core}$, $\Lambda^\mathrm{t}_{\mathrm {core}}$, 
a traversal sequence $\mathcal U=(\mathcal{U}_1, \mathcal{U}_2, \dots, \mathcal{U}_m)$ that is compatible 
	with~$\Lambda^*$,
 and, for each level $i\in [h]$, a valid ear orientation $s^i$, as well as an insertion sequence $\mathcal{Q}^i =(Q^i_1, Q^i_2, \dots, Q^i_{q^i})$
for $i$, $s^i$, and $\mathcal U$
such that $q^i = |V_i \cap V_{=1}|$ (that is, $\mathcal{Q}^i$ contains all leaves of level $i$),
computes an extension of $\Lambda_\mathrm{core}$ whose restriction to $G$ is a constrained level planar drawing 
of~$\mathcal {G}$ in polynomial time.
\end{restatable}

\subparagraph{Wrap-up.}
In the beginning of (and throughout the) proof of \cref{lem:insertLeftovers},
we have already sketched how the individual pieces of the proof fit together.
We formally summarize our strategy in the proof of the following claim.

\begin{restatable}[\restateref{cl:summaryStep3}]{claim}{SumStepThree}
\label{cl:summaryStep3}
There is an algorithm that, given~$\mathcal G$, $C$, $\Lambda_\mathrm{core}$ and~$\Lambda^\mathrm{t}_{\mathrm {core}}$, extends $\Lambda_{\mathrm {core}}$ to a drawing $\Gamma$ whose restriction to $G$ is a constrained level planar drawing of~$\mathcal G$ in $2^{\mathcal O(k^2\log k)}n^{\mathcal O(1)}$ time.
\end{restatable}

\noindent This concludes the proof of \cref{lem:insertLeftovers}.
\end{proof}

\subparagraph{Summary.}
In the beginning of \cref{sec:algorithm}, we have already sketched how \cref{lem:isolated,lem:step1,lem:insertTransition,lem:insertLeftovers} can be combined to obtain 
\cref{thm:fpt}.
We formally summarize:

\begin{restatable}[\restateref{thm:main-result}]{theorem}{Main}
\label{thm:main-result}
There is an algorithm that, given a constrained level graph $\mathcal G=(G=(V,E),\gamma,(\prec_i)_i)$ and a vertex cover~$C$ of~$G$, either constructs 
a constrained level planar drawing of~$\mathcal G$ or correctly 
 reports that such a drawing does not exist in time $2^{\mathcal O(k^2\log k)}\cdot n^{\mathcal O(1)}$, where $n=|V|$ and $k=|C|$.
\end{restatable}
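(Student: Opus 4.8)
The plan is to assemble the algorithm from the four ingredients established above and to verify that their guarantees chain together correctly. First I would dispose of isolated vertices via \cref{lem:isolated}: let $\mathcal G'=(G',\gamma',(\prec'_i)_i)$ be obtained from $\mathcal G$ by deleting its isolated vertices and restricting $\gamma$ and the partial orders accordingly; since any vertex cover of $G$ is also a vertex cover of $G'$, we may keep $C$. By \cref{lem:isolated} it suffices to decide $\mathcal G'$ and, in the affirmative case, to produce a constrained level planar drawing of $\mathcal G'$, which then lifts to $\mathcal G$ in polynomial time. From here on I assume the input has no isolated vertices.

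Next I would run the Step~1 procedure of \cref{lem:step1} on $(\mathcal G,C)$, obtaining in $2^{\mathcal O(k\log k)}\cdot n^{\mathcal O(1)}$ time a family $\mathcal F$ of $2^{\mathcal O(k\log k)}$ drawings, each of size $\mathcal O(k)$. For each $\Lambda\in\mathcal F$ I would then try, with $\Lambda$ playing the role of $\Lambda_{\mathrm{core}}$: the Step~2 procedure of \cref{lem:insertTransition} to insert the transition vertices, obtaining a candidate $\Lambda^{\mathrm t}$; followed by the Step~3 procedure of \cref{lem:insertLeftovers}, applied to $\Lambda$ and $\Lambda^{\mathrm t}$, obtaining a candidate drawing $\Gamma$ in $2^{\mathcal O(k^2\log k)}\cdot n^{\mathcal O(1)}$ time. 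Finally I would test in polynomial time whether the restriction of $\Gamma$ to $G$ really is a constrained level planar drawing of $\mathcal G$ --- i.e.\ that it is crossing-free, every edge is $y$-monotone, every vertex sits at its prescribed level, and every $\prec_i$ is respected. If so, I lift it to $\mathcal G$ using \cref{lem:isolated} and return it; if no $\Lambda\in\mathcal F$ produces a drawing that passes this test, I report that $\mathcal G$ admits no constrained level planar drawing.

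For correctness, suppose $\mathcal G$ admits a constrained level planar drawing $\Gamma^*$. By \cref{lem:step1} there is a refined visibility extension $\Lambda^*$ of $\Gamma^*$ whose core-induced subdrawing $\Lambda_{\mathrm{core}}$ belongs to $\mathcal F$. In the iteration that feeds this particular $\Lambda_{\mathrm{core}}$ to Step~2, \cref{lem:insertTransition} guarantees a drawing $\Lambda^{\mathrm t}_{\mathrm{core}}$ placing every transition vertex in the vicinity of a visibility edge and extendable to a drawing whose restriction to $G$ is a constrained level planar drawing of $\mathcal G$; then \cref{lem:insertLeftovers}, applied to $\Lambda_{\mathrm{core}}$ and $\Lambda^{\mathrm t}_{\mathrm{core}}$, actually produces such a drawing $\Gamma$, which therefore passes the verification. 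Conversely, the explicit check makes the algorithm sound: it returns a drawing only after confirming it is a constrained level planar drawing of $\mathcal G$ (in particular, a spurious $\Gamma$ arising from a ``wrong'' member of $\mathcal F$, or from a no-instance, is simply discarded), so it reports non-existence only when $\mathcal G$ is genuinely a no-instance. The running time is $2^{\mathcal O(k\log k)}\cdot n^{\mathcal O(1)}$ to build $\mathcal F$, times $|\mathcal F| = 2^{\mathcal O(k\log k)}$ iterations, each costing $2^{\mathcal O(k^2\log k)}\cdot n^{\mathcal O(1)}$ (Step~3 dominates Step~2 and the verification), plus the polynomial overhead of \cref{lem:isolated}; since $k\log k = \mathcal O(k^2\log k)$, this is $2^{\mathcal O(k^2\log k)}\cdot n^{\mathcal O(1)}$ as claimed.

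The only genuine subtlety in this final assembly --- all the substantive work having been absorbed into \cref{lem:step1,lem:insertTransition,lem:insertLeftovers} --- is that Steps~2 and~3 are invoked on \emph{every} member of $\mathcal F$, including drawings that are the core-induced subdrawing of no refined visibility extension of any constrained level planar drawing of $\mathcal G$; for such inputs the hypotheses of \cref{lem:insertTransition,lem:insertLeftovers} (which presuppose a drawing $\Gamma^*$) need not hold, so the returned $\Gamma$ carries no guarantee. This is precisely why the pipeline terminates with an explicit polynomial-time validity check: it filters out such spurious outputs while, by the argument above, it is certain to accept in the iteration corresponding to the ``correct'' $\Lambda_{\mathrm{core}}$ whenever $\mathcal G$ is a yes-instance. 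Hence the final step is pure bookkeeping, and \cref{thm:fpt} follows as the special case where a vertex cover is supplied (or computed in FPT time by the standard branching algorithm when only the parameter is given).
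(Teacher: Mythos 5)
Your proposal is correct and follows essentially the same pipeline as the paper's proof: eliminate isolated vertices via \cref{lem:isolated}, enumerate the candidate cores via \cref{lem:step1}, and for each candidate run the procedures of \cref{lem:insertTransition,lem:insertLeftovers}, accepting in the iteration corresponding to the true $\Lambda_{\mathrm{core}}$. Your explicit final validity check is a sensible way to make the soundness of the "wrong candidate" iterations airtight, but it does not change the argument in substance.
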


Given that a smallest vertex cover can be obtained in FPT-time with respect to its size~\cite{DBLP:books/sp/CyganFKLMPPS15}, our main result (\cref{thm:fpt}) follows from \cref{thm:main-result}.

\section{Discussion}
\label{sec:conclusion}

We have shown that \CLP is FPT when parameterized by the vertex cover number.
A speed-up to polynomial time or a generalization to the smaller graph parameters (in particular, tree-depth, path-width, tree-width, and feedback vertex set number) is not possible, even if restricting to OLP or proper CLP instances.

Recall from \cref{sec:intro} that in the \textsc{Level Planarity} variant \PLPlong (\PLP),
a given level planar drawing of a subgraph of the input graph~$\mathcal G$ has to be extended to a
full 
drawing of~$\mathcal G$, 
which can be seen as a generalization of \OLP and, in the proper case, a specialization of \CLP.
An instance of \PLP can always be turned into an equivalent proper instance (and, thus, a \CLP instance) by subdividing each 
edge on each level it passes through.
However, in general this operation will (dramatically) increase the vertex cover number of the instance.
Hence, our techniques cannot (directly) be applied.
It thus is an interesting problem to study the parameterized complexity of \PLP with respect to the vertex cover number.



\bibliography{olp-bib}

\newpage

\appendix

\section{Material omitted from \cref{sec:types} (Preliminaries)}

\Isolated*
\label{lem:isolated*}

\begin{proof}
For each level~$V_\ell$, the drawing~$\Gamma'$ induces a linear order~$<_\ell'$ on $V_\ell'$ that extends $\prec'_\ell$.
Consider the order $<_\ell=(<_\ell'\cup \prec_\ell)$.
We claim that~$<_\ell$ is a partial order, implying that the desired drawing can be obtained by performing a topological sort on each level.

To prove the claim, assume towards a contradiction that~$<_\ell$ contains a cycle~$\delta$ of the form $v_1 <_\ell v_2 <_\ell \dots <_\ell v_i < v_1$.
Without loss of generality, we may assume that $<_\ell$ contains no cycle involving fewer vertices.
Since $<_\ell'$ and $\prec_\ell$ are partial orders, the cycle~$\delta$ contains at least one isolated vertex and at least one non-isolated vertex.
Without loss of generality, we may assume that~$v_1$ is isolated and~$v_i$ is non-isolated.
Let~$j$ ($<i$) be the largest index such that $v_1,v_2,\dots,v_j$ are all isolated.
Since~$\prec_\ell'$ only involves non-isolated vertices, it follows that $v_i \prec_\ell v_1 \prec_\ell v_2 \prec_\ell \dots \prec_\ell v_{j+1}$ and, further, 
that $j+1<i$ since~$\prec_\ell$ is a partial order.
However, this implies that $v_{j+1}<_\ell v_{j+2} <_\ell \dots <_\ell v_i <_\ell v_{j+1}$ is a cycle in~$<_\ell$ involving fewer vertices than~$\delta$; a contradiction.
So indeed, $<_\ell$ is a partial order, as claimed.
\end{proof}

\TwoK*
\label{cl:2k-ears-per-level*}

\begin{proof}
Let $i\in [h]$.
We will use a charging argument to show that $V_i$ contains at most $2|C|$ ears.
To this end, let $\Gamma'$ denote the subdrawing of~$\Gamma$ that contains only the ears in $V_i$, as well as all vertices in~$C$.
Let~$v$ be a vertex in~$C$ that is closest to the line~$L_i$.
By planarity, this vertex can be adjacent to at most two ears of~$V_i$.
We charge these ears to~$v$ and then delete them, as well as~$v$ from~$\Gamma'$.
We repeat this strategy until all vertices in~$C$ (and, thus, all ears in $V_i$) have been deleted.
Every ear is charged to exactly one vertex in~$C$ and every vertex in~$C$ is charged at most twice, which implies the claim.
\end{proof}

\section{Material omitted from \cref{sec:vis-core} (Visibility extensions and cores)}

\Cconnected*
\label{lem:VC-connected*}

\begin{proof}
We begin by proving the connectivity of~$\Lambda_C$.
Let~$\Lambda'$ denote the (level planar) subdrawing of~$\Lambda$ that is induced by the non-horizontal edges.
Let $S=(e_1,e_2,\dots,e_{m'})$ be a sequence of edges corresponding to a compatible edge ordering of~$\Lambda'$.
We will show inductively that for each $1\le i\le m'$ the vertices of the subset~$V_{E_i}\subseteq C$ of vertices in~$C$ that are incident to at least one 
of the edges in $E_i=\{e_1,e_2,\dots,e_i\}$ belong to the same connected component of~$\Lambda_C$,
which establishes the connectivity of~$\Lambda_C$ since $V_{E_{m'}}=C$.

Clearly the claim holds for $i=1$.
Now let $1\le j<m'$ and assume that the claim holds for each $i\le j$.
We will show that the claim also holds for $i=j+1$.
Clearly, this is the case when $e_{j+1}$ is incident to a vertex of $V_{E_i}$, so assume otherwise.
Let~$v$ be a vertex of~$e_{j+1}$ that belongs to~$C$.
Consider the horizontal ray~$r_v$ emanating from~$v$ to the left.
To prove the claim, it suffices to show that $v$ belongs to the connected component of~$\Lambda_C$ that contains~$V_{E_i}$.

If~$r_v$ intersects~$\Lambda'$ only in~$v$, then~$\Lambda$ contains a visibility edge that joins~$v$ with a vertex of~$C$ incident to~$e_1$ (by 
definition of $S$, $e_1$ is visible from the left in~$\Lambda$).
In this case, the claim follows since this vertex belongs to~$V_{E_i}$.

So assume that~$r_v$ intersects~$\Lambda'$ not only in~$v$.
Let~$p$ denote the first point $\neq v$ of~$\Lambda'$ that is encountered when traversing~$r_v$ from~$v$.
Since there are no isolated vertices, $p$ is located on some edge~$e_t$.
Moreover, by definition of~$S$, $e_{j+1}$, and $v$, it follows that~$t<j+1$.
The definition of~$p$ asserts that there is a visibility edge between $v$ and a vertex of~$C$ incident to~$e_t$.
The claim follows since this vertex belongs to~$V_{E_i}$, which concludes the proof of the connectivity of~$\Lambda_C$.

To bound the number of edges, assume that~$\Lambda_C$ contains at least three edges (otherwise, the statement holds).
The drawing~$\Lambda_C$ may contain parallel edges, however, the definition of~$\Lambda$ and the connectivity of~$\Lambda_C$
assert that each 
face of~$\Lambda_{\mathrm{core}}$ (except for possibly the outer face, which could be bounded by two parallel edges) has at least three edges on its boundary.
Consequently, Euler's polyhedron formula yields that $\Lambda_{\mathrm{core}}$ has at most $3k-6+1\in \mathcal O(k)$ edges (the $+1$ accounts for the special role of the outer face).
\end{proof}

\SizeCore*
\label{lem:size-of-core*}

\begin{proof}
We will prove the statement for non-refined visbility extensions; the statement for refined visibility extensions then follows easily by definition.
Recall that the core of~$\Gamma$ with respect to~$C$ is the subset of~$V$ that contains $C$, $V_{\ge 3}(C)$, as well as all crucial vertices of~$\Gamma$ with respect to~$C$.

\subparagraph{Connectivity.}
We begin by showing that~$\Lambda_{\mathrm{core}}$ is connected.
By \cref{lem:VC-connected}, the subdrawing~$\Lambda_C$ of~$\Lambda$ (and $\Lambda_{\mathrm{core}}$) that is induced by~$C$ is connected.
Moreover, since~$C$ is a vertex cover, every vertex of $V_{\ge 3}(C)$, as well as every crucial vertex is adjacent to some vertex of~$C$.
Consequently, $\Lambda_{\mathrm{core}}$ is connected, as claimed.

\subparagraph{Counting crucial vertices.}
Recall that the crucial vertices~$\Gamma$ with respect to~$C$ are the ears that are outermost, innermost, or bounding ears of some pair of vertices of~$C$.
We will examine the counts of these different types of vertices individually.

\subparagraph{Counting outer/innermost ears.}
We begin by bounding the number of vertices that are outermost left ears.
Let $V_{\mathrm{outer}}^{\mathrm{left}}\subseteq V_{=2}(C)$ be the set of all these ears.
Let~$C'\subset C$ be the set of vertices that are adjacent to at least one ear in  $V_{\mathrm{outer}}^{\mathrm{left}}$.
We will now construct a graph drawing~$\Pi$ as follows:
the vertex set of~$\Pi$ is~$C'$ and its vertices are placed exactly as in~$\Lambda$.
Let $v\in V_{\mathrm{outer}}^{\mathrm{left}}$ and let $c_a,c_b\in C'$ denote the neighbors of~$v$.
We add an edge between $c_a,c_b$ in~$\Pi$ and use the simple curve formed by $v$ and its incident edges in~$\Lambda$ to draw it.
This process is repeated for every ear $v\in V_{\mathrm{outer}}^{\mathrm{left}}$.
The resulting drawing~$\Pi$ has $|C'|\le k$ vertices and $|V_{\mathrm{outer}}^{\mathrm{left}}|$ edges, which are in one-to-one correspondence 
with the ears in $V_{\mathrm{outer}}^{\mathrm{left}}$.
Moreover, the drawing is by construction planar and has no parallel edges.
Hence, Euler's polyhedron formula implies $|V_{\mathrm{outer}}^{\mathrm{left}}|\le 3k-6$.

The outermost right and the innermost left/right ears can be bounded analogously, which yields $\mathcal O(k)$ outermost and innermost ears in total.

\subparagraph{Counting bounding ears.}
To count the number of bounding ears, we will use a charging argument.
Consider a matching pair~$v_1,v_2$ of bounding ears of~$c_a,c_b\in C$ with region~$r$ (which, by definition, contains at least one vertex of~$C$).
If there is no pair of matching bounding ears whose region is a proper subset of~$r$, we \emph{charge} the pair~$v_1,v_2$ to an arbitrary vertex of~$C$ in the interior of~$r$.
Otherwise, let~$r'\subset r$ be an inclusion-maximal region contained in~$r$ that is the region of a pair~$v_1',v_2'$ of matching bounding ears.
The boundary of~$r'$ contains exactly four vertices: $v_1'$ and $v_2'$, as well as two vertices of~$C$.
At least one of the latter two has to be contained in the interior of~$r$ since $r'\subset r$, say~$c\in C$.
Note that~$c$ is distinct from both~$c_a$ and~$c_b$, which belong to the boundary of~$r$.
We \emph{charge} the pair~$v_1,v_2$ to~$c$.
This charging procedure is repeated for all matching pairs of bounding ears.

The inclusion-maximality of the regions~$r'$ implies that each vertex of~$C$ gets charged at most once:
consider a vertex~$c$ and assume it is charged by a pair~$v_1,v_2$ of bounding ears with region~$r$.
Then~$c$ is located in the interior of~$r$.
Suppose there is another pair~$v_1',v_2'$ of bounding ears with region~$r'$ that is charged to~$c$.
Then~$c$ is also located in the interior of~$r'$.
By planarity and the fact that~$c$ is interior to both~$r$ and~$r'$, the boundaries of~$r$ and~$r'$ are nested, leaving two cases:
$r'\subset r$ or $r\subset r'$.
In the case $r'\subset r$, we obtain a contradiction since~$v_1,v_2$ would have been charged to a vertex~$\tilde c$ that is located on the boundary 
of~$r'$ or on the boundary of some region~$r''$ with~$r'\subset r''\subset r$, implying that~$\tilde c$ is distinct from~$c$, which is interior to~$r'$.
The case $r'\subset r$ is symmetric.
So indeed, each vertex of~$C$ is charged at most once, as claimed.
Moreover, no vertex of the outer (i.e., unbounded) face of~$\Gamma$ can be charged.
Since there are at least two vertices of $C$ on the outer face (assuming $|E|\ge 3$), we obtain that the number of pairs of matching bounding ears is at most $k-2$.
Consequently, the number of bounding ears is at most $2k-4\in \mathcal O(k)$.

\subparagraph{Wrap-up.}
We have shown that the number of crucial vertices is~$\mathcal O(k)$.
The number of vertices in~$C$ is $k$ by definition and the number of vertices in $V_{\ge 3}(C)$ is~$\mathcal O(k)$ by \cref{lem:xg2:Vgeq3InKexp3}.
This yields a total number of $n'\in \mathcal O(k)$ vertices in the core of~$\Gamma$ with respect to~$C$.

The $\mathcal O(k)$-bound on the number of visibility edges follows from \cref{lem:VC-connected}.
When removing all visibility edges from $\Lambda_{\mathrm{core}}$, we end up with a drawing without parallel edges.
Hence, Euler's polyhedron formula implies that its number of edges, and therefore the number of edges in $\Lambda_{\mathrm{core}}$, is bounded by $3n'-6\in \mathcal O(k)$.
\end{proof}

\section{Omitted material from \cref{sec:algorithm} (Algorithm)}

\StepOne*
\label{lem:step1*}

\begin{proof}
We proceed in two steps.
In the first step, we show that there exists a refined visibility extension~$\Lambda^*$ of~$\Gamma^*$ such that the set of vertices in~$\Lambda^*$ that do not belong to~$\mathcal G$ only use a restricted (small) set of levels.
In the second step, we use the results from the first step to discuss how to construct the desired family~$\mathcal F$.

\subparagraph{Existence of $\Lambda^*$.}
Towards the first step, let~$\Lambda$ be a visibility extension of~$\Gamma^*$.
We apply the following normalization:
let $a,b$ be a pair of matching bounding ears and let $c,d\in C$ be the common neighbors of~$a,b$.
Without loss of generality, assume that the level of $b$ is lower than the one of $a$ and that~$a,b$ are top ears.
If the intersection~$I$ of the region~$R$ of $a,b$ with the closed upper half-plane bounded by $L_{\gamma(b)}$ is empty or contains only leaves of~$a,b$,
then we redraw all edges of leaves of $c$ whose upper endpoint is in~$I$ such that they intersect $L_{\gamma(b)}$ directly to the right of $(c,a)$ and then closely follow $(c,a)$ towards~$c$ such that none of them has a vertex in $R\setminus I$ to its left.
Similarly, we can now redraw $(c,b)$ such that it also has no vertex in $R\setminus I$ to its left.
We apply a symmetric transformation to the leaves of~$d$ and the edge~$(d,b)$.
Finally, we insert a maximal set of visibility edges to ensure that the result is again a visibility extension of~$\Gamma^*$.
We repeat this strategy until the precondition of the redrawing procedure is unfulfilled for every pair of matching bounding ears.
A given edge might be redrawn multiple times, however, this procedure is guaranteed to terminate since each edge either moves monotonically to the left or moves monotonically to the right.
Overloading our notation, from now on we use~$\Lambda$ to denote the final redrawing.

We will now describe an iterative process that augments~$\Lambda$ (by inserting new ears) to obtain the desired refined visbility extension~$\Lambda^*$.
Each new vertex created in this process will be placed on a ``private'' new level (of width $1$).
To facilitate the description of the extended level assignment, we introduce the following terminology:
let $x$ be a vertex in a level planar drawing (possibly augmented by some horizontal edges).
Let $\ell$ be the y-coordinate of~$x$ and let~$\ell'$ be the largest y-coordinate of a vertex below $x$ (if there no such vertex, we set~$\ell'=\ell-1$).
We say that the line $L_{(\ell+\ell')/2}$ is \emph{directly below} $x$.
The line \emph{directly above} $x$ is defined symmetrically.

Suppose we have already performed some number (possibly zero) of augmentation steps and let~$\Lambda'$ denote the resulting augmentation of~$\Lambda$.
We now discuss how to perform the next augmentation step.
Let $v\in V$ be an ear of~$\mathcal G$ that is crucial in~$\Lambda'$ and let $u,w\in C$ denote its neighbors (if such a vertex~$v$ does not exist, we are done and~$\Lambda'$ is the desired representation~$\Lambda^*$).
We describe a procedure that creates up to two new ears to ensure that~$v$ is no crucial ear of the resulting augmentation of~$\Lambda'$.
Without loss of generality, let~$v$ be a top left ear (the other cases are handled symmetrically).
We iteratively consider four cases, which are not mutually exclusive (up to two can occur at the same time).

\begin{figure}[ptbh!]
    \centering
    \includegraphics[width=\textwidth]{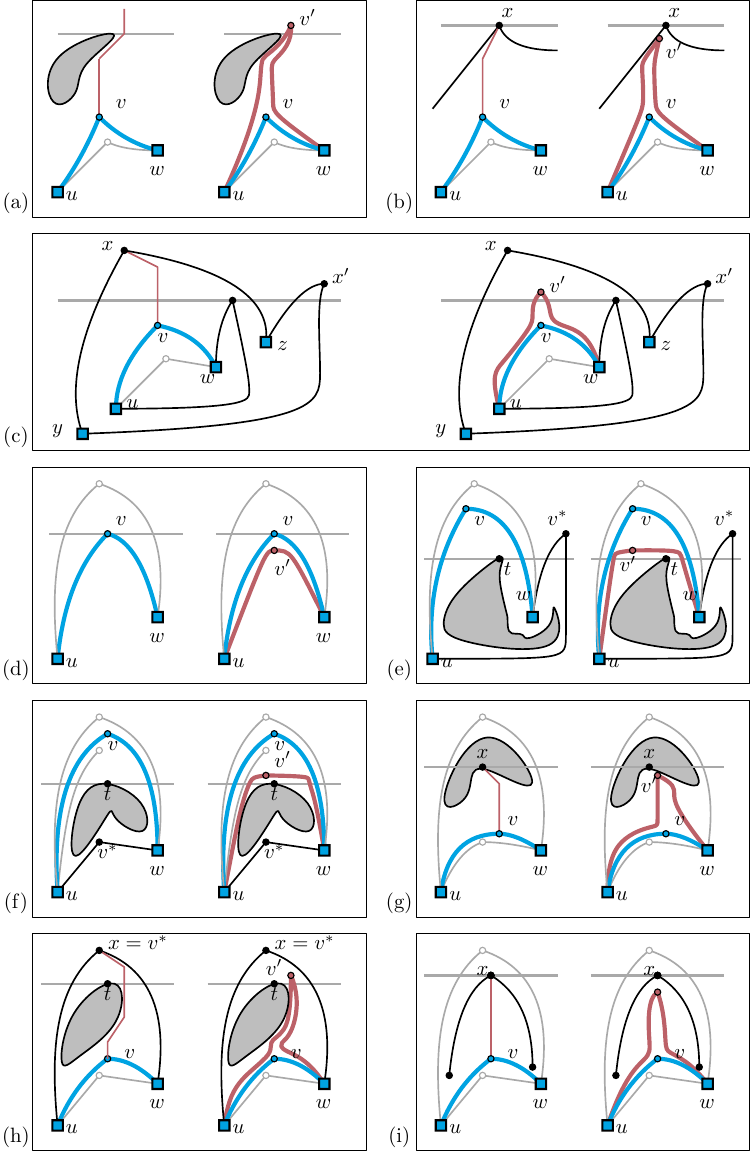}
  \caption{
  The set of operations and cases used in the construction of~$\Lambda^*$.
  Like in all figures, filled square vertices belong to a vertex cover $C$ of the depicted graph and
    filled (round or square) vertices belong to the core of the shown drawing with respect to~$C$.
  }
  \label{fig:crucials}
\end{figure}

\subparagraph{Case 1.} $v$ is an outermost ear.
Let $\phi$ be a y-monotone curve whose lower endpoint is $v$,
whose interior does not intersect~$\Lambda'$, and
whose upper endpoint is either (i) on the line directly above the top-most vertex of $\Lambda'$ 
or (ii) a vertex~$x$ of~$\Lambda'$ so that~$\phi$ is between two incoming edges of~$x$.
(Such a curve does always exist: it can be constructed by traversing upwards from $v$ in a greedy-like fashion.)

If the upper endpoint of $\phi$ is of type (i), we place a new ear $v'$ of $u,w$ at the upper endpoint of~$\phi$ and draw its incident edges by first following~$\phi$ to~$v$ and then either the edge $(u,v)$ or $(w,v)$; see \cref{fig:crucials}(a). 

If the upper endpoint of~$\phi$ is of type (ii), then it is a vertex~$x$ such that either $x \in C$, $x \in V_{\ge 3}(C)$, or $x$ is a top ear of~$\Lambda'$. We distinguish two subcases.

	If $x \in C$, $x \in V_{\ge 3}(C)$, or $x$ is the lowest top ear of its two neighbors,
	we place a new ear $v'$ of $u,w$ on the intersection of~$\phi$ with the line directly below~$x$ and draw its incident edges by first following~$\phi$ to~$v$ and then either the edge $(u,v)$ or $(w,v)$; see \cref{fig:crucials}(b).
	
	If $x$ is not the lowest top ear of its two neighbors, which we denote by $y$ and $z$, we place a new ear $v'$ of $u,w$ on the intersection of~$\phi$ with the line~$L$ directly above the top-most ear of $u,w$ and draw its incident edges by first following~$\phi$ to~$v$ and then either the edge $(u,v)$ or $(w,v)$; see \cref{fig:crucials}(c).
	Note that~$\phi$ indeed always intersects~$L$:
	all ears of $u,w$ lie inside the region~$R$ bounded by the cycle $(x,z,x',y)$ where $x'$ is the lowest top ear of $y,z$. It follows that the top-most ear of $u,w$ (and, thus, $L$) is below~$x$ since the latter is the topmost point of~$R$.
	
	Overloading our notation, from now on (when considering Cases 2--4) we use~$\Lambda'$ to denote the resulting drawing in which~$v$ no longer satisfies the assumption of Case~1.

\subparagraph{Case 2.}
	$v$ is an innermost ear.
	If $v$ is a lowest top ear of $u,w$, 
	we place a new vertex~$v'$ on the line directly below~$v$ and between the two edges incident to~$v$ and draw the two incident edges of~$v'$ by following the incident edges of~$v$; see \cref{fig:crucials}(d).
	Otherwise, there exists an innermost right ear $v^*$ of $u,w$ whose level is lower than the level of~$v$ (since~$v$ is an innermost left ear); for illustrations refer to \cref{fig:crucials}(e).
	 Among the vertices that are located in the region bounded by the cycle $(u, v^*, w,v)$ and that are not leafs of $u$ or $w$, let $t$ be a top-most vertex (if the region does not contain such a vertex, we set $t = v^*$).
	 Assume without loss of generality that the edge~$(u,v)$ is to the left of the edge~$(w,v)$.
	 We place a new vertex~$v'$ on the line directly above~$t$ and between the two edges incident to~$v$ and to the right of the rightmost edge that is incident to a leaf of~$u$ (if any) and to the left of the leftmost edge that is incident to a leaf of~$w$ (if any).
	 We connect $v'$ to $u$ and $w$ by going to the left and right until reaching edges. These edges must either be incident to $v$ or to a leaf of $u$ or $w$. In both cases, we can follow these edges to connect $v'$ to $u$ and $w$.
	 
	 Overloading our notation, from now on (when considering Cases 3--4) we use~$\Lambda'$ to denote the resulting drawing in which~$v$ no longer satisfies the assumptions of Cases~1--2.
	 
	 We conclude this case with a small observation that will later help with the construction of the desired family~$\mathcal F$:
	if $t \neq  v^*$, then $t$ can neither be a leaf of $u$ or $w$ nor an ear of $u$ and $w$.
	It follows that in the graph obtained by removing $u$ and $w$ from~$\Lambda'$ there is at least one vertex of $C$ in the connected component of~$t$.
	The number of connected components that contain a vertex of~$C$ is bounded by $\mathcal O(k)$.
	Hence, the number of levels with a top-most vertex of one of these components is also bounded by $\mathcal O(k)$.

\subparagraph{Case 3.}
There is a vertex $v^*$ such that $v,v^*$ is a pair of matching bounding ears of $u,w$ in $\Lambda'$ and $v^*$ has a lower level than $v$.
Assume without loss of generality that the edge~$(u,v)$ is to the left of the edge~$(w,v)$; for illustrations refer to \cref{fig:crucials}(f).
Among the vertices that are located in the region bounded by the cycle $(u, v^*, w,v)$ and that are not leaves of $u$ or $w$, let $t$ be a top-most vertex.
If the level of $v^*$ is above the one of~$t$, by our normalizing assumption, we know that~$v^*$ is not a vertex of~$\Lambda$.
In this case re-define $t = v^*$.
We place a new vertex~$v'$ on the line directly above~$t$ and between the two edges incident to~$v$ and to the right of the rightmost edge that is incident to a leaf of~$u$ (if any) and to the left of the leftmost edge that is incident to a leaf of~$w$ (if any).
We connect $v'$ to $u$ and $w$ by going to the left and right until reaching edges. These edges must either be incident to $v$ or to a leaf of $u$ or $w$. In both cases, we can follow these edges to connect $v'$ to $u$ and $w$.

Overloading our notation, from now on (when considering Case 4) we use~$\Lambda'$ to denote the resulting drawing in which~$v$ no longer satisfies the assumptions of Cases~1--3.

\subparagraph{Case 4.}
There is a vertex $v^*$ such that $v,v^*$ is a pair of matching bounding ears of $u,w$ in $\Lambda'$ and $v^*$ has a higher level than $v$.
Similiar to Case~1, let $\phi$ be a y-monotone curve whose lower endpoint is $v$,
whose interior does not intersect~$\Lambda'$, and
whose upper endpoint is a vertex~$x$ of~$\Lambda'$ so that~$\phi$ is between two incoming edges of~$x$.
(Such a curve does always exist: it can be constructed by traversing upwards from $v$ in a greedy-like fashion.)

If $x \in C$ or $x \in V_{\ge 3}(C)$,
	we place a new ear $v'$ of $u,w$ on the intersection of~$\phi$ with the line directly below~$x$ and draw its incident edges by first following~$\phi$ to~$v$ and then either the edge $(u,v)$ or $(w,v)$; see \cref{fig:crucials}(g).

If $x=v^*$, then among the vertices that are located in the region bounded by the cycle $(u, v^*, w,v)$ and that are not leaves of $u$ or $w$, let $t$ be a top-most vertex.
We place a new ear $v'$ of $u,w$ on the intersection of~$\phi$ with the line~$L$ directly above $t$ and draw its incident edges by first following~$\phi$ to~$v$ and then either the edge $(u,v)$ or $(w,v)$; see \cref{fig:crucials}(h).
	Note that~$\phi$ indeed always intersects~$L$ by our normalization procedure
	(this is true even if $v^*$ does not belong to $\Lambda$ by construction).

Finally, assume that $x$ is a top ear, but at least one of its adjacent vertices is neither $u$ nor $w$.
Let $y,z$ be the neighbors of $x$.
We place a new ear $v'$ of $u,w$ on the intersection of~$\phi$ with the line directly below~$x$ and draw its incident edges by first following~$\phi$ to~$v$ and then either the edge $(u,v)$ or $(w,v)$; see \cref{fig:crucials}(i).
	Note that $x$ is a lowest top ear of $y,z$ since~$\phi$ is crossing-free.

This concludes the description of Case~4, after which~$v$ no longer satisfies the assumptions of Cases~1--4.

We repeat this augmentation step until no crucial vertices of $\Lambda$ are still crucial in $\Lambda'$ yielding the desired refined visibility representation~$\Lambda^*$.
Note that,  by \cref{lem:size-of-core}, the total number of augmentation steps is~$\mathcal O(k)$.

\subparagraph{Construction of~$\mathcal F$.}
We are now ready to discuss the construction of the desired family~$\mathcal F$.
Let~$\Lambda_{\mathrm{core}}$ be the subdrawing of $\Lambda^*$ induced by the core of~$\Lambda^*$ with respect to~$C$.
The drawing $\Lambda_{\mathrm{core}}$ is uniquely described by $\mathcal G$, $C$, the set of visibility edges of $\Lambda_{\mathrm{core}}$ (and $\Lambda^*$), the set of crucial ears of $\Lambda_{\mathrm{core}}$ (and $\Lambda^*$) together with their levels and their incident edges, and a compatible edge ordering of the nonhorizontal edges of $\Lambda_{\mathrm{core}}$.
The graph~$\mathcal G$, as well as the vertex cover $C$ are given, so it suffices to enumerate all possible options for the remaining elements.

There are $m_{\mathrm{vis}}\in \mathcal O(k)$ visibility edges by \cref{lem:size-of-core}
and each of these visibility edges joins a pair of vertices in $C$.
Hence, there are at most $\binom{k}{2}^{m_{\mathrm{vis}}}\subseteq k^{\mathcal O(k)}\subseteq 2^{\mathcal O(k\log k)}$ possible options for choosing the set of visibility edges.

To enumerate the set of crucial ears along with their level assignment,
we mimic the above construction of~$\Lambda^*$:
we first enumerate all options to pick the pair of neighbors of the first new vertex along with its level,
then, for each of these options, we enumerate all options to pick the pair of neighbors of the the second vertex along with its level, etc.,
until we have obtained all options to pick $\mathcal O(k)$ vertices together with their levels.
More precisely, suppose we have already enumerated all options to pick the first~$i$ vertices together with their neighbors and levels.
For each of these options, to enumerate all options to pick the next vertex~$v'$, we go through all ways to pick its two neighbors~$u,w\in C$ and through all ways to pick the level of~$v'$.
There are $\mathcal O(k)$ pairs of vertices in~$C$ with ears by \cref{lem:size-of-core}.
To bound the number of ways to pick the level of~$v'$, recall that whenever the above procedure places a vertex~$v'$, it is assigned to a new level directly above or below a level of one of the following categories:
\begin{itemize}
\item a level with a vertex in $C$ ($\mathcal O(k)$ possibilities),
\item a level with a vertex in $V_{\ge 3}(C)$ ($\mathcal O(k)$ possibilities by \cref{lem:xg2:Vgeq3InKexp3}),
\item a level of a vertex that does not belong to~$\mathcal G$, i.e., a level used for
one of the already placed vertices ($\mathcal O(i)\subseteq \mathcal O(k)$ possibilities),
\item a level with a top-most or bottom-most vertex of $\mathcal G$ ($\mathcal O(1)$ possibilities),
\item a level with a top-most top ear, a top-most bottom ear, a bottom-most top ear, or a bottom-most bottom ear of some pair of vertices in~$C$ ($\mathcal O(k)$ possibilities by \cref{lem:size-of-core}),
\item a level with a top-most or bottom-most vertex of a connected component that contains a vertex of~$C$ in the graph obtained by removing $u$ and $w$ from the current graph ($\mathcal G$ together with the visbility edges and the already added vertices) ($\mathcal O(k)$ possibilities by the observation at the end of Case~2).
\end{itemize}
In total, for a fixed pair of neighbors~$u,w$, there are thus $\mathcal O(k)$ options to pick a level for~$v'$.
We immediately discard level assignments for which~$v'$ is no ear.
By multiplying with the number of ways to choose the neighbors, we obtain~$\mathcal O(k^2)$ options to choose~$v'$ and its level.
Multiplying the number of options for all $\mathcal O(k)$ steps together, we obtain a total number of $k^{\mathcal O(k)}\subseteq 2^{\mathcal O(k\log k)}$ ways to create the set of crucial ears along with their levels.
By multiplying with the number of ways to choose the visibility edges, we obtain a total of $2^{\mathcal O(k\log k)}$ options to choose the graph that corresponds to~$\Lambda_{\mathrm{core}}$.
For each of these options we enumerate all $k^{\mathcal O(k)}\subseteq 2^{\mathcal O(k\log k)}$ permutations of the set of non-horizontal edges and, interpreting the permutation as a compatible edge order, try to construct a level planar drawing for which this order is compatible (cf.\ \cref{sec:types}).
If we succeed,
we check whether the drawing is conform with $(\prec_i)_i$ and can be augmented with the horizontal visibility edges. If so,
we include the drawing in the set~$\mathcal F$ of reported drawings.
The size of the thereby constructed set~$\mathcal F$ is bounded by~$2^{\mathcal O(k\log k)}$ and it is guaranteed to contain $\Lambda_{\mathrm{core}}$ by construction.
\end{proof}

\StepTwo*
\label{lem:insertTransition*}

\begin{figure}[ptbh!]
    \centering
    \includegraphics[width=\textwidth, page=3]{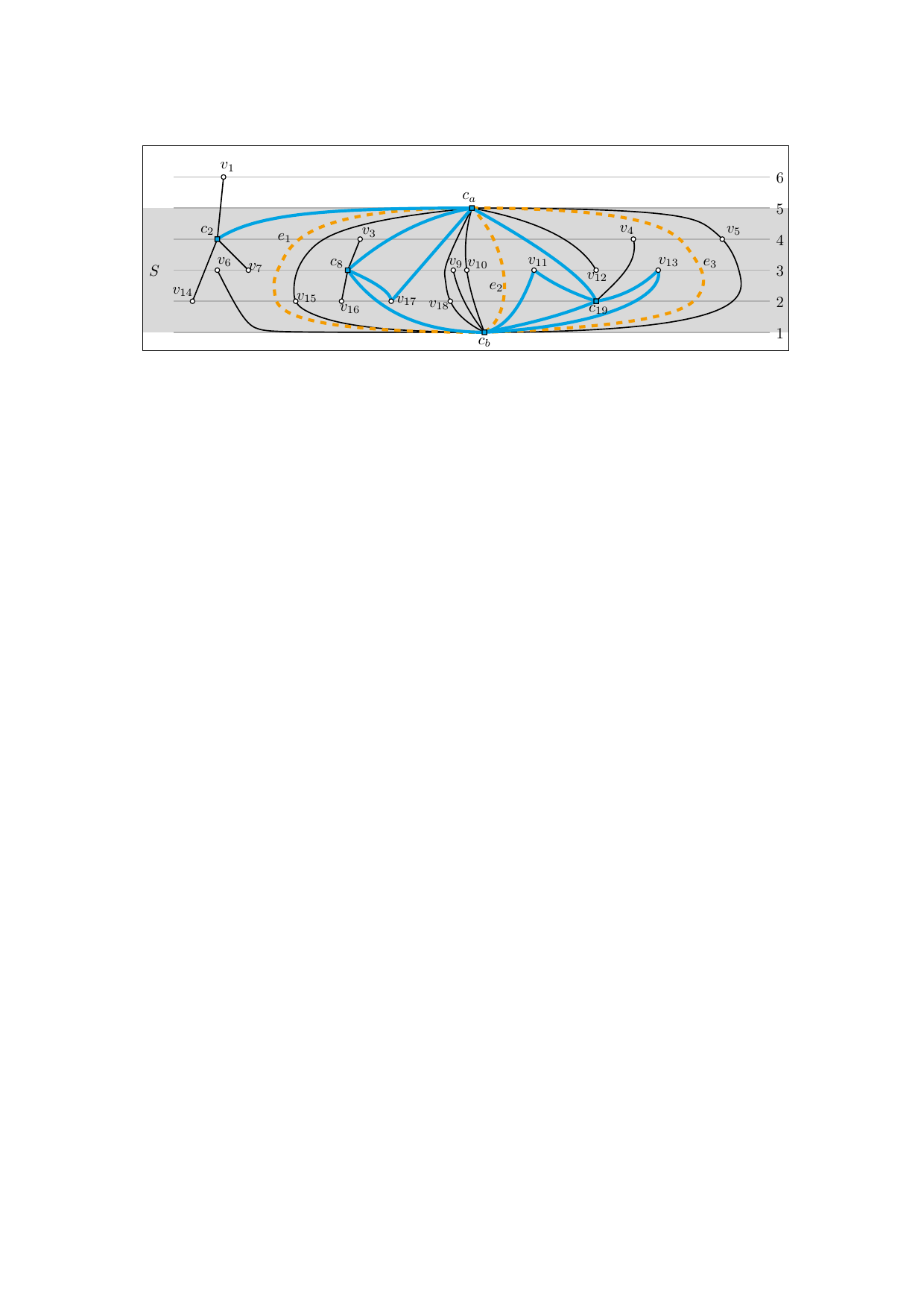}
  \caption{
  Like in all figures, filled square vertices belong to a vertex cover $C$ of the depicted graph and
    filled (round or square) vertices belong to the core of the shown drawing with respect to~$C$.
  (a)
   The drawing $\Lambda^*$; its visibility edges (all between $c_a, c_b$) are dashed.
Our goal is to insert the transition vertices of $c_a, c_b$ into the subdrawing $\Lambda_{\mathrm{core}}$ of $\Lambda^*$.
  The fixed components obtained by deletion of $c_a, c_b$ are
  $M_1$, consisting of $v_1, c_2, v_7, v_{14}$ and fixed to regions $r_1$ and $r_4$,
  $M_2$, consisting of $v_3, c_8, v_{16}, v_{17}$ and fixed to region $r_2$,
   and $M_3$, consisting of  $v_4, v_{11}, v_{13}, c_{19}$ and fixed to region $r_3$.
   The other (non-fixed) components are $v_5, v_6, v_9, v_{10}, v_{12}, v_{15}$, and $v_{18}$; note that these are all either transition vertices or leaves of $c_a, c_b$.
  (b)
  The first three boxes show the constraints of $\mathcal G$ between the vertices on levels 2, 3, and 4 (only vertices with constraints are shown).
  The remaining levels have no constraints.
  The fourth box shows the set $\mathcal{M}$ together with the relation $R$.
  Note that $M_1$ is not contained in $\mathcal{M}$ since it is fixed to $r_1$ and $r_4$.
  Further, $R$ contains a cycle.
  It is removed by the contraction step in which $\mathcal{M}'$ is obtained from $\mathcal{M}$, as depicted in the fifth box.
  (c)
  Consider the total order $v_9  \le_{\mathcal{M}'} M_2 \le_{\mathcal{M}'} v_6  \le_{\mathcal{M}'} v_{15} \le_{\mathcal{M}'} v_{18} \le_{\mathcal{M}'} v_{10} \le_{\mathcal{M}'} (M_3 + v_{12}) \le_{\mathcal{M}'} v_5$ on $\mathcal{M}'$.
  By removing all leaves from $\le_{\mathcal{M}'}$, we obtain the order $M_2 \le_{\mathcal{M}''} v_{15}  \le_{\mathcal{M}''} v_{18} \le_{\mathcal{M}''} v_{10} \le_{\mathcal{M}''} (M_3 + v_{12}) \le_{\mathcal{M}''} v_5$ on $\mathcal{M}''$.
  The figure shows the drawing $\Lambda^{\mathrm t}_{\mathrm{core}}$ obtained by inserting the transition vertices into $\Lambda_{\mathrm{core}}$ according to $\le_{\mathcal{M}''}$.
  Note that they are not inserted in the same places as in $\Lambda^*$. Further, we cannot simply insert the leaves alongside the transition vertices according to $\le_{\mathcal{M}'}$ since they can have  constraints to components in the outer regions $r_1$ and $r_4$ (as, in the example, $v_6$ to $M_1$).
  }
  \label{fig:transition}
\end{figure}

\begin{proof}
The terminology introduced in the following paragraphs is illustrated in \cref{fig:transition}(a).
Let $c_a,c_b\in C$ such that there exists at least one transition vertex adjacent to both $c_a$ and $c_b$, and let $e_1, e_2,\dots, e_j$ denote be the 
visibility edges joining~$c_a$ and~$c_b$ in the left-to-right order in 
which they appear in~$\Lambda^*$.
These edges partition the open horizontal strip~$S$ bounded by the lines $L_{\gamma(c_a)}$ and $L_{\gamma(c_b)}$ into regions $r_1,r_2,\dots r_{j+1}$ where~$e_i,1\le i\le j$ 
is on the boundary of $r_i$ and $r_{i+1}$.

Let $H_1, H_2, \dots, H_\ell$ be the connected components of the graph represented by $\Lambda^*$ after deleting $c_a, c_b$.
Note that these can contain vertices on the boundary 
or exterior to~$S$.
However, by planarity, if a component $H_i$ contains a vertex in a region~$r_{i'},2\le i'\le j$, then all of its vertices are located in~$r_{i'}$ in $\Lambda^*$.
Similarly, if a component $H_i$ contains a vertex exterior to all of the regions $r_2,r_3,\dots,r_j$ in $\Lambda^*$, then all of its vertices are located exterior 
to all of the regions $r_2,r_3,\dots,r_j$.
When~$H_i$ contains a vertex of~$C$ in~$r_{i'},1\le i'\le j+1$
in~$\Lambda^*$, then this vertex is drawn in~$r_{i'}$ in $\Lambda_\mathrm{core}$ as well; we say that~$H_i$ is \emph{fixed} to~$r_{i'}$.
Additionally, when~$H_i$ contains a vertex of~$C$ that is not located in the open strip~$S$, then we consider~$H_i$ to be \emph{fixed} to both~$r_1$ 
and~$r_{j+1}$ (representing the fact that~$H_i$ lies outside of the regions $r_2,r_3,\dots,r_j$).
In particular, every component~$H_i$ that contains at least two vertices also contains a vertex of~$C$ and is, thus, fixed to some region~$r_{i'}$.
Consequently, the only components that are not fixed are the transition vertices of $c_a$ and $c_b$ and the leaves of $c_a$ or $c_b$ (since $G$ has no isolated vertices).

The maximality of the set of visibility edges used to construct~$\Lambda^*$ implies that for each transition vertex~$v$ of~$c_a,c_b$, there exists an 
index~$i$ such that there is no fixed component interior to the region bounded by $e_i$ and the two incident edges of $v$,
in other words, $v$ is drawn in the vicinity of $e_i$ with respect to~$C$.

\subparagraph{Algorithm.}
To obtain $\Lambda^\mathrm{t}_{\mathrm {core}}$ from $\Lambda_\mathrm{core}$ we will draw every transition vertex of $c_a, c_b$ in the 
vicinity of one of the visibility edges of $c_a, c_b$.
To this end, we now define several sets and relations, for illustrations refer to \cref{fig:transition}(b).
Let $\mathcal{H}$ be the set that contains all non-fixed components (i.e. the transition vertices and leaves of $c_a$ and/or $c_b$), as well 
as the components that are fixed to one of the regions $r_2,r_3,\dots,r_j$.
For each region for which $\mathcal{H}$ contains multiple fixed components, we merge all of these components into a single \emph{fixed} subgraph 
and use $\mathcal{M}$ to denote the resulting set, which consists of one fixed subgraph for each of the regions $r_2,r_3,\dots,r_j$,
 as well as the transition vertices and leaves 
of $c_a$ and/or $c_b$.

We define a relation $R$ on $\mathcal{M}$ as follows:
$M_\alpha R M_\beta$ (with $M_\alpha\in \mathcal M$ and $M_\beta\in \mathcal M$) if and only if (1) there exists a vertex $v_\alpha$ in $M_\alpha$ 
and a vertex $v_\beta$ in $M_\beta$ such that $v_\alpha \prec_{\gamma(v_\alpha)} v_\beta$ or
(2) $M_\alpha$ and $M_\beta$ are fixed subgraphs with $M_\alpha$ fixed to $r_a$, $M_\beta$ fixed to $r_b$ and $a < b$ or
(3)  there exists a $M_\gamma\in \mathcal M$ such that $M_\alpha R M_\gamma$ and $M_\gamma R M_\beta$.

The relation~$R$ is by definition transitive, but it is not necessarily a partial order.
Nevertheless, we will now describe a strategy to derive a total ordering from~$R$ that will be useful for placing the transition vertices:
suppose $R$ contains a cycle $\delta$ of the form 
$M_1 R M_2 R \dots R M_i R M_1, i\ge 2$ of pairwise distinct elements $M_1, \dots, M_i \in \mathcal{M}$.
Assume towards a contradiction that one element of $\delta$, say $M_1$, consists of precisely one transition vertex $v$ of $c_a, c_b$.
In $\Gamma^*$, the incident edges of $v$ partition $S$ into a left and a right side.
An inductive argument shows that all components in the cycle $\delta$ are drawn entirely on the right side, including $M_1$; a contradiction.
Consequently, all elements of $\delta$ must be fixed subgraphs or leaves.
Without loss of generality we may assume that $M_1$ consists of at least two vertices ($\delta$ cannot consist exclusively of vertices lying on a 
common level $y$ since $\prec_y$ is a partial order).
This implies that $M_1$ is fixed, say, to $r_z$ ($2 \leq z \leq j$).
Similarly as before, inductive arguments show that all elements of $\delta$ have to lie to the right of $e_z$ and to the left of $e_{z+1}$, i.e., in $r_z$.
Therefore $M_1$ is the unique fixed component of $\mathcal{M}$ in $r_z$ (and also the unique fixed component in $\delta$).
We contract $\delta$ into a single subgraph inheriting all relations imposed by $R$ on $M_1, \dots, M_i$.
In this fashion, we successively eliminate all cycles and eventually obtain an ordering
on the resulting set $\mathcal{M}'$, which admits a linear extension~$\le_{\mathcal{M}'}$.
Observe that ordering of $\mathcal{M}'$ corresponding to $~\le_{\mathcal{M}'}$ is of the form 
\[S^1, F^2, S^2, F^3, S^3, \dots, F^j, S^j,\]
where $F^i$ denotes the fixed subgraph of $\mathcal{M}'$ in $r_i$ and each $S^w$ is a sequence of transition vertices and leaves of $c_a$ and/or $c_b$.
We remove all elements from $\mathcal{M}'$ that correspond to leaves and denote the resulting set by $\mathcal{M}''$.
Let $\le_{\mathcal{M}''}$ denote the total order on $\mathcal{M}''$ that corresponds to the  restriction of~$\le_{\mathcal{M}'}$ to $\mathcal{M}''$.
Observe that the ordering of $\mathcal{M}''$ corresponding to~$\le_{\mathcal{M}''}$ is of the form 
\[t^1_1, t^1_2, \dots, t^1_{i_1}, F^2, t^2_1, t^2_2, \dots, t^2_{i_2}, F^3, \dots, F^j, t^j_1, t^j_2, \dots, t^j_{i_j},\]
where $t^w_1, t^w_2, \dots, t^w_{i_w}$ is the restriction of $S^w$ to the transition vertices.
For each $2\leq w \leq j$ we draw $t^w_1, t^w_2, \dots, t^w_{i_w}$ in the vicinity of the visibility edge $e_w$ in this order (which can obviously be done without introducing crossings), see \cref{fig:transition}(c).
To obtain the desired drawing $\Lambda^\mathrm{t}_{\mathrm {core}}$, 
we repeat this procedure for every pair $c_a, c_b \in C$.
Clearly, all these steps can be performed in polynomial time.

We remark that it might seem tempting to draw the transition vertices, as well as the leaves according to~$\le_{\mathcal{M}'}$, instead of just drawing 
the transition vertices according to~$\le_{\mathcal{M}''}$. However, this will in general not produce a drawing with the desired properties as there might 
be leaves affected by constraints of components fixed to~$r_1$ and/or~$r_{j+1}$, which are not taken into account when computing 
$\le_{\mathcal{M}'}$ and $\le_{\mathcal{M}''}$, as illustrated in \cref{fig:transition}(c).

\subparagraph{Correctness.}

Suppose we have already carried out the above procedure for a set $C_2 \subset \binom{C}{2}$ of vertex cover vertex pairs $c_a, c_b$ and obtained a 
drawing $\Lambda_\mathrm{core}^\mathrm{C_2}$ that consists of $\Lambda_{\mathrm{core}}$ augmented by all transition vertices that belong to some pair in $C_2$.
Assume inductively that $\Lambda_\mathrm{core}^\mathrm{C_2}$ can be extended to a drawing, without loss of generality $\Lambda^*$, whose restriction 
to $G$ is a constrained level planar drawing of $\mathcal{G}$.
Consider a pair $c_a, c_b$ in $\binom{C}{2}\setminus C_2$ with at least one transition vertex  and let~$\Lambda_\mathrm{core}^\mathrm{C_2'}$ denote 
the drawing derived from $\Lambda_\mathrm{core}^\mathrm{C_2}$ by inserting the transition vertices of  $c_a, c_b$ according to the above procedure.
Our goal is to show that~$\Lambda_\mathrm{core}^\mathrm{C_2'}$ can be extended to a drawing whose restriction to $G$ is a constrained level planar drawing of $\mathcal{G}$.
To this end, we will construct a drawing~$\Lambda$ from~$\Lambda_\mathrm{core}^\mathrm{C_2}$ whose restriction to $G$ is a constrained level planar 
drawing of $\mathcal{G}$ and whose restriction to the graph represented by~$\Lambda_\mathrm{core}^\mathrm{C_2'}$ is~$\Lambda_\mathrm{core}^\mathrm{C_2'}$.

Let~$\Phi$ denote the simple closed curve corresponding to the boundary of the outer face of the subdrawing of~$\Lambda^*$ induced by the visibility 
edges joining~$c_a$ and~$c_b$ and the edges that are incident to some transition vertex of~$c_a,c_b$.
We use~$\Phi^-$ to denote the closed interior of~$\Phi$ and~$\Phi^+$ to denote the open exterior of~$\Phi$.
Note that all visibility edges joining~$c_a$ and~$c_b$ and all transition vertices of~$c_a,c_b$ are located in~$\Phi^-$ in~$\Lambda^*$ and, hence, 
all visibility edges joining~$c_a$ and~$c_b$ in $\Lambda_\mathrm{core}^\mathrm{C_2}$ are also located in~$\Phi^-$.

To construct~$\Lambda$ from $\Lambda_\mathrm{core}^\mathrm{C_2}$, we augment~$\Phi^+$ such that it looks exactly as in~$\Lambda^*$.
Additionally, we draw the fixed subgraphs~$F^2,F^3,\dots F^j$ exactly as in~$\Lambda^*$.
Let~$\Lambda'$ denote the resulting drawing.
To construct~$\Lambda$ from~$\Lambda'$, it remains to describe the placement of the  leaves of~$c_a$ and~$c_b$ (which we call \emph{leftover} leaves) that 
are located in~$\Phi^-$ in~$\Lambda^*$, but do not belong to any fixed subgraph $F^i$ (note that the leaves in~$\Phi^+$  in~$\Lambda^*$ and the leaves in 
the fixed subgraphs $F^i$ have already been drawn in~$\Lambda'$), as well as the placement of all the transition vertices of~$c_a,c_b$ (which are all located 
in~$\Phi^-$ in~$\Lambda^*$ and, thus, have not yet been drawn in~$\Lambda'$).
To this end, we draw those leftover leaves and transition vertices that occur in the sequence~$S^w, 2\le w\le j$ in the vicinity of the visibility edge~$e_w$ in the 
order given by~$S^w$ (which can be done without introducing crossings).

Since $\le_{\mathcal{M}'}~\subseteq~\le_{\mathcal{M}''}$, the restriction of~$\Lambda$ to the graph represented by~$\Lambda_\mathrm{core}^\mathrm{C_2'}$ 
corresponds to~$\Lambda_\mathrm{core}^\mathrm{C_2'}$, as desired.
Hence, it remains to argue that the restriction~$\Lambda_G$ of~$\Lambda$ to $G$ is a constrained level planar drawing of $\mathcal{G}$.
By construction, $\Lambda_G$ is a level planar drawing of~$\mathcal G$, so assume that some constraint $u\prec_{\gamma(u)} v$ is violated in~$\Lambda_G$, 
i.e., $u$ is drawn to the right of~$v$ on~$L_{\gamma(u)}$ (instead of to the left, as required).
By definition of~$\Lambda^*$, the restriction of~$\Lambda'$ to its non-visibility edges is a constrained level planar drawing of the corresponding subgraph of~$\mathcal G$.
Consequently, one of $u$ or $v$ is a vertex of~$\Lambda$ that does not belong to~$\Lambda'$, i.e., a leftover leave or transition vertex of $c_a$ 
and/or $c_b$, which lies in~$\Phi^-$ in~$\Lambda^*$.
Assume without loss of generality that this vertex is~$v$ (the other case is symmetric).
The definition of $\le_{\mathcal{M}'}$ implies that~$u$ lies in~$\Phi^+$ and, by assumption, to the right of~$\Phi^-$, which contains~$v$.
However, by definition of~$\Lambda'$, the vertex~$u$ also lies to the right of~$\Phi^-$ in~$\Lambda^*$.
Since the vertex~$v$ also lies in~$\Phi^-$ in~$\Lambda^*$, it follows that the constraint $u\prec_{\gamma(u)} v$ is also violated in~$\Lambda^*$; 
a contradiction to the definition of~$\Lambda^*$.
So indeed, $\Lambda$ (and, thus, $\Lambda_\mathrm{core}^\mathrm{C_2'}$) satisfies all desired properties, proving the correctness of the algorithm.
\end{proof}

\ChannelNumber*
\label{claim:channelnumber_bounded*}

\begin{claimproof}
Assume there are two channels $(v,r,R_1),(v,r,R_2)$ with $R_1\neq R_2$ and such that the unique cell in~$R_1$ that contains~$v$ and the unique cells in~$R_2$ that contains~$v$ are identical.
Then there exist two cells $r_a,r_b\in R_1\cap R_2$ with $r_a$ above $r_b$
such that it is possible to draw two y-monotone curves $c^1,c^2$ that do not cross each other and that do not intersects any edge or vertex of $\Lambda^\mathrm{t}_{\mathrm {core}}$ and where the endpoints $p^1_a,p^1_b$ and  $p^2_a,p^2_b$ of $c^1$ and $c^2$, respectively, satisfy the following properties:
\begin{itemize}
\item $p^1_a$ belongs to the lower boundary of $r_a$ and the upper boundary of some cell $r_a^1\in R_1\setminus R_2$;
\item $p^2_a$ belongs to the lower boundary of $r_a$ and the upper boundary of some cell $r_a^2\in R_2\setminus R_1$;
\item $p^1_b$ belongs to the upper boundary of $r_b$ and the lower boundary of some cell $r_b^1\in R_1\setminus R_2$;
\item $p^2_b$ belongs to the upper boundary of $r_b$ and the lower boundary of some cell $r_b^2\in R_2\setminus R_1$.
\end{itemize}
By construction, the simple closed curve formed by $c^1$ and $c^2$ and the line segments $p^1_ap^2_a$ and $p^1_bp^2_b$ intersects no vertex or edge of $\Lambda^\mathrm{t}_{\mathrm {core}}$, has $v$ in its exterior, and contains at least one vertex~$v'\neq v$ of $\Lambda^\mathrm{t}_{\mathrm {core}}$ in its interior.
However, the connectivity of $\Lambda^\mathrm{t}_{\mathrm {core}}$ (cf.\ \cref{lem:size-of-core}) implies that $\Lambda^\mathrm{t}_{\mathrm {core}}$ contains a path from $v$ to $v'$; a contradiction.

It follows that for every pair of channels $(v,r,R_1),(v,r,R_2)$ with $R_1\neq R_2$, the unique cell in~$R_1$ that contains~$v$ and the unique cell in~$R_2$ that contains~$v$ are distinct.
Consequently, we obtain $|U| \leq |R|\cdot \sum_{v \in C} \alpha(v)$, where $\alpha(v)$ denotes the number of cells incident to~$v$ for each $v\in C$.
Recall that each cell is y-monotone and bounded by up to two horizontal segments or rays and up to two y-monotone curves.
Now, let $v\in C$, let $\alpha^{|}(v)$ denote the number of cells incident to~$v$ for which~$v$ is incident to at least one y-monotone curve of the cell boundary, and let $\alpha^{-}(v)$ denote the number of the remaining cells incident to~$v$ (where $v$ is incident only to horizontal segments or rays of the cell boundary).
Observe that $\alpha^{-}(v)\le 2$.
Additionally, each cell~$r$ can contain at most $4$ vertices of $C$ that are incident to y-monotone curves along the boundary of~$r$.
Putting it all together, we obtain
\[|U| \leq |R|\cdot \sum_{v \in C} \alpha(v) = |R|\cdot (\sum_{v \in C} \alpha^{|}(v) + \sum_{v \in C} \alpha^{-}(v)) \leq |R|\cdot (4 \cdot |R| + |C| \cdot 2) \in \mathcal O(k^2),\]
where the final inclusion uses $|C|=k$ and $|R|\in \mathcal O(k)$.
\end{claimproof}

\TraversalSequence*
\label{cl:compat-traversal-exists*}

\begin{claimproof}
Let $\prec^{\mathrm{e}}$ be a compatible edge ordering for the restriction of $\Lambda^*$ to its nonhorizontal edges and let $\prec^l$ be the restriction of $\prec^{\mathrm{e}}$ to the edges that are incident to leafs.
We will now label every used channel with a set of numbers corresponding to the edges that use it in $\Lambda^*$:
let $c$ be a used channel. We label $c$ with the number $i$ if and only if  the $i$th edge in $\prec^l$ uses $c$.
Note that a channel can be assigned multiple labels.
Let $m$ be the number of leaves. For every $1 \leq i \leq m$, we define $\mathcal{U}_i$ 
to be all channels labeled with two (not necessarily distinct) $j, j'$ fulfilling $j \leq i \leq j'$; see \cref{fig:traversal} for examples.
Let $\mathcal U=(\mathcal{U}_1, \mathcal{U}_2, \dots, \mathcal{U}_m)$ denote the resulting sequence.

\subparagraph{$\mathcal U$ is a traversal sequence.}
We will now show that the sequence $\mathcal U$ is a traversal sequence.
To show Property~\ref{T:unique_y_coord},
let $i\in [m]$ and let $c=(v,r,R)\in \mathcal{U}_i$ and $c'=(v',r',R')\in\mathcal{U}_i$
		with $v\neq v'$.
		Assume towards a contradiction that the line $L_{\gamma(v)}$ intersects the interior of the union of $R'$.
		Let $r_i$ denote a cell of $R'$ that is intersected by $L_{\gamma(v)}$ and let $e$ denote the edge that corresponds to the left boundary of~$r_i$.
		Assume without loss of generality that~$r_i$ lies to the right of~$v$.
		Note that every edge incident to~$v$ precedes~$e$ in~$\prec^{\mathrm{e}}$.
		Moreover, every edge that intersects the interior of~$r_i$ succeeds~$e$.
		Hence, every edge using~$c$ precedes every edge using~$c'$ in~$\prec^l$.
		Since~$c\in \mathcal U_i$, it contains a label that is at least $i$ and, hence, it follows that all labels in~$c'$ are strictly larger than~$i$;
		a contradiction to the fact that~$c'\in \mathcal U_i$.
		Therefore, Property~\ref{T:unique_y_coord} is satisfied.

To show  Property~\ref{T:interval}, let $c$ be a (used) channel that occurs in~$\mathcal U$ and let $a$ be the lowest index with 
$c \in \mathcal{U}_a$ and let $b$ be the highest index with $c \in \mathcal{U}_b$.
Then for every $a \leq i \leq b$, $c \in \mathcal{U}_i$ by construction and  Property~\ref{T:interval} is fulfilled.
Therefore, $\mathcal U$ is indeed a traversal sequence.

\subparagraph{$\mathcal U$ is compatible with $\Lambda^*$.}
We will now show the compatability of $\mathcal U$ with $\Lambda^*$.
Let $c$ be a used channel. Then there exists at least one edge $e$ using $c$ in $\Lambda^*$. Let $i$ be the index of $e$ in $\prec^l$.
Then, by construction, $c \in \mathcal{U}_i$.
Moreover, by construction, $\mathcal U$ does not contain any channels that are not used.
Therefore,  Property~\ref{c1:usedchannelsappear} is satisfied.

To establish Property~\ref{C2a:real_before},
let $e, e'$ be two edges that are incident to leaves with $e \prec^l e'$, let $c$ be the channel used by $e$ in $\Lambda^*$ and let $c'$ be the channel used by $e'$ in $\Lambda^*$.
Let further $i, i'$ be the indices of $e,e'$, respectively, in the ordering corresponding to $\prec^l$.
Then $i < i'$ and $c \in \mathcal{U}_{i}$ and $c' \in \mathcal{U}_{i'}$ by construction and, therefore,  Property~\ref{C2a:real_before} is satisfied. 

To establish Property~\ref{C2b:exclusivenes},
let $c_1,c_2\in U_{\mathrm{used}}$ such that for every edge $e_1$ using $c_1$ and for every edge $e_2$ using~$c_2$, we have $e_1 \prec^{\mathrm{e}} e_2$.
Let $i$ be the largest index of an edge in $c_1$ with respect to $\prec^l$
and let $j$ be the smallest index of an edge in $c_2$ with respect to $\prec^l$.
Note that $i<j$ and, by construction, the interval in which $c_1$ is active ends with $\mathcal{U}_i$ and the interval in which $c_2$ is active starts with $\mathcal U_j$.
Therefore, there is no index at which both $c_1$ and $c_2$ are active and $c_1$ is active before $c_2$, which establishes Property~\ref{C2b:exclusivenes}.

To establish Property~\ref{C2c:real_after},
consider two used channels $c_1$, $c_2$ such that $c_2$ is being used by an edge $e$ that succeeds all edges that use $c_1$ in $\prec^{\mathrm{e}}$.
Let $i$ be the index of~$e$ with respect to~$\prec^l$. 
Then $c_2 \in  \mathcal{U}_i$ and $c_1 \notin \mathcal{U}_i$ by construction (and $c_1$ is active before $c_2$ by Property~\ref{C2a:real_before}), which establishes Property~\ref{C2c:real_after}.
Therefore, $\mathcal U$ is indeed compatible with~$\Lambda^*$.

\subparagraph{Bounding the length of the sequence.}
Note that if there are two consecutive sets $\mathcal U_i,\mathcal U_{i+1}$ in $\mathcal U$ with $\mathcal U_i=\mathcal U_{i+1}$, then we can remove one of them and the resulting sequence~$\mathcal U'$ is still a traversal sequence that is compatible with~$\Lambda^*$, e.g, in \cref{fig:traversal} the set $\mathcal U_8$ can be removed.
Repeat this process to eliminate all consecutive pairs of identical sets $\mathcal U_i$ and denote the resulting sequence by~$\mathcal U''$.
Since the number of channels in~$\mathcal U''$ is $\mathcal O(k^2)$ by \cref{claim:channelnumber_bounded}
and the sets containing a channel form an interval by Property~\ref{T:interval}, it follows that the length of~$\mathcal U''$ is $\mathcal O(k^2)$.

\subparagraph{Algorithm.}
So far, we have shown that there is a constant $a$ such that there is a traversal sequence compatible with $\Lambda^*$ of length at most $ak^2$.
Moreover, we know that there is a constant $b$ such that $|U|\le bk^2$ by \cref{claim:channelnumber_bounded}.
Every traversal sequence of length at most $ak^2$ is uniquely described by specifying for each channel $c\in U$ at which index in $\{1,2,\dots,ak^2\}$ its interval (cf.\ Property~\ref{T:interval}) starts and at which index it ends (when the starting index  is larger than the ending index, $c$ does not appear in the sequence).
Hence, there are at most $(ak^2)^{(2bk^2)}\le 2^{\mathcal O(k^2\log k)}$ candidate sequences.
Testing whether a candidate sequence is a traversal sequence is easily done in polynomial time and the existence of~$\mathcal U''$ implies that at least one of the sequences is a traversal sequence that is compatible with $\Lambda^*$.
\end{claimproof}

\ChannelsUniqueInEachSet*
\label{claim:channels-unique-in-each-set*}

\begin{claimproof}
Assume towards a contradiction that $\mathcal U_i$ contains
two channels $c=(w,r,R)$ and $c'=(w,r',R')$ that can be used by~$v$.
Then there are two cells $a\in R\setminus R'$ and $a'\in R'\setminus R$ such that there is a horizontal line intersecting both $a$ and $a'$.
This implies that the precondition of Property~\ref{C2b:exclusivenes} is fullfilled and
the conclusion of Property~\ref{C2b:exclusivenes} implies that $c,c'$ are not both contains in $\mathcal U_i$, which yields the desired contradiction.
\end{claimproof}

\InsertionSequenceExists*
\label{cl:insertion-sequence-exists*}

\begin{claimproof}
Suppose, inductively, we have obtained insertion sequences
 $\mathcal{Q}_0, \mathcal{Q}_1,\dots,\mathcal{Q}_q$
 for $i$, $s$ and $\mathcal U$
 such that for each of them
the interval and the dominance properties are fulfilled
and
such that $\mathcal{Q}_k$ is a prefix of $\mathcal{Q}_{k+1}$ for  all $0 \leq k \leq q-1$.
Note that if $q = 0$, then $\mathcal{Q}_0 = ()$ 
is an insertion sequence with the desired properties.
Let 
$\mathcal{Q}_q = (Q_1, Q_2, \dots, Q_q)$, $0 \leq q < |V_i \cap V_{=1}|$.
Our goal is to construct the sequence $\mathcal{Q}_{q+1}$.

\subparagraph{Existence of choosable vertices.}
We first show that there is a choosable vertex~$v$ with regard to $\mathcal{Q}_q$.
To this end, it suffices to show that there exists at least one pair $(v,j)$ such that $(Q_1, Q_2, \dots, Q_q, (v,j))$ is an insertion sequence for $i$, $s$, and $\mathcal U$
(note that this does not necessarily imply that~$v$ is choosable, but its existence implies that there is a choosable vertex, as desired).

Let $v$ be the left-most leaf on level $i$ in $\Lambda^*$ that is not contained in $\mathcal{Q}_q$ and let $c$ be the channel used by $v$ (in $\Lambda^*$). 
Let $j$ be the largest index such $c \in \mathcal{U}_j$, which exists by Property~\ref{c1:usedchannelsappear}.
Note that if~$q=0$, the definition of~$v$ implies that $(v,j)$ can be appended to the (empty) sequence~$\mathcal Q_q$ to obtain an insertion sequence for $i$, $s$, and~$\mathcal U$, as desired.
So assume otherwise, i.e., $q \ge 1$.
Let $Q_q=(v_q,j_q)$.

Our main goal is now to show that $j_q\le j$, which is a necessary condition for appending  $(v,j)$ to $\mathcal Q_q$.
Towards a contradiction, assume $j < j_q$.
Let $j^*$ be the largest index that occurs in $\mathcal Q_q$ and where $j^* \leq j$
(if no such index exists, we define $j^*=0$).
Let $q^*$ be the largest index such that~$Q_{q^*}$ contains~$j^*$
(if~$j^*=0$, we define $q^*=0$).

We will now argue that~$(v,j)$ can be appended to $\mathcal Q_{q*}$ to obtain an insertion sequence for $i$, $s$ and $\mathcal U$,
which we will then use to argue that $\mathcal Q_{q}$ does not satisfy the interval property, yielding the desired contradtion.

So let $\overline{\mathcal{Q}_{q^* +1}}$ be the sequence obtained from $\mathcal{Q}_{q^*}$ by appending $(v, j)$.
Clearly, Properties \ref{i1}, \ref{i2}, \ref{i3}, and \ref{i5} are satisfied for the sequence $\overline{\mathcal{Q}_{q^* +1}}$.
To see that  Property~\ref{i4} also holds,
assume towards a contradiction that there is a vertex $\overline{v} \in V_i \cap V_{=1}$ with $\overline{v} \prec_i v$ that is not contained in $\overline{\mathcal{Q}_{q^* +1}}$.
By the definition of~$v$, the vertex $\overline{v}$ is contained in $\mathcal{Q}_q$.
Without loss of generality, assume that $\overline{v}$ is the first vertex in $\mathcal{Q}_q$ with these properties.
Let $(\overline{v},\overline{j})\in \mathcal{Q}_q$.

The constraint $\overline{v} \prec_i v$ implies together with Property~\ref{C2a:real_before}
that there exists an index $j'\le j$
such that the
channel $\overline c$ used by $\overline{v}$ (in $\Lambda^*$) is contained in
$\mathcal U_{j'}$.
To obtain the desired contradiction (to the existence of $\overline v$), we will now show that 
$(\overline{v},j')$ is can be appended to 
$\mathcal{Q}_{q^*}$ to obtain an insertion sequence for $i$, $s$ and $\mathcal U$.
Since $\overline{v}\in \mathcal{Q}_q$ and $\overline{v}\notin \overline{\mathcal{Q}_{q^*}}$, it follows that $j^*<\overline{j}$ (by definition of $j^*$).
Further, since (again by definition of $j^*$) there is no index~$\tilde j$ that occurs in $\mathcal{Q}_q$ and where $j^*<\tilde j\le j$, it follows that $j < \overline{j}$.
Due to the dominance property for $\overline v$ in $\mathcal{Q}_q$,
$\overline j$ is at most as large as the largest index~$j_c$  for which $\overline c\in \mathcal U_{j_c}$.
To summarize, $j'\le j<\overline j\le j_c$.

By Property~\ref{T:interval} and since~$\overline c\in \mathcal U_{j'}$ and $\overline c\in \mathcal U_{j_c}$ and $j'\le j<\overline j\le j_c$, it follows that $\overline c\in \mathcal U_j$.
Then, since $\overline v$ is the first vertex in $\mathcal{Q}_q$ with $\overline{v} \in V_i \cap V_{=1}$ with $\overline{v} \prec_i v$ that is not contained in $\overline{\mathcal{Q}_{q^* +1}}$,
it follows that $(\overline v,j)$ can be appended to $\mathcal{Q}_{q^*}$
without violating Property~\ref{i4} to obtain an insertion sequence for $i$, $s$, and $\mathcal U$.
Moreover, note that $j<j_{q^*+1}$ where $Q_{q^*+1}=(v_{q^*+1}, j_{q^*+1})$ and, hence $v_{q^*+1}$ is not choosable with regard to $\mathcal{Q}_{q^*}$, which yields a contradiction to the interval property for $v_{q^*+1}$ in $\mathcal{Q}_{q}$.
Therefore, the vertex $\overline{v} \in V_i \cap V_{=1}$ with $\overline{v} \prec_i v$ that is not contained in $\overline{\mathcal{Q}_{q^* +1}}$ cannot exists an, hence, Property~\ref{i4} is satisfied by $\overline{\mathcal{Q}_{q^* +1}}$, which, finally, shows that $\overline{\mathcal{Q}_{q^* +1}}$ is an insertion sequence $i$, $s$ and $\mathcal U$.

Once again, note that $j<j_{q^*+1}$ where $Q_{q^*+1}=(v_{q^*+1}, j_{q^*+1})$ and, hence $v_{q^*+1}$ is not choosable with regard to $\mathcal{Q}_{q^*}$, which yields a contradiction to the interval property for $v_{q^*+1}$ in $\mathcal{Q}_{q}$.
Therefore $j_q\le j$.
The fact that $j_q\le j$ can be combined with the definition of~$v$ to obtain that
$(Q_1, Q_2, \dots, Q_q, (v,j))$ is indeed an insertion sequence for $i$, $s$, and $\mathcal U$, as claimed, and, thus, there exists at least one choosable vertex with regard to $\mathcal Q_q$.

\subparagraph{Interval property.}
Now let $v$ be a choosable vertex together with an smallest index $j$ such that $\mathcal{Q}_{q+1} = (Q_1, Q_2, \dots, Q_q, Q_{q+1} = (v,j))$ 
is an insertion sequence for $i$, $s$, and $\mathcal U$.
We will now show that it satisfies the interval property.
Let $v' \in V_i \cap V_{=1}$ be a vertex. We will show that the interval property is fulfilled for $v'$.
We distinguish four cases.
\subparagraph{Case~1:} $v' = v$. Then, $v'$ has been choosable in $\mathcal{Q}_q$ in an interval ending with $q$ and $v\in Q_{q+1}$, thus the interval 
property is fulfilled.
\subparagraph{Case~2:} $v'\in \mathcal Q_q$. Hence, $v'\in \mathcal Q_{q+1}$ and, thus, it is not choosable for $\mathcal{Q}_{q+1}$, and the interval property is fulfilled for $\mathcal{Q}_{q+1}$ 
since it was already fulfilled in $\mathcal{Q}_q$.
\subparagraph{Case~3:} The interval in which $v'$ was choosable in $\mathcal{Q}_q$ is empty.
Then the interval in which~$v'$ is choosable in~$\mathcal Q_{q+1}$ has length at most one and if it has length exactly one, it is choosable only with regard to~$\mathcal Q_{q+1}$, hence, the interval property is fulfilled.
\subparagraph{Case~4:} $v'$ is not contained in $\mathcal{Q}_q$ and the interval in which $v'$ was choosable in $\mathcal{Q}_q$ is nonempty and ends with $q$.
We need to show that~$v'$ is choosable with regard to $\mathcal Q_{q+1}$.

There exists a channel $c'$ in $U_j$ that can be used by $v'$ since we could have appended $v'$ with the same index as we included $v$.
Further, there exists no constraint of the form $v' \prec_i v$ or $v \prec_i v'$, otherwise one of them would have not been choosable with regard to~$\mathcal Q_q$ as appending it would have 
violated Property~\ref{i4} of the resulting insertion sequence.
Let $\overline{\mathcal{Q}_{q+2}}$ denote the sequence obtained by appending $(v', j)$ to $\mathcal{Q}_{q+1}$.
We claim that $\overline{\mathcal{Q}_{q+2}}$ is an insertion sequence.
Indeed, Property~\ref{i1} is fulfilled, since $v'$ was not in  $\mathcal{Q}_{q+1}$. 
Property~\ref{i2} is fulfilled for $\overline{\mathcal{Q}_{q+2}}$ since $v'$ and $v$ both were appended with index~$j$.
We already stated that $v'$ can use a channel~$c'$ in $U_j$, therefore, Property~\ref{i3} is fulfilled.
Property~\ref{i4} is fulfilled since~$v'$ was choosable with regard to~$\mathcal Q_{q+1}$.
Property~\ref{i5} is fulfilled for $\overline{\mathcal{Q}_{q+2}}$ since the unique usable channel $c'$ in $U_j$ that is used by $v'$ in $\overline{\mathcal{Q}_{q+2}}$ does not change.
This shows that $v'$ is still choosable with regard to $\mathcal{Q}_{q+1}$, and the interval property is fulfilled.

So in any case, the interval property for $\mathcal{Q}_{q+1}$ is fulfilled.

\subparagraph{Dominance property.}
It is left to show that the dominance property is fulfilled for $\mathcal Q_{q+1}$.
For all vertices in $\mathcal Q_{q+1}$ but~$v$, the dominance property carries over from~$\mathcal Q_q$.

Let $v_\ell' \in V_i \cup V_{=1}$ and let $e$ ($e_{\ell'}$) be the edge incident to $v$ ($v_{\ell'}$) with $e \prec^l_i e_{\ell'}$ or $v  = v_{\ell'}$. 
We have to show that $j \le j'$, where $j'$ is the maximum index such that the channel $c'$ used by $v_{\ell'}$ (in $\Lambda^*$) is in $\mathcal U_{j'}$.
We distinguish two cases:

\subparagraph{Case 1:} $v = v_{\ell'}$. 
Assume towards contradiction that $j' < j$ and let $j^*$ be the largest index such that $j^*$ is present in $\mathcal{Q}_{q}$ and $j^* \leq j'$
(if such an index does not exists, we define $j^*=0$).
Further, let $q^*$ be the largest index such that $Q_{q^*}$ contains index $j^*$ (and $q^*=0$ if $j^*=0$).

Let $\overline{\mathcal{Q}_{q^* + 1}}$ be the sequence obtained from $\mathcal{Q}_{q^*}$ by appending $(v, j')$.
We will show $\overline{\mathcal{Q}_{q^* + 1}}$ is an insertion sequence for $i$, $s$, and $\mathcal U$,
 which gives us a contradiction to the interval property of $v_{q^*+1}$, where $v_{q^*+1}\in Q_{q^*+1}$ (since it is not choosable with regard to $\mathcal{Q}_{q^*}$).
 
Clearly, Properties~\ref{i1}, \ref{i2}, \ref{i3}, and \ref{i5} are fulfilled for $\overline{\mathcal{Q}_{q^* + 1}}$.
It remains to establish Property~\ref{i4}.
Since $v$ is choosable in~$\mathcal Q_q$, all vertices $w\in V_i \cup V_{=1}$ with $w \prec_i v$ are already in $\mathcal{Q}_q$ and, thus, the dominance property is fulfilled for them in $\mathcal{Q}_q$. Let $w$ be such a vertex, let $j_w$ be the index with which $w$ is contained in $\mathcal{Q}_q$ and let $e_w$ be the edge incident to $w$. Clearly, $e_w \prec^l_i e$.
Hence, by the dominance property of $w$, it follows $j_w \leq j'$ (and also $j_w \leq j^*$ since $j_w$ is present in $\mathcal{Q}_{q}$) and therefore, $w$ is contained in $\mathcal{Q}_{q^*}$.
Thus, Property~\ref{i4} is also fulfilled and $\overline{\mathcal{Q}_{q^* +1}}$ is indeed an insertion sequence.
This is a contradiction to the interval property of $v_{q^*+1}$, where $v_{q^*+1}\in Q_{q^*+1}$ since it is not choosable with regard to $\mathcal{Q}_{q^*}$.
Thus, $j' \leq j$, as desired.

\subparagraph{Case 2:} $e \prec^l_i e_{\ell'}$.
Assume towards contradiction that $j' < j$.
Because of Case 1, we know that $j$ is at most as large as the largest index of the channel $c$ used by $v$ (in $\Lambda^*$) and, hence~$j'$ is strictly smaller than this index.
 Also because of Property~\ref{C2a:real_before}, we know that there exists an index $\tilde j$ with $\tilde j \leq j'$ and $c \in \mathcal{U}_{\tilde j}$, thus, by Property~\ref{T:interval}, we know that $c \in \mathcal{U}_{j'}$. 

Now let $j^*$ be the largest index such that $j^*$ is present in $\mathcal{Q}_{q}$ and $j^* \leq j'$
(if such an index does not exists, we define $j^*=0$).
Further, let $q^*$ be the largest index such that $Q_{q^*}$ contains index $j^*$ (and $q^*=0$ if $j^*=0$).
Let $\overline{\mathcal{Q}_{q^* + 1}}$ be the sequence obtained from $\mathcal{Q}_{q^*}$ by appending $(v, j')$.
We will show $\overline{\mathcal{Q}_{q^* + 1}}$ is an insertion sequence for $i$, $s$, and $\mathcal U$,
which gives us the desired
contradiction to the interval property of $v_{q^*+1}$, where $v_{q^*+1}\in Q_{q^*+1}$ (since it is not choosable with regard to $\mathcal{Q}_{q^*}$).

Clearly, Properties~\ref{i1}, \ref{i2}, \ref{i3}, and \ref{i5} are fulfilled for $\overline{\mathcal{Q}_{q^* + 1}}$.
It remains to establish Property~\ref{i4}.
Since $v$ is choosable in~$\mathcal Q_q$, all vertices $w\in V_i \cup V_{=1}$ with $w \prec_i v$ are already in $\mathcal{Q}_q$ and, thus, the dominance property is fulfilled for them. 
Let $w$ be such a vertex, let $j_w$ be the index with which $w$ is contained in $\mathcal{Q}_q$ and let $e_w$ be the edge incident to $w$. Clearly, $e_w \prec^l_i e \prec^l_i e_{\ell'}$.
Hence, by the dominance property of $w$, it follows $j_w \leq j'$ (and also $j_w \leq j^*$ since $j_w$ is present in $\mathcal{Q}_{q}$) and therefore, $w$ is contained in $\mathcal{Q}_{q^*}$.
Thus, Property~\ref{i4} is also fulfilled and $\overline{\mathcal{Q}_{q^* +1}}$ is indeed an insertion sequence.
This is a contradiction to the interval property of $v_{q^*+1}$, where $v_{q^*+1}\in Q_{q^*+1}$ since it is not choosable with regard to $\mathcal{Q}_{q^*}$.
Thus, $j' \leq j$, as desired.

Therefore, the dominance property is satisfied.

\subparagraph{Algorithm.}
We have shown that we can generate  $\mathcal{Q}_{|V_i \cap V_{=1}|}$ by starting with an empty sequence $\mathcal{Q}_0 = ()$ and appending choosable 
vertices until all leaves of level $i$ are contained in the insertion sequence (the insertion sequence illustrated in \cref{fig:insertion} was created by means of this algorithm). Clearly, testing whether a vertex is choosable can be done in polynomial 
time, therefore we can generate $\mathcal{Q}_{|V_i \cap V_{=1}|}$ in polynomial time as well.
\end{claimproof}

\GenerateDrawingFromEarOrientation*
\label{cl:generate-drawing-from-ear-orientation*}

\begin{claimproof}
For every level $i\in [h]$, draw the ears in the unique way described by $s^i$, which can be done without creating any crossings.
Let $\Lambda_\mathrm{core}^{\mathrm{t, e}}$ denote the resulting drawing.

We now add the leaves to $\Lambda_\mathrm{core}^{\mathrm{t, e}}$ according to the insertion sequences as follows.
We consider all indices $\ell \in [m]$ in increasing order.
To process an index $\ell$, we consider all levels $i\in [h]$ in increasing order.
To process a level $i$,
we consider the indices $q\in [q^i]$ in increasing order.
To process the index $q$, 
let $Q^i_q = (v^i_q, j^i_q)$ be the $q$-th entry of the insertion sequence $\mathcal{Q}^i$. If $j^i_q = \ell$,
let $c=(w,r,R)$ be unique channel
usable by $v$ in $\mathcal{U}_\ell$.
If $R$ consists of a single cell (namely, $r$) that contains transition vertices, we just \emph{assign} $v$ to $r$, without drawing it yet.
Assigned vertices will be drawn in a post-processing step.
Otherwise, i.e., if~$r$ is not a cell that contains transition vertices
, we draw a (nearly horizontal) line from $w$ to the right boundary of the union of $R$, then traverse along upwards or downwards  along this boundary until we reach the y-coordinate $\gamma(v)$.

Assume that we have already performed some number of steps of the algorithm while maintaining the following invariants:
\begin{enumerate}
\item The current drawing is crossing-free;
\item For each level~$i$,
 the leaves already present  in the drawing (which do not include the assigned leaves)
are in the same order as in $\mathcal{Q}^i$; and
\item Let $v$ be a leave that is already present or assigned and let $Q^i_\ell=(v,j)\in \mathcal Q^i$.
Further, let $c=(w,r,R)$ be the unique channel usable by $v$ in $\mathcal U_j$.
Then~$v$ is either drawn in~$c$ or assigned to~$r$.
\end{enumerate}
Let $(v,j)$ be the tuple that is processed in the next step and let~$c=(w,r,R)$ be the unique channel in~$\mathcal U_j$ that can be used by~$v$.
If $R$ consists of a single cell (namely, $r$) that contains transition vertices, the invariants are obviously maintained.
Otherwise, if~$r$ does not contain transition vertices, the curve along which the algorithm draws the edge incident to~$v$ is not crossed by any of the previously drawn edges by Property~\ref{T:unique_y_coord} and Property \ref{C2b:exclusivenes} of the traversal sequence and, hence, Invariant~1 is maintained.
For Invariant~2, assume towards a contradiction that there is an already inserted leaf~$w$ that is to the right of~$v$.
By construction, it is in a cell that is to the right of the one of $v$. This is a contradiction to Property~\ref{C2b:exclusivenes}.
Clearly, Invariant~3 is satisfied.

Hence, the invariants are maintained.
Let~$\Lambda$ denote the resulting drawing.
It remains to draw the unassigned leaves.
For each level $i$ and for each cell~$r$ with transition vertices on level $i$, we simply perform a topological sort with respect to~$\prec_i$ on all vertices in~$r$ (which are assigned leaves and transition vertices)
and then place them according to the obtained order (possibly rearranging the order of the transition vertices in~$r$), which is easy to do without introducing crossings.
Let~$\Lambda'$ denote the final drawing, which is cross-free.

It remains to show that in~$\Lambda'$ all constraints are respected.
By construction, this is the case for the vertices in $\Lambda_\mathrm{core}^{\mathrm{t, e}}$.

So let $v$ be a leaf on level $i$, let $c$ be the channel used by $v$ in~$\Lambda'$.
All constraints between $v$ and other leaves are respected as all leaves on level $i$ appear in the same order as in~$\mathcal Q^i$ up to permutations in transition vertex cells according to topological orderings with respect to~$\prec_i$.
Constraints between $v$ and non-leafs are satisfied by Property~\ref{i5}.

Clearly, all steps of the algorithm can be carried out in polynomial time.
\end{claimproof}

\SumStepThree*
\label{cl:summaryStep3*}

\begin{claimproof}
We begin by augmenting $\Lambda_\mathrm{core}^{\mathrm{t}}$ 
as described in the beginning of the proof of \cref{lem:insertLeftovers}, which is easy to do in polynomial time.
The decomposition into cells can also easily be constructed in polynomial time.
We then use 
\cref{cl:compat-traversal-exists}
to construct a set of $2^{\mathcal O(k^2\log k)}$ candidate traversal sequences of length at most $\mathcal O(k^2)$ that is guaranteed to contain a traversal sequence that is compatible with~$\Lambda^*$
in
$2^{\mathcal O(k^2\log k)}n^{\mathcal O(1)}$
time.

For each of these candidate sequences, we now try to construct a valid ear orientation together with an insertion sequence for each level.
To do this for a given level~$i$, we enumerate all (at most $2^{2k}$) valid ear orientations
for level~$i$
in
$2^{2k}n^{\mathcal O(1)}$ time.
For each of these valid ear orientations, 
 we then try to apply the algorithm corresponding to \cref{cl:insertion-sequence-exists} to obtain an insertion order, which takes polynomial time per ear orientation.
If we succeed in finding a valid ear orientation together with an insertion sequence for each level, we now try to apply the algorithm corresponding to \cref{cl:generate-drawing-from-ear-orientation} to construct the desired drawing, which takes polynomial time.
If we fail at any point along the way, we just continue with the next candidate traversal sequence.

Since our candidate set contains a traversal sequence that is compatible with~$\Lambda^*$, we are guaranteed to eventually obtain the desired drawing.
In particular, when applying our strategy to the compatible sequence, we are guaranteed to obtain a valid ear orientation (not necessarily the one used in~$\Lambda^*$) together with an insertion sequence for each level,
which allows us to successfully apply the algorithm corresponding to \cref{cl:generate-drawing-from-ear-orientation}.
We remark that if the algorithm corresponding to \cref{cl:generate-drawing-from-ear-orientation} successfully terminates, it is guaranteed to return an insertion sequence for the given valid ear orientation.
Moreover, we emphasize that \cref{cl:generate-drawing-from-ear-orientation} does not require that the ear orientations are the ones used in~$\Lambda^*$, meaning that our strategy is not invalidated by the fact that algorithm corresponding to \cref{cl:insertion-sequence-exists} might output an insertion sequence even if the given valid ear orientation is not the one used in~$\Lambda^*$.
The total runtime of the algorithm is $2^{\mathcal O(k^2\log k)}n^{\mathcal O(1)}$, as claimed.
\end{claimproof}

\Main*
\label{thm:main-result*}

\begin{proof}
In view of \cref{lem:isolated}, we may assume that~$\mathcal G$ has no isolated vertices.
Assume that there is a constrained level planar drawing~$\Gamma^*$ of~$\mathcal G$. 
We begin by applying \cref{lem:step1}, that is, we construct a family~$\mathcal F$ of $2^{\mathcal O(k\log k)}$ drawings 
for which there exists a refined visibility extension $\Lambda^*$ of $\Gamma^*$ such that the subdrawing~$\Lambda_{\mathrm{core}}$ of $\Lambda^*$ induced by the core of~$\Lambda^*$ with respect to~$C$ is contained in~$\mathcal{F}$.
For each drawing in~$\mathcal F$, we try to apply the algorithm corresponding to \cref{lem:insertTransition} to insert the transition vertices in polynomial time.
If successful, we then try to apply the algorithm corresponding to \cref{lem:insertLeftovers} to construct the desired drawing in $2^{\mathcal O(k^2\log k)}\cdot n^{\mathcal O(1)}$ time.
If successful, we report the drawing.
If we fail at any point along the way, we just continue with the next drawing in~$\mathcal F$.
Given that~$\mathcal F$ contains $\Lambda_\mathrm{core}$, we are guaranteed to eventually obtain the desired drawing.
In contrast, if~$\mathcal G$ does not admit a constrained level planar drawing, the described algorithm is guaranteed to fail in every iteration, in which case we correctly report that the desired drawing does not exists.
The total runtime is $2^{\mathcal O(k^2\log k)}\cdot n^{\mathcal O(1)}$.
\end{proof}

\end{document}